\newtheorem{theorem}{Theorem}
\newtheorem{lemma}[theorem]{Lemma}
\newtheorem{remark}[theorem]{Remark}
\newtheorem{proposition}{Proposition} 
\begin{document}
\title{Resource Allocation and Fairness in Wireless Powered Cooperative Cognitive Radio Networks}

\author{Sanket S. Kalamkar,~\IEEEmembership{Student~Member,~IEEE,}
Jeya Pradha Jeyaraj,\\
Adrish Banerjee,~\IEEEmembership{Senior~Member,~IEEE,} 
and~Ketan Rajawat,~\IEEEmembership{Member,~IEEE} 
\thanks{S. S. Kalamkar, A. Banerjee, and K. Rajawat are with the Department of Electrical Engineering, Indian Institute of Technology Kanpur, 208016, India (e-mail: \{kalamkar, adrish, ketan\}@iitk.ac.in).}
\thanks{J. P. Jeyaraj is with Singapore University of Technology and Design, 487372, Singapore (e-mail: jeyapradhaj@gmail.com).}
\thanks{S. S. Kalamkar is supported by the Tata Consultancy Services (TCS) research fellowship.}

}

\maketitle

\begin{abstract}
We integrate a wireless powered communication network with a cooperative cognitive radio network, where multiple secondary users (SUs) powered wirelessly by a hybrid access point (HAP) help a primary user relay the data. As a reward for the cooperation, the secondary network gains the spectrum access where SUs transmit to HAP using time division multiple access. To maximize the sum-throughput of SUs, we present a secondary sum-throughput optimal resource allocation (STORA) scheme. Under the constraint of meeting target primary rate, the STORA scheme chooses the optimal set of relaying SUs and jointly performs the time and energy allocation for SUs. Specifically, by exploiting the structure of the optimal solution, we find the order in which SUs are prioritized to relay primary data. Since the STORA scheme focuses on the sum-throughput, it becomes inconsiderate towards individual SU throughput, resulting in low fairness. To enhance fairness, we investigate three resource allocation schemes, which are (i) equal time allocation, (ii) minimum throughput maximization, and (iii) proportional time allocation. Simulation results reveal the trade-off between sum-throughput and fairness. The minimum throughput maximization scheme is the fairest one as each SU gets the same throughput, but yields the least SU sum-throughput.
 \end{abstract}
\begin{IEEEkeywords}
Cooperative cognitive radio network, energy allocation, fairness, time allocation, wireless energy harvesting
\end{IEEEkeywords}
\section{Introduction}
In the design of future wireless networks, energy harvesting has gained importance due to its ability to furnish energy to wireless devices while bestowing the freedom of mobility. Harvesting energy from ambient sources like solar, wind, and thermoelectric effects is a self-sustaining and green approach to prolong the lifetime of wireless devices~\cite{paradiso}. But, the uncertain and intermittent energy arrivals can make these sources unreliable for wireless applications with strict quality-of-service (QoS) requirements. This limitation has motivated the use of dedicated radio-frequency (RF) signals radiated by an access point in a controlled manner to power wireless devices, giving rise to wireless powered communication networks (WPCNs)~\cite{ju,suzhi,tabassum}. The RF power transfer is the backbone of WPCN, where a hybrid access point (HAP)\footnote{The term \textit{hybrid} comes from the fact that HAP can act as an energy access point (EAP) and a data access point (DAP)~\cite{zhou}. However, EAP and DAP may be located separately.} broadcasts energy to devices via RF signals. The devices use the harvested energy to transmit their data back to HAP.

The WPCN offers multiple advantages over harvesting energy from ambient RF signals that are not intended for energy transfer, such as co-channel interference. First, the energy broadcast by HAP is controllable in terms of system design (e.g., number of antennas), waveform design, and transmit power level. This allows HAP to supply stable and continuous energy to wireless devices~\cite{tabassum,zhou}. Contrary, ambient RF signals are not controllable as other RF sources may not be always transmitting. The second advantage is the flexibility in deployment of HAP~\cite{kaibin, zhang_deployment}. That is, HAP can be placed at a convenient location to reduce losses in energy transfer due to path-loss and shadowing.

Currently, using dedicated energy broadcast, it is possible to transfer energy in order of microwatts over a distance of 10 meters~\cite{powercast}, which is sufficient for several low-power wireless nodes~\cite{lowpower} such as medical implants, wireless sensors, and RFID. With the recent technological advancements like low-power electronics~\cite{lowpower}, massive MIMO (tens to hundreds of antennas) providing sharp energy beams towards users to improve energy transfer efficiency~\cite{zhao_massive}, and small-cell deployment reducing inter-nodal distances~\cite{andrew_femto}, WPCN seems to have the potential to meet the energy requirements of wireless devices with stringent QoS demands.

The WPCN has received attention in different setups~\cite{naderi,suzhi,tabassum,ju,ju4,kaibin,zhou,zhang_deployment,seung,kang,usercoop,chen3,zhong}. In~\cite{ju, ju4,kang}, the proposed harvest-and-transmit policy maximizes the sum-throughput in WPCN, where multiple users harvest energy from the energy broadcast by HAP and use the harvested energy to transmit the information to HAP in a time division multiple access (TDMA) manner. The works in~\cite{kaibin} and \cite{naderi} study WPCN in a random-access-network and cellular network, respectively. Reference~\cite{seung} examines a new type of WPCN, where wireless charging vehicles furnish wireless power to mobile nodes. The works in~\cite{usercoop} and \cite{chen3} manifest the user cooperation in WPCN, where the user nearer to HAP relays the data of the distant user to maximize the weighted sum-rate. Authors in~\cite{zhong} calculate the achievable throughput of a cooperative relay-assisted communication, where a power beacon wirelessly powers the source and the relay.

Given the dramatic increase in different wireless applications in the limited spectrum, WPCN will have to coexist and share the spectrum with other established communication networks.
Thus, WPCN, if made \textit{cognitive}~\cite{gridlock}, can efficiently share the spectrum with an existing network without degrading the latter's QoS. One such way is the cooperation between WPCN and the existing communication network. For example, in a typical cognitive radio network, the cooperation between primary users (PUs) and secondary users (SUs) improves the spectral efficiency of the network~\cite{gridlock}. In such cooperative cognitive radio network (CCRN), SUs relay PU data to gain the spectrum access for their own transmissions as a reward. This cooperation enhances PU's QoS resulting in reduced PU transmission time, which provides SUs an opportunity to transmit their own data in the remaining time. Also, consider a case when the primary direct link is in deep fade and PU fails to meet its target rate. Then, the data cooperation from SUs provides user diversity, which improves the chances of meeting the target primary rate. For these reasons, the cooperation is a win-win strategy for both PU and SUs. Now, acting as a secondary network, WPCN can assist the existing network (primary network) to relay its data and achieve spectrum access as a reward without requiring extra spectrum.

A number of works have studied CCRN in a non-energy harvesting setup~\cite{leasing,krik2,pandhari:2010,hossain,active,pandhari:2009, manna,cao1, hao1,long}. For multiuser CCRN, the work in~\cite{long} proposes the best SU selection for relaying to maximize the SU sum-throughput. But, in an energy harvesting scenario, the energy factor becomes crucial, and even the best SU to relay PU data may fail to meet the primary rate constraint due to insufficient energy. This energy constraint may often lead to the selection of multiple SUs for relaying. As to wireless energy harvesting in CCRN, in~\cite{infoenergy,wang1,quang,zhai}, an SU harvests RF energy from a PU's transmission which it uses to relay PU data and transmit its own data. In~\cite{sixing}, under the save-then-transmit protocol, authors develop an optimal cooperation strategy for an SU that harvests energy from ambient radio signals. In~\cite{jeya1}, PU supplies energy to an SU to facilitate the data cooperation between PU and SU. But, the works in~\cite{infoenergy,wang1,quang,zhai,sixing,jeya1} consider a single-SU CCRN and assume that an SU harvests energy from either PU transmissions or ambient radio signals.

A recent work~\cite{zhang:CRPN} investigates a cognitive wireless powered network, where it assumes non-causal knowledge of PU data. Contrary, we consider that SUs have to spend some time to receive and decode PU data before relaying.
Also, \cite{zhang:CRPN} considers that a conventionally powered HAP from the secondary network (not energy harvesting SUs) transmits PU data; while in our proposed protocol, since multiple \textit{energy harvesting} SUs may relay PU data, we provide the optimal SU selection strategy to decide which SUs will relay PU data, which is one of our main contributions. These fundamental differences in the system model lead to an entirely different problem formulation than the one in \cite{zhang:CRPN}. Moreover, we address the fairness issue for the wireless powered CCRN.

\begin{figure}
\centering
\includegraphics[scale=0.14]{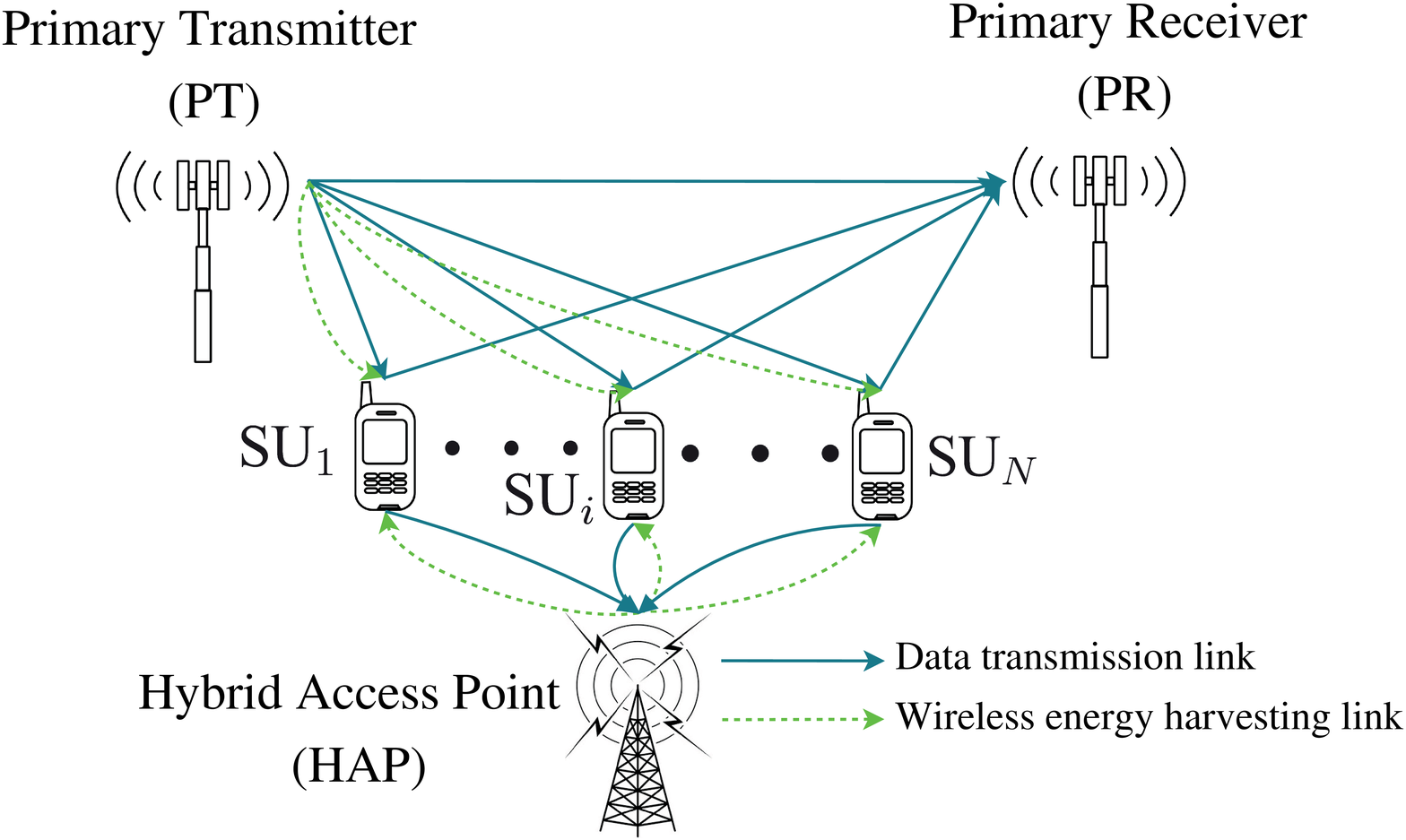}
\caption{Wireless powered cooperative cognitive radio network.}
\label{fig:sys}
\end{figure}

\subsection{Contributions}
As Fig.~\ref{fig:sys} shows, we integrate WPCN with CCRN to develop wireless powered CCRN (WP-CCRN) and seize the benefits of WPCN and CCRN together. In WP-CCRN, multiple SUs of the secondary network harvest energy from the RF broadcast by HAP, while the primary network shares its spectrum with the secondary network if the cooperation from the latter can meet its target rate. 
We now briefly discuss the contributions and main results of the paper below.

1) \textit{Cooperation protocol}: We propose a cooperation protocol to facilitate the spectrum sharing between a PU pair and multiple wireless powered SUs. The cooperation protocol will allow SUs to harvest energy, receive PU data, relay PU data, and transmit their own data. 

2) \textit{Optimal solution}: For the proposed protocol, we present an SU sum-throughput optimal resource allocation (STORA) scheme that maximizes SU sum-throughput under primary rate constraint. The STORA scheme chooses the optimal set of relaying SUs from the set of SUs that could decode PU data successfully. The original optimization problem is non-convex due to the unknown set of SUs that can successfully decode PU data and the product of optimization variables. But, for a fixed decoding set and with change of variables, the problem reduces to a convex problem. Then, we provide an iterative algorithm to obtain the global optimal solution of the original non-convex problem.

3) \textit{Resource allocation}: Based on the analysis of the optimal solution, we answer the following three questions related to the allocation of aforementioned resources: (i) Which SUs will relay PU data? (ii) At the relaying SUs, how to split the harvested energy between PU data relaying and own transmissions? (iii) How to divide the rewarded access time among SUs? Interestingly, the order in which SUs are prioritized to relay PU data is independent of the energy harvested by each SU. We further show that the optimal resource allocation from the secondary network point of view is inconsiderate of the individual SU throughput, resulting in low fairness.

4) \textit{Fairness}: To enhance fairness, we consider the following three resource allocation schemes: (i) \textit{equal time allocation} (ETA), (ii) \textit{minimum throughput maximization} (MTM), and (iii) \textit{proportional time allocation} (PTA).
Through simulation results, we compare STORA scheme with three aforementioned fairness enhancing schemes and reveal the trade-off between sum-throughput and fairness. The MTM scheme is the fairest; however, the secondary network achieves the least SU sum-throughput. On the other hand, the STORA scheme is the most unfair scheme, but achieves the highest SU sum-throughput. 

\subsection{Organization of the Paper}
In Section~\ref{sec:sys}, we describe the system model and the cooperation protocol between primary and secondary networks under different system constraints. In Section~\ref{sec:STORA}, we propose the STORA scheme and find the underlying principle behind the selection of SUs to relay PU data. Then, in Section~\ref{sec:fair_schemes}, we study three fairness enhancing resource allocation schemes that consider different fairness criteria. Section~\ref{sec:results} presents simulation results and compares the sum-throughput and fairness performance for all four resource allocation schemes against different system parameters. In Section~\ref{sec:pos_var}, we discuss some possible variations to our proposed cooperation protocol. Finally, we make concluding remarks in Section~\ref{sec:conclusion}.

\textit{Notation}: Table~\ref{tab:notation} lists the notations used in the paper. Unless otherwise stated, transmit power of a user, energy, noise power, and throughput are given in Watts, Joules, Watts/Hz, and Nats/s/Hz, respectively.
\begin{table}
\caption{List of Notations} \label{tab:notation}
\begin{center}
\renewcommand{\arraystretch}{1.05}
\begin{tabular}{c  p{7cm} }
\hline 
 {\bf Notation} & {\hspace{2cm}}{\bf Definition}
\\
\hline
\hline 

$N$ & Number of secondary users \\
SU$_i$ & $i$th secondary user \\
$h_{\mathrm{p}}$ & Channel power gain between primary transmitter and primary receiver\\
$h_{\mathrm{p}i}$ & Channel power gain between primary transmitter and $i$th secondary user \\
$h_{i\mathrm{p}}$ & Channel power gain between $i$th secondary user and primary receiver\\
$h_{\mathrm{h}i}$ & Channel power gain between hybrid access point and $i$th secondary user \\
$h_{i\mathrm{h}}$ & Channel power gain between $i$th secondary user and hybrid access point \\
$T$ & One fading block duration \\
$t_{\mathrm{e}}$ & Secondary users' energy harvesting duration \\
$t_{\mathrm{0}}$ & Secondary users' primary data reception as well as relaying duration \\
$t_\mathrm{a}$ & Total rewarded access time to secondary users \\
$t_i$ & Access time for $i$th secondary user in the rewarded period \\
$P_{\mathrm{p}}$ & Primary transmit power\\
$P_{\mathrm{e}}$ & Hybrid access point's transmit power\\
$P_{i\mathrm{h}}$ & $i$th secondary user's transmit power on its direct link to hybrid access point\\
$\boldsymbol{P_{\mathrm{sh}}}$ & A vector of length $N$ given by $[P_{1\mathrm{h}}, \dotsc, P_{N\mathrm{h}}]$\\
$P_{i\mathrm{p}}$ & $i$th secondary user's transmit power on its relaying link to primary receiver\\
$\boldsymbol{P_{\mathrm{sp}}}$ & A vector of length $N$ given by $[P_{1\mathrm{p}}, \dotsc, P_{N\mathrm{p}}]$\\
$N_0$ & Noise power\\
$E_{i}$ & Energy harvested by $i$th secondary user \\
$E_{i\mathrm{h}}$ & Energy spent by $i$th secondary user on its direct link to hybrid access point\\
$\boldsymbol{E_{\mathrm{sh}}}$ & A vector of length $N$ given by $[E_{1\mathrm{h}}, \dotsc, E_{N\mathrm{h}}]$\\
$E_{i\mathrm{p}}$ & Energy spent by $i$th secondary user on its relaying link to primary receiver\\
$\boldsymbol{E_{\mathrm{sp}}}$ & A vector of length $N$ given by $[E_{1\mathrm{p}}, \dotsc, E_{N\mathrm{p}}]$\\
$\eta$ & Energy harvesting efficiency factor ($0 < \eta \leq 1$)\\
$\bar{R}_{\mathrm{p}}$ & Target primary rate\\
$R_{\mathrm{p,c}}$ & Primary rate achieved under cooperation\\
$R_{i}$ & Rate achieved by $i$th secondary user\\
$\mathcal{S_D}$ & Decoding set \\
$|\mathcal{S}|$ & Cardinality of set $\mathcal{S}$ \\
$\mathcal{W}(\cdot)$ & Lambert W function \\
$x^{*}$ & Optimal value of a variable $x$\\
\hline 
\end{tabular}
\end{center}
\end{table}
\section{System Model and Cooperation Protocol}
\label{sec:sys}
Fig.~\ref{fig:sys} shows the system model, where a primary pair consisting of a primary transmitter (PT) and a primary receiver (PR) coexists with $N$ secondary users (SUs) that communicate with a hybrid access point (HAP). All nodes have a single antenna. The HAP has a stable and conventional energy supply, while SUs harvest energy from RF energy broadcast by HAP. At a time, an SU can receive either data or energy, but not both. Also, HAP cannot broadcast energy and receive data from SUs simultaneously. An $i$th SU is denoted by SU$_{i}$, $i \in \lbrace{ 1,\dotsc, N \rbrace } $. Let $h_{\mathrm{p}}$, $h_{\mathrm{p}i}$, $h_{i\mathrm{p}}$, $h_{\mathrm{h}i}$, and $h_{i\mathrm{h}}$ denote channel power gains of PT-PR, PT-SU$_{i}$, SU$_{i}$-PR, HAP-SU$_{i}$, and SU$_{i}$-HAP, respectively.

We focus on a scenario where PU fails to meet its target rate via direct link and thus seeks cooperation from the secondary network.\footnote{The other possible cooperation scenario is the one, where PU alone can meet its target rate, but can achieve an even better rate with secondary cooperation.} Under this cooperation, SUs use the energy harvested from HAP's RF broadcast to relay PU data to PR which augments PU direct link transmission. This improves the received signal-to-noise ratio (SNR) at PR, which in turn enhances the primary rate. If PU meets its target rate through secondary cooperation, SUs may use the remaining duration of the time-slot to transmit their own data to HAP. The primary rate constraint requires SUs to (i) allot time to receive PU data and relay it which reduces the time for energy harvesting and spectrum
access, (ii) consume energy to relay PU data which reduces the energy available for the spectrum access; both reducing SU sum-throughput. Thus, the primary rate constraint has a doubly negative effect on the SU sum-throughput.

Before we describe the cooperation protocol, we list below the assumptions considered in the protocol.

\begin{itemize}
\item[A1.] The PR can coherently combine the data received from the primary direct link and the relaying links from SUs using maximal ratio combining (MRC)~\cite{hossain,long,jeya1,manna,infoenergy,cao1,pandhari:2010,active}.
\item[A2.] An SU relays PU data to PR in a decode-and-forward manner~\cite{hao1,long,pandhari:2010,pandhari:2009}.
\item[A3.] In the rewarded period, SUs transmit to HAP using time division multiple access (TDMA) due to its simplicity~\cite{ju,ju4,kang,hossain,long}.
\item[A4.] All channels are independent and experience quasi-static fading, where the channel gains remain constant during one block of the transmission and change independently from one block to another. We assume the perfect knowledge of all channel power gains~\cite{cao1,long,infoenergy, jeya1,manna,krik2,zhang:CRPN}.\footnote{Due to cooperation between PU and SUs, we assume that the knowledge about channels corresponding to PU can be obtained from the PU itself~\cite{jovicic}. For the purpose of exposition, we skip the details of the process of obtaining the knowledge about channels corresponding to PU.}$^{,}$\footnote{The SU sum-throughput obtained with perfect channel state information (CSI) acts as an upper bound for the sum-throughput obtained in the case of imperfect CSI. The study of the effect of resource expenditure to acquire CSI is out of scope of this paper.}
\item[A5.] HAP can act as a central controller to facilitate the cooperation between primary and secondary networks. It also coordinates among users, selects SUs to relay PU data, and maintains synchronization among secondary nodes for relaying PU data and transmitting their own data.
\end{itemize}
Considering that the SUs harvest energy, receive PU data, relay PU data, and access the channel, we propose a cooperation protocol that divides a fading block of duration $T$ into four phases as depicted in Fig.~\ref{fig:protocol}. The final phase of SUs' spectrum access is subdivided into multiple slots where SUs transmit to HAP using TDMA. The following subsection discusses this four-phase cooperation protocol between PU and SUs.
\begin{figure}
\centering
\includegraphics[scale=0.42]{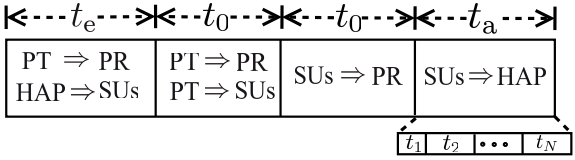}
\caption{Cooperation protocol between primary and secondary users.}
\label{fig:protocol}
\end{figure}
 
\subsection{Cooperation Protocol}
\label{sec:coop_prot}
\subsubsection{Phase One}
HAP broadcasts \textit{deterministic} RF signals with maximum power $P_{\mathrm{e}}$ Watts, allowed by the peak power constraint, from which SUs harvest energy. Simultaneously, PT broadcasts data with power $P_{\mathrm{p}}$ Watts. The SUs can use PT's transmission as a source of extra energy. Since HAP's energy signal does not carry any information, it can be made deterministic, and thus PR can be made aware of it before the start of the transmission~\cite{zhang_deter,chen_deter}. In this way, PR can remove the interference caused by concurrent HAP's broadcast. Thus, in phase one, we can write PU's achievable throughput as
\begin{equation}
 R_{\mathrm{p,1}} = \frac{t_\mathrm{e}}{T} \ln \left(1 + \frac{h_{\mathrm{p}}P_{\mathrm{p}}}{\Gamma N_0}\right),
\label{eq:phase1_rate}
\end{equation}
where $N_0$ is the additive white Gaussian noise (AWGN) power in Watts/Hz at PR. {Also, $\Gamma$ denotes the SNR gap from the AWGN channel capacity due to practical modulation and coding schemes used~\cite{goldsmith_book}.} Meanwhile, SU$_i$ harvests energy
\begin{equation}
E_{i} = \eta \big(P_{\mathrm{e}} h_{\mathrm{h}i} + P_{\mathrm{p}}h_{\mathrm{p}i}\big)t_{\mathrm{e}},
\label{eq:harv_en}
\end{equation}
where $\eta$ is the energy conversion efficiency factor with $0 < \eta \leq 1$. The term $P_{\mathrm{e}} h_{\mathrm{h}i}$ in~\eqref{eq:harv_en} corresponds to the energy harvested from HAP's broadcast, while the term $P_{\mathrm{p}}h_{\mathrm{p}i}$ corresponds to the energy harvested from PT's transmission. The SUs use the energy harvested in this phase to relay PU data and transmit their own data in phases three and four, respectively.

\subsubsection{Phases Two and Three}
{Similar to the cooperation protocol in~\cite{pandhari:2010,pandhari:2009,infoenergy}, to receive and relay PU data, the cooperation from SUs happens over two phases of equal duration, i.e., over phases two and three.} In phase two of duration $t_0$, PU broadcasts its residual data after the phase one transmission, which SUs attempt to decode. The SUs that could decode the complete PU data successfully form the decoding set $\mathcal{S_D}$. An SU$_i$ decodes the data successfully if no outage happens in phase two~\cite{laneman,bletsas}, i.e., the rate on PT-SU$_i$ link is able to support the incoming data from PT.
In this phase, the received SNR at PR due to primary direct link transmission is given by $\frac{h_{\mathrm{p}}P_{\mathrm{p}}}{\Gamma N_{0}}$.

In phase three, if the decoding set $\mathcal{S_D}$ is non-empty, then SUs chosen from it relay PU data.
Before decoding the received signal, PR employs MRC. Thus, the resultant SNR at PR becomes the sum of SNRs over two channel uses, one on the primary direct link in phase two and the other on the secondary relaying link in phase three. Consequently, the achievable primary throughput with secondary cooperation becomes
\begin{equation}
R_{\mathrm{p,2}} = \frac{t_0}{T}\ln\left(1+\frac{h_{\mathrm{p}}P_{\mathrm{p}}}{\Gamma N_{0}}  + \frac{\sum_{\mathrm{SU}_i \in \mathcal{S_D}}h_{i\mathrm{p}}P_{i\mathrm{p}} }{\Gamma N_{0}}\right),
\label{eq:rel_r}
\end{equation}
where $P_{i\mathrm{p}}$ is the power spent by SU$_{i}$ on its relaying link. The term $\frac{\sum_{\mathrm{SU}_i \in \mathcal{S_D}}h_{i\mathrm{p}}P_{i\mathrm{p}} }{\Gamma N_{0}}$ in~\eqref{eq:rel_r} corresponds to the SNR contribution from the relaying links of SUs that could decode PU data successfully. The power $P_{i\mathrm{p}} = 0$ for SU$_{i} \in \mathcal{S_D}$ that does not relay PU data as well as for SUs that could not decode PU data.

Given the transmissions on the primary direct link in phases one and two, and on the cooperative secondary relaying link in phase three, we can now write the overall primary throughput achieved under primary-secondary cooperation as
\begin{align}
R_{\mathrm{p,c}} & = R_{\mathrm{p,1}} + R_{\mathrm{p,2}}  = \frac{t_{\mathrm{e}}}{T}\ln \left(1 + \frac{h_{\mathrm{p}}P_{\mathrm{p}}}{\Gamma N_0}\right) \nonumber \\
& + \frac{t_0}{T}\ln\left(1+\frac{h_{\mathrm{p}}P_{\mathrm{p}}}{\Gamma N_{0}} + \frac{\sum_{\mathrm{SU}_i \in \mathcal{S_D}}h_{i\mathrm{p}}P_{i\mathrm{p}} }{\Gamma N_{0}}\right).
\label{eq:coopr}
\end{align}

\subsubsection{Phase Four}

The final phase is the rewarded period to SUs of duration $t_{\mathrm{a}}$, where SUs may access the channel and transmit their own information to HAP in a TDMA fashion. Let $t_i$ denote the time allocated to SU$_i$ and $t_{\mathrm{a}} = \sum_{i=1}^{N} t_i$. Then, the rate achieved by SU$_i$ is given by
\begin{equation}
R_{i} = \frac{t_{i}}{T}\ln\left(1+ \frac{h_{i\mathrm{h}}P_{i\mathrm{h}} }{\Gamma N_{0}} \right),
\label{eq:Rsi}
\end{equation}
where $P_{i\mathrm{h}}$ is the power used by SU$_i$ to transmit its data.

\subsection{System Constraints}
\label{sec:sys_con}

We now define the system constraints that are common to four resource allocation schemes considered in Sections~\ref{sec:STORA} and \ref{sec:fair_schemes}. 
\subsubsection{Primary Rate Constraint}
The PU's achievable rate $R_{\mathrm{p,c}}$ under cooperation must meet the target primary rate $\bar{R}_{\mathrm{p}}$. That is,
\begin{equation}
R_{\mathrm{p,c}} \geq \bar{R}_{\mathrm{p}}.
\label{eq:pqos}
\end{equation} 
In other words, the cooperation benefits PU if $\bar{R}_{\mathrm{p}}T$ amount of its data is delivered to PR successfully through it.
Using~\eqref{eq:coopr}, we can write the primary rate constraint as
\begin{align}
& \bigg[{t_\mathrm{e}} \ln \left(1 + \frac{h_{\mathrm{p}}P_{\mathrm{p}}}{ \Gamma N_0}\right) \nonumber \\
& + t_0\ln\left(1+\frac{h_{\mathrm{p}}P_{\mathrm{p}}}{\Gamma N_{0}} + \frac{\sum_{\mathrm{SU}_i \in \mathcal{S_D}}h_{i\mathrm{p}}P_{i\mathrm{p}} }{\Gamma N_{0}}\right)\bigg]\geq \bar{R}_{\mathrm{p}}T.
\label{eq:pqos1}
\end{align}

\subsubsection{Decoding Constraint at SUs}

In phase three, SUs that have decoded PU data in phase two can only participate in relaying.  An SU$_i$ can decode PU data successfully, i.e., SU$_i \in \mathcal{S_D}$ if
\begin{equation}
t_0 \ln \left(1 + \frac{h_{\mathrm{p}i}P_{\mathrm{p}}}{\Gamma N_{0}} \right) \geq \bar{R}_{\mathrm{p}}T - t_{\mathrm{e}} \ln \left(1 + \frac{h_{\mathrm{p}}P_{\mathrm{p}}}{\Gamma N_0}\right),
\label{eq:dec_con}
\end{equation}
where the term $t_0 \ln \left(1 + \frac{h_{\mathrm{p}i}P_{\mathrm{p}}}{\Gamma N_{0}} \right)$ is the amount of data from PT that PT-SU$_i$ link can support, $\bar{R}_{\mathrm{p}}T$ is the total amount of data PU wishes to send to PR, and $R_{\mathrm{p,1}}T = t_{\mathrm{e}} \ln \left(1 + \frac{h_{\mathrm{p}}P_{\mathrm{p}}}{\Gamma N_0}\right)$ denotes the amount of data that PT could send successfully to PR in phase one (see \eqref{eq:phase1_rate}). Thus, the right hand side term of \eqref{eq:dec_con} represents the residual data after PT's direct link transmission in phase one. Therefore, the decoding constraint~\eqref{eq:dec_con} implies that, to satisfy the primary rate constraint given in \eqref{eq:pqos1}, an SU should be able to decode the residual PU data after phase one successfully.

\subsubsection{Energy Neutrality Constraint at SUs}
The amount of energy spent by SU$_{i}$ on the relaying and access links cannot exceed the amount of its harvested energy from HAP's energy broadcast and PT's transmission in phase one. That is,
\begin{equation}
P_{i\mathrm{p}}t_{0} + P_{i\mathrm{h}}t_{i} \leq \eta \big(P_{\mathrm{e}} h_{\mathrm{h}i} + P_{\mathrm{p}}h_{\mathrm{p}i}\big)t_{\mathrm{e}}, \quad i = 1, 2, \dotsc, N,
\label{eq:enc}
\end{equation}
where the terms $P_{i\mathrm{p}}t_{0}$ and $P_{i\mathrm{h}}t_{i}$ denote the energy spent by SU$_i$ on its relaying and access links, respectively.

\subsubsection{Total Time Constraint}
The total time spent by SUs on energy harvesting in phase one of duration $t_{\mathrm{e}}$, PU data reception and decoding in phase two of duration $t_0$, relaying PU data in phase three of duration $t_0$, and accessing the channel for their own transmission in final phase of duration $ T - t_{\mathrm{e}} - 2t_0 = \sum_{i=1}^{N} t_i$, cannot exceed the slot duration $T$. That is, 
\begin{equation}
t_{\mathrm{e}}+2t_{0}+\sum_{i=1}^{N}t_{i} \leq T.
\label{eq:tnc}
\end{equation}
Given the constraints in~\eqref{eq:pqos1}-\eqref{eq:tnc}, our goal is to optimally allocate the time for each phase of the cooperation protocol, choose SUs for relaying, and divide the harvested energy at the relaying SUs between PU data relaying and own data transmission to HAP.{\footnote{ In this work, we do not include the delay due to decode and forward relaying and the time needed to reconfigure the transceiver.}} Hereafter, we assume $T = 1$ without loss of generality.

\section{Sum-Throughput Optimal Resource Allocation}
\label{sec:STORA}

In this section, our objective is to choose the optimal set of relaying SUs and find the optimal time and energy allocation for SUs that maximize the SU sum-throughput\footnote{An alternate objective could be weighted sum-throughput maximization.} given by
\begin{equation}
R_{\mathrm{s,sum}} = \sum_{i=1}^{N} R_i= \sum_{i=1}^{N}t_{i}\ln\left(1+ \frac{h_{i\mathrm{h}}P_{i\mathrm{h}} }{\Gamma N_{0}} \right).
\end{equation}
Given the constraints discussed in Section~\ref{sec:sys_con}, we formulate the SU sum-throughput maximization problem as follows:
\begin{align}
\mathop{\mathrm{maximize}}_{\mathcal{S_{D}}, \boldsymbol{P_{\mathrm{s}\mathrm{h}}},\boldsymbol{P_{\mathrm{s}\mathrm{p}}, \boldsymbol{t}}} & ~~~ R_{\mathrm{s,sum}} \nonumber \\ 
\mathrm {subject \ to} & ~~~  (\ref{eq:pqos1}), \eqref{eq:dec_con}, (\ref{eq:enc}), (\ref{eq:tnc}),\nonumber \\
& ~~~ t_{i} , t_{0}, t_{\mathrm{e}} \geq 0, ~~~ \forall  i, \nonumber \\
& ~~~ P_{i\mathrm{p}}, P_{i\mathrm{h}}  \geq 0, ~~~ \forall  i,
\label{prob}
\end{align}
where $\boldsymbol{P_{\mathrm{s}\mathrm{h}}} = [P_{1\mathrm{h}}, \dotsc, P_{N\mathrm{h}}]$, $\boldsymbol{P_{\mathrm{s}\mathrm{p}}} = [P_{1\mathrm{p}}, \dotsc, P_{N\mathrm{p}}]$, and $\boldsymbol{t} = [t_{\mathrm{e}}, t_{0}, \boldsymbol{t_{\mathrm{a}}}]$ with $\boldsymbol{t_{\mathrm{a}}} = [t_{1}, t_{2}, \dotsc, t_{N}]$. 

\noindent \underline{\textbf{Claim 1}}: The optimization problem \eqref{prob} is non-convex.
\begin{proof}
The energy neutrality constraint \eqref{eq:enc} is non-convex due to the product terms of optimization variables $P_{i\mathrm{p}}t_0$ and $P_{i\mathrm{h}}t_i$. Also, only SUs that belong to the decoding set $\mathcal{S_D}$ can relay PU data. But, the decoding constraint~\eqref{eq:dec_con} shows that the optimization variables $t_0$ and $t_{\mathrm{e}}$ decide which SUs can decode PU data successfully; thus, we do not know $\mathcal{S_D}$ beforehand. Unknown $\mathcal{S_{D}}$ and non-convexity of \eqref{eq:enc} make the problem~\eqref{prob} non-convex.
\end{proof}
\noindent \underline{\textbf{Claim 2}}: With the change of variables and for a fixed decoding set $\mathcal{S_D}$, the problem~\eqref{prob} reduces to a convex one.
\begin{proof}
The product of optimization variables---power and time---makes the constraint~\eqref{eq:enc} non-convex. Thus, we rewrite the power-time product as energy and normalize it by their respective $\eta h_{\mathrm{h}i}$.\footnote{Normalization is for notational simplicity.} Regarding this, we denote $\frac{P_{i\mathrm{h}}t_{i}}{\eta h_{\mathrm{h}i} } = E_{i\mathrm{h}}$, $ \frac{P_{i\mathrm{p}}t_{0}}{\eta h_{\mathrm{h}i} } =  E_{i\mathrm{p}}$, and $\frac{P_{\mathrm{p}} h_{\mathrm{p}i}}{h_{\mathrm{h}i}} = \theta_i$, and rewrite \eqref{eq:enc} as
\begin{equation}
E_{i\mathrm{p}} + E_{i\mathrm{h}} \leq (P_{\mathrm{e}} + \theta_i) t_{\mathrm{e}},  \hspace{3mm} \forall i,
\end{equation}
which is now an affine constraint. Still, the unknown $\mathcal{S_D}$ makes problem~\eqref{prob} non-convex. But, after fixing $\mathcal{S_D}$, we can reformulate~\eqref{prob} as an energy and time allocation problem as follows:
\begin{subequations}
\begin{align}
\mathop{\mathrm{maximize}}_{\boldsymbol{E_{\mathrm{s}\mathrm{h}}},\boldsymbol{E_{\mathrm{s}\mathrm{p}}}, \boldsymbol{t}} & ~ \sum_{i=1}^N t_{i}\ln\left(1+ \frac{\gamma_{i\mathrm{h}}E_{i\mathrm{h}} }{t_{i}} \right)  \label{eq:obj} \\ 
\mathrm {subject \ to} & ~  \!\bigg[Q_1 t_\mathrm{e}   + t_0\!\ln \! \left(\!\!1+\gamma_{\mathrm{p}} + \frac{\sum_{\mathrm{SU}_i \in \mathcal{S_D}}\gamma_{i\mathrm{p}}E_{i\mathrm{p}} }{t_{0}}\!\right)\!\!\bigg] \!\geq \!  \bar{R}_{\mathrm{p}}, \label{eq:qos}\\
& ~ \!\!t_0 \ln \!  \left(\!\!1 + \frac{h_{\mathrm{p}i}P_{\mathrm{p}}}{\Gamma N_{0}} \!\right) \geq \bar{R}_{\mathrm{p}} - Q_1 t_{\mathrm{e}} , \forall~  \mathrm{SU}_i \in \mathcal{S_{D}}, \label{eq:dec_con2}  \\
& ~ E_{i\mathrm{p}} + E_{i\mathrm{h}} \leq (P_{\mathrm{e}} + \theta_i) t_{\mathrm{e}} , \hspace{3mm} \forall i,  \label{eq:conv1}\\
& ~ t_{\mathrm{e}}+2t_{0}+\sum_{i=1}^{N}t_{i} \leq 1, \label{eq:conv2} \\
 & ~  t_{i}, t_{0}, t_{\mathrm{e}} \geq 0, ~~~ \forall i,  \label{eq:tpos}\\
 & ~ E_{i\mathrm{p}}, E_{i\mathrm{h}} \geq 0, ~~~ \forall i,  \label{eq:last}
\end{align}
\label{prob1}
\end{subequations}\vspace*{-5mm}

\noindent where $\boldsymbol{E_{\mathrm{s}\mathrm{h}}} = [E_{1\mathrm{h}}, \dotsc, E_{N\mathrm{h}}]$, $\boldsymbol{E_{\mathrm{s}\mathrm{p}}} = [E_{1\mathrm{p}}, \dotsc, E_{N\mathrm{p}}]$, $\gamma_{\mathrm{p}} = \frac{h_{\mathrm{p}}P_{\mathrm{p}}}{\Gamma N_{0}}$, $\gamma_{i\mathrm{p}} = \eta h_{\mathrm{h}i}\frac{h_{i\mathrm{p}}}{\Gamma N_{0}}$, $\gamma_{i\mathrm{h}} = \eta h_{\mathrm{h}i} \frac{h_{i\mathrm{h}}}{\Gamma N_{0}}$, and
\begin{equation}
Q_1 = \ln \left(1 + \frac{h_{\mathrm{p}}P_{\mathrm{p}}}{\Gamma N_0}\right).
\label{eq:Q1}
\end{equation}
As function $f(x) = \ln(1+x)$ is concave, its perspective function $f(x,y) = y\ln(1+\frac{x}{y})$ is also concave in terms of both $x$ and $y$, for all $x, y > 0$. Since the sum of concave functions is concave, we can see that the objective is concave in nature. Also, the constraint~\eqref{eq:qos} is concave and the constraints~\eqref{eq:dec_con2}-\eqref{eq:conv2} are affine. Thus, the problem~\eqref{prob1} is convex.
\end{proof}
Although the joint optimization problem in \eqref{prob} is non-convex, the problem structure is utilized to obtain a globally optimal solution. The following subsection describes the algorithm to achieve the globally optimal solution of \eqref{prob}.

\subsection{Global Optimal Solution of \eqref{prob}}
\label{sec:opt_dec_set}
Let us first sort $\boldsymbol{h_{\mathrm{ps}}} = [h_{\mathrm{p}1},\dotsc,h_{\mathrm{p}N}]$ in the decreasing order. If SU$_i$ with the channel power gain $h_{\mathrm{p}i}$ satisfies the decoding constraint~\eqref{eq:dec_con2}, SU$_j$ with $h_{\mathrm{p}j} > h_{\mathrm{p}i}$ also satisfies~\eqref{eq:dec_con2}. Let $ \boldsymbol{h}^{j}_{\boldsymbol{\mathrm{ps}}}$ denote the $j$th element of the sorted $\boldsymbol{h_{\mathrm{ps}}} $. Now, we first assume that all SUs can successfully decode PU data, i.e., $|\mathcal{S_D}| =  N$, where $|\mathcal{S_D}|$ is the cardinality of set $\mathcal{S_D}$. Then, we can rewrite the decoding constraint in~\eqref{eq:dec_con2} as
\begin{equation}
t_0 \ln \left(1 + \frac{\boldsymbol{h}^{|\mathcal{S_D}|}_{\boldsymbol{\mathrm{ps}}}P_{\mathrm{p}}}{\Gamma N_{0}} \right) \geq \bar{R}_{\mathrm{p}} - Q_1 t_{\mathrm{e}},
\label{eq:dec_con3}
\end{equation}
where $Q_1$ is given by \eqref{eq:Q1}. We then solve the optimization problem \eqref{prob1} and obtain the corresponding objective value $R_{\mathrm{s,sum}}^{|\mathcal{S_D}|}$ for $|\mathcal{S_D}| = N$ using the iterative algorithm discussed in Section~\ref{sec:opt1}. If the problem \eqref{prob1} is infeasible, set $R_{\mathrm{s,sum}}^{|\mathcal{S_D}|} = 0$. Then, we exclude $|\mathcal{S_D}|$-th element of the sorted $\boldsymbol{{h}_{\mathrm{ps}}}$ and solve the optimization problem \eqref{prob1} taking the decoding constraint \eqref{eq:dec_con3} into account. We obtain the corresponding objective value $R_{\mathrm{s,sum}}^{|\mathcal{S_D}|}$ with $|\mathcal{S_D}| = N-1$. This process is repeated until $|\mathcal{S_D}| = 1$. Finally, the optimal SU sum-throughput is given by
\begin{equation}
R_{\mathrm{s,sum}}^{*} = \mathrm{max}\left(R_{\mathrm{s,sum}}^{1}, R_{\mathrm{s,sum}}^{2}, \dotsc, R_{\mathrm{s,sum}}^{N}\right),
\end{equation}
and the corresponding $\mathcal{S_D}$ is the optimal decoding set. {The sorting of $\boldsymbol{h_{\mathrm{ps}}}$ has reduced the search for the optimal decoding set from $2^{N}-1$ (excluding null set) possibilities to $N$. This is because, we can ignore those choices of the decoding sets of SUs that contain an SU$_j$ with $h_{\mathrm{p}j} > h_{\mathrm{p}i}$ that could not decode PU data, but SU$_i$ with $h_{\mathrm{p}i}$ could decode PU data.} Algorithm~\ref{main_alg} summarizes the process of finding the global optimal solution of \eqref{prob}. 
\begin{algorithm}
\linespread{0.85}
\caption{Finding the global optimal solution of \eqref{prob}}
\label{main_alg}
\begin{itemize}
\item[1.] Set $k = N$. Fix $|\mathcal{S_D}| = k$.
\item[2.] Replace \eqref{eq:dec_con2} with \eqref{eq:dec_con3}.
\item[3.] Solve \eqref{prob1} using Algorithm~\ref{alg1} and compute the value of its objective $R_{\mathrm{s,sum}}^{|\mathcal{S_D}|}$. If \eqref{prob1} is infeasible, set $R_{\mathrm{s,sum}}^{|\mathcal{S_D}|}= 0$.
\item[4.] Set $k = k-1$. If $k < 1$, stop; else, go to step 2.
\item[5.] $R_{\mathrm{s,sum}}^{*} = \mathrm{max}\left(R_{\mathrm{s,sum}}^{1}, R_{\mathrm{s,sum}}^{2}, \dotsc, R_{\mathrm{s,sum}}^{N}\right)$.
\end{itemize}
\end{algorithm}

{In terms of complexity, the bottleneck of Algorithm~\ref{main_alg} is $N$ times application of Algorithm~\ref{alg1} having complexity $\mathcal{O}(N)$ (See complexity calculation for Algorithm 2 in Section~\ref{sec:opt1}.). Thus, Algorithm~\ref{main_alg} has complexity $\mathcal{O}(N^2)$.}

The following subsection presents an algorithm to find the optimal solution of~\eqref{prob1} for a given $\mathcal{S_D}$. The constraint~\eqref{eq:qos} shows that the energy spent on the relaying link of SU$_i$ depends on the energy spent by other users SU$_j$, $j \neq i$, on their relaying links. Also, the optimization variables $(\boldsymbol{E_{\mathrm{s}\mathrm{h}}}, \boldsymbol{E_{\mathrm{sp}}})$ depend on the optimization variables $(\boldsymbol{t_{\mathrm{a}}}, t_{0})$ as seen from \eqref{eq:obj} and \eqref{eq:qos}. Thus, the problem \eqref{prob1} is a convex problem with coupled variables and coupled constraints. This emphasizes the need for an iterative algorithm to compute the optimal solution efficiently. Though there are many standard algorithms to find the optimal solution of a convex optimization problem, given the coupled nature of problem~\eqref{prob1} we use the following hierarchical decomposition based on block coordinate descent method~\cite[Chapter 1]{bertesekas}. It allows us to break the problem~\eqref{prob1} into two convex subproblems of decoupled nature.

\subsection{Optimal Solution of \eqref{prob1} for a given $\mathcal{S_D}$}
\label{sec:opt1}
For notational simplicity, let us denote the rate of SU$_i$ as $R_{i}(E_{i\mathrm{h}},t_{i})$, i.e., $R_{i}$ as a function of $E_{i\mathrm{h}}$ and  $t_{i}$, and the rate of primary user as $R_{\mathrm{p,c}}(\boldsymbol{E_{\mathrm{sp}}} , t_{\mathrm{e}}, t_{0})$, i.e., $R_{\mathrm{p,c}}$ as a function of $\boldsymbol{E_{\mathrm{sp}}}$, $t_{\mathrm{e}}$, and $t_{0}$.

First, by fixing $t_{\mathrm{e}}$, we decompose the problem~\eqref{prob1} into two subproblems: $\mathtt{SP1}-$for solving energy allocation $(\boldsymbol{E_{\mathrm{s}\mathrm{h}}},\boldsymbol{E_{\mathrm{sp}}})$ for a fixed $(\boldsymbol{t_{\mathrm{a}}}, t_{0})$ and $\mathtt{SP2}-$for solving time allocation $(\boldsymbol{t_{\mathrm{a}}},t_{0})$ for a fixed $(\boldsymbol{E_{\mathrm{s}\mathrm{h}}},\boldsymbol{E_{\mathrm{sp}}})$. The subproblems are given as
\begin{align}
\mathtt{SP1}: \mathop{\mathrm{maximize}}_{\boldsymbol{E_{\mathrm{s}\mathrm{h}}},\boldsymbol{E_{\mathrm{s}\mathrm{p}}}} &~~~ \sum_{i=1}^N t_{i}\ln\left(1+ \frac{\gamma_{i\mathrm{h}}E_{i\mathrm{h}} }{t_{i}} \right)  \nonumber \\
\mathrm{subject~to}&~~~ \eqref{eq:qos}, ~\eqref{eq:conv1}, ~\eqref{eq:last},
\label{eq:sp1}
\end{align}
\begin{align}
\mathtt{SP2}: \mathop{\mathrm{maximize}}_{t_{0}, \boldsymbol{t_{\mathrm{a}}}} &~~~ \sum_{i=1}^N t_{i}\ln\left(1+ \frac{\gamma_{i\mathrm{h}}E_{i\mathrm{h}} }{t_{i}} \right)  \nonumber \\
\mathrm{subject~to}&~~~ ~\eqref{eq:qos}, ~\eqref{eq:dec_con2},~\eqref{eq:conv2},~\eqref{eq:tpos}.
\label{eq:sp2}
\end{align}
The subproblems are convex and they satisfy Slater's constraint qualification~\cite[Chapter 3]{bertesekas}. Hence, the Karush-Kuhn-Tucker (KKT) conditions are necessary and sufficient to find the optimal solution. 

The subproblems $\mathtt{SP1}$ and $\mathtt{SP2}$ form Level 1 of the iterative algorithm, and at Level 2, we solve the master problem to compute $t_{\mathrm{e}}$, which is also convex. The levels 1 and 2 are executed recursively until the optimization variables $(\boldsymbol{E_{{\mathrm{s}\mathrm{h}}}}, \boldsymbol{E_{{\mathrm{s}\mathrm{p}}}},  {t_{0}}, \boldsymbol{t_{\mathrm{a}}})$ converge to a predetermined accuracy~\cite{yu}, resulting in the optimal solution.

\textbf{Level 1}: The Lagrangian $\mathcal{L}_{1}$ of $\mathtt{SP1}$ is
\begin{align}
\mathcal{L}_{1} & =   \sum_{i=1}^N R_{i}(E_{i\mathrm{h}},t_{i} ) - \lambda \left(\bar{R}_{\mathrm{p}} - R_{\mathrm{p,c}}\left(\boldsymbol{E_{\mathrm{s}\mathrm{p}}}, t_{\mathrm{e}},  t_{0}\right) \right) 
\nonumber \\ 
& - \sum_{i = 1}^{N} \mu_{i}(E_{i\mathrm{p}} + E_{i\mathrm{h}} - (P_{\mathrm{e}} + \theta_i)t_{\mathrm{e}}), 
\label{eq:lag1}
\end{align}
where $\lambda$ and $\boldsymbol{\mu} = [\mu_1, \dotsc, \mu_N]$ denote the dual variables associated with the constraints \eqref{eq:qos} and \eqref{eq:conv1}, respectively. The dual problem of $\mathtt{SP1}$ is given by
\begin{equation}
\underset{\boldsymbol{\mu}}{\mathop{\mathrm{min}}} \hspace{1mm} \underset{\boldsymbol{E_{{\mathrm{s}\mathrm{h}}}}, \boldsymbol{E_{{\mathrm{s}\mathrm{p}}}}}{\mathop{\mathrm{max}}} \hspace{1mm} \mathcal{L}_{1},
\label{eq:dual_prob1}
\end{equation}
which is solved as follows. First, using KKT stationarity conditions,  we solve the primal variable $E_{i\mathrm{p}}$ along with  $E_{i\mathrm{h}}$ keeping $(E_{j\mathrm{p}}, t_{0}, \boldsymbol{t_{\mathrm{a}}})$ and $(\lambda, \boldsymbol{\mu})$ fixed. Second, we compute the dual variable $\mu_{i}$ using bisection method.\footnote{Since the constraint \eqref{eq:qos} is present in both $\mathtt{SP1}$ and $\mathtt{SP2}$, the associated dual variable $\lambda$ is solved at Level 2.}
We then repeat this two step process until $(E_{i\mathrm{h}}, E_{i\mathrm{p}})$ converge. In this manner, 
the dual problem \eqref{eq:dual_prob1} is solved for every SU$_i$ ($i \in \lbrace
1, \dotsc, N\rbrace$). Now, by associating dual variables $\kappa$ and $\nu$ with constraints~\eqref{eq:dec_con2} and \eqref{eq:conv2}, respectively, we write the Lagrangian $\mathcal{L}_{2}$ of time allocation problem $\mathtt{SP2}$ as
\begin{align}
\hspace{-2mm} \mathcal{L}_{2} & =   \sum_{i=1}^N R_{i}(E_{i\mathrm{h}},t_{i} ) - \lambda \left(\bar{R}_{\mathrm{p}} - R_{\mathrm{p,c}}\left(\boldsymbol{E_{\mathrm{s}\mathrm{p}}}, t_{\mathrm{e}},  t_{0}\right) \right) 
\nonumber \\ 
\hspace{-3mm}  & - \kappa \left( \bar{R}_{\mathrm{p}} - Q_1 t_{\mathrm{e}} - Q_2 t_{0} \right)  -  \nu\left(t_{\mathrm{e}}+2t_{0}+\sum_{i=1}^{N}t_{i}-1\right),
\label{eq:lag2}
\end{align}
where $Q_1$ is given by~\eqref{eq:Q1} and $Q_{2} = \ln \left(1 + \frac{\boldsymbol{h}^{|\mathcal{S_D}|}_{\boldsymbol{\mathrm{ps}}}P_{\mathrm{p}}}{\Gamma N_{0}} \right)$. The corresponding dual problem is given by
\begin{equation}
\underset{\lambda, \kappa, \nu}{\mathop{\mathrm{min}}} \hspace{1mm} \underset{t_{0}, \boldsymbol{t_{\mathrm{a}}}}{\mathop{\mathrm{max}}} \hspace{1mm} \mathcal{L}_{2}.
\label{eq:dual_prob2}
\end{equation}
For fixed $(\boldsymbol{E_{{\mathrm{s}\mathrm{h}}}}, \boldsymbol{E_{{\mathrm{s}\mathrm{p}}}})$ and $(\lambda, \kappa, \nu)$, the primal variables $(t_{0}, \boldsymbol{t_{\mathrm{a}}})$ can be found using KKT stationarity conditions. The dual variables that minimize $\underset{t_{0}, \boldsymbol{t_{\mathrm{a}}}}{\mathop{\mathrm{max}}} \hspace{1mm} \mathcal{L}_{2}$ are found using their gradients given by
\begin{align}
g_{\lambda} & =   R_{\mathrm{p,c}}(\boldsymbol{E}_{\boldsymbol{\mathrm{s}\mathrm{p}}}, t_{\mathrm{e}}, t_{0}) - \bar{R}_{\mathrm{p}}, \label{eq:grad_lam} \\
g_{\kappa} & =   Q_1 t_{\mathrm{e}} + Q_2 t_{0} - \bar{R}_{\mathrm{p}}, \label{eq:grad_kappa} \\
g_{\nu}  & =   1 - t_{\mathrm{e}} - 2t_{0}  - \sum_{i=1}^{N}t_{i}. \label{eq:grad_nu} 
\end{align}

\textbf{Level 2}: Let $l$ denote the iteration index of the algorithm. Having found $(\boldsymbol{E}^{l}_{\boldsymbol{{\mathrm{s}\mathrm{h}}}}, \boldsymbol{E}^{l}_{\boldsymbol{{\mathrm{s}\mathrm{p}}}}, t^{l}_{0}, \boldsymbol{t^{l}_{\mathrm{a}}})$ and $(\lambda^{l}, \kappa^{l}, \boldsymbol{\mu}^{l}, \nu^{l})$ for given $t^{l-1}_{\mathrm{e}}$ in Level $1$, we solve the master primal problem for $t^{l}_{\mathrm{e}}$ given as
\begin{equation}
t^{l}_{\mathrm{e}} = \mathop{\mathrm{arg}} \underset{t_{\mathrm{e}} }{\mathop{\mathrm{max}}} \hspace{1mm}{ \mathcal{G}(t_{\mathrm{e}})},
\label{eq:master}
\end{equation}
where $\mathcal{G}(t_{\mathrm{e}}) = \mathcal{L}_{2}|_{(t^{l}_{0}, \boldsymbol{t^{l}_{\mathrm{a}}},\lambda^{l}, \kappa^{l}, \nu^{l} )} + \sum_{i = 1}^{N} \mu^{l}_{i}(E^{l}_{i\mathrm{p}} + E^{l}_{i\mathrm{h}} - (P_{\mathrm{e}} + \theta_i)t_{\mathrm{e}})$, which is the Lagrangian of the problem \eqref{prob1} for given $t_{\mathrm{e}}$ and $\mathcal{S_D}$. Since $\mathcal{G}(t_{\mathrm{e}})$ is non-differentiable at $t^{l}_{\mathrm{e}}$, we compute $t^{l}_{\mathrm{e}}$ using its subgradient given by
\begin{equation}
g_{t_{e}} =  (\lambda^l + \kappa^l) Q_1 + \sum_{i=1}^{N} \mu^{l}_{i} (P_{\mathrm{e}} + \theta_i) - \nu^{l}.
\label{eq:grad_te}
\end{equation}
The analytical expressions of optimization variables obtained through KKT conditions are presented in Proposition \ref{lemma1} as follows.
\begin{proposition}
\label{lemma1}
The optimal solution for the optimization problem \eqref{prob1} is given by
\begin{align}
& E^{*}_{i\mathrm{h}} =  t^{*}_{i}\left[ \frac{1}{\mu^{*}_{i}} - \frac{1}{\gamma_{i\mathrm{h}}}\right]^{+},  \label{eq:Eb} \\
& E^{*}_{i\mathrm{p}}  = t^{*}_{0}\left[\frac{\lambda^{*}}{\mu^{*}_{i}} - \frac{1+\gamma_{\mathrm{p}}}{\gamma_{i\mathrm{p}}} - \frac{\sum_{{\mathrm{SU}_j \in \mathcal{S_D}}, j\neq i}\gamma_{j\mathrm{p}}E^{*}_{j\mathrm{p}}}{\gamma_{i\mathrm{p}}}\right]^{+}, \label{eq:Ep} \\
& t^{*}_{\mathrm{e}} = 1-2t^{*}_{0}-\sum_{i=1}^{N}t^{*}_{i}, \label{eq:te} \\
& t^{*}_{0} = \frac{\sum_{\mathrm{SU}_i \in \mathcal{S_D}}{\gamma_{i\mathrm{p}}}E^{*}_{i\mathrm{p}}}{\exp\left({ \mathcal{W}\left( \frac{-1}{\frac{2\nu^{*} - \kappa^* Q_2}{\lambda^{*}} + 1}  \right) + \frac{2\nu^{*}- \kappa^* Q_2}{\lambda^{*}} +1 }\right) -1 - \gamma_{\mathrm{p}}}, \label{eq:t0} \\ 
& t^{*}_{i} = \frac{{\gamma_{i\mathrm{h}}}E^{*}_{i\mathrm{h}}}{\exp\left( \mathcal{W}\left( \frac{-1}{\nu^{*} + 1}  \right) + \nu^{*} +1\right)-1}, \label{eq:ta}
\end{align}
where $[x]^{+}  \buildrel \Delta \over = \max(x,0)$ and $\mathcal{W}(\cdot)$ is the Lambert W function~\cite{corless}. 
\end{proposition}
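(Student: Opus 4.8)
The reformulated problem~\eqref{prob1}, together with the subproblems $\mathtt{SP1}$, $\mathtt{SP2}$ and the master problem~\eqref{eq:master}, is convex and (as already noted) satisfies Slater's condition, so in each case the KKT system is necessary and sufficient for global optimality. The plan is to write the stationarity and complementary-slackness conditions of the Lagrangians~\eqref{eq:lag1} and~\eqref{eq:lag2} and of $\mathcal{G}(t_{\mathrm{e}})$ --- which at convergence of the two-level iteration jointly form the KKT system of~\eqref{prob1} --- for each primal block $E_{i\mathrm{h}}$, $E_{i\mathrm{p}}$, $t_i$, $t_0$, $t_{\mathrm{e}}$, and then to solve these equations in closed form. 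I will use throughout that the perspective term $t\ln(1+\gamma E/t)$ has partial derivative $\gamma/(1+\gamma E/t)$ in $E$ and $\ln(1+\gamma E/t)-\gamma E/(t+\gamma E)$ in $t$.

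\emph{Energy variables.} Setting $\partial\mathcal{L}_1/\partial E_{i\mathrm{h}}=0$ gives $\gamma_{i\mathrm{h}}/(1+\gamma_{i\mathrm{h}}E_{i\mathrm{h}}/t_i)=\mu_i^{\ast}$ whenever $E_{i\mathrm{h}}>0$, and $E_{i\mathrm{h}}=0$ otherwise (the multiplier of $E_{i\mathrm{h}}\ge0$ from~\eqref{eq:last} forcing the boundary case); solving the linear relation in the interior and merging the two cases produces the truncation~\eqref{eq:Eb}. Similarly, $\partial\mathcal{L}_1/\partial E_{i\mathrm{p}}=0$ involves only the relaying logarithm inside $R_{\mathrm{p,c}}$ and the multiplier $\mu_i^{\ast}$ of~\eqref{eq:conv1}; clearing the denominator $1+\gamma_{\mathrm{p}}+\sum_{j}\gamma_{j\mathrm{p}}E_{j\mathrm{p}}/t_0$ leaves an equation linear in $E_{i\mathrm{p}}$, and projecting its solution onto the nonnegative orthant gives~\eqref{eq:Ep}. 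Because the combined relaying SNR in~\eqref{eq:qos} couples SU$_i$ to every other SU$_j$, $j\neq i$, the right-hand side of~\eqref{eq:Ep} still contains the unknowns $E_{j\mathrm{p}}^{\ast}$, so~\eqref{eq:Ep} is an implicit characterization --- precisely the coupling resolved by the inner fixed-point loop of Level~1.

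\emph{Time variables.} For $t_{\mathrm{e}}$ I would first argue that the total-time constraint~\eqref{eq:conv2} is active at the optimum: any idle time can be reassigned to $t_{\mathrm{e}}$, which increases the harvested energy and thereby relaxes~\eqref{eq:conv1}, \eqref{eq:qos} and~\eqref{eq:dec_con2} without reducing any $t_i$, so the objective cannot decrease; hence $\nu^{\ast}>0$ and complementary slackness yields $t_{\mathrm{e}}^{\ast}=1-2t_0^{\ast}-\sum_i t_i^{\ast}$, i.e.~\eqref{eq:te}, while the subgradient condition~\eqref{eq:grad_te} is what pins $\nu^{\ast}$ down inside the algorithm. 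For $t_i$ and $t_0$, the conditions $\partial\mathcal{L}_2/\partial t_i=0$ and $\partial\mathcal{L}_2/\partial t_0=0$ each reduce, using the derivative in $t$ recorded above, to a scalar equation of the form $\ln(1+x)-x/(1+x)=c$: here $x=\gamma_{i\mathrm{h}}E_{i\mathrm{h}}/t_i$ and $c=\nu^{\ast}$ in the first case, and $x$ is the normalized total relaying SNR and $c$ a combination of $\nu^{\ast}$, $\kappa^{\ast}$, $Q_2$ and $\lambda^{\ast}$ in the second. Substituting $z=1+x$ and then $z=e^{s}$ turns each into the form $u\,e^{u}=(\mathrm{constant})$, invertible by the Lambert $\mathcal{W}$ function; unwinding the substitutions and recalling $t=\gamma E/x$ produces the closed forms~\eqref{eq:ta} and~\eqref{eq:t0}.

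\emph{Expected main obstacle.} The $E$-updates are essentially routine algebra; the delicate points are (i) the reduction of the two transcendental stationarity equations to Lambert-$\mathcal{W}$ form and the verification that the branch used returns a root with $x>0$, so that $t_i^{\ast},t_0^{\ast}>0$ and every logarithm stays well defined; and (ii) the complementary-slackness bookkeeping --- establishing which of~\eqref{eq:qos}, \eqref{eq:dec_con2}, \eqref{eq:conv1}, \eqref{eq:conv2} and the sign constraints are tight --- so that the $[\,\cdot\,]^{+}$ operators in~\eqref{eq:Eb} and~\eqref{eq:Ep} and the equality~\eqref{eq:te} are genuinely justified. Finally, since~\eqref{eq:Ep} and the $t$-equations are mutually coupled, these closed forms characterize the optimum rather than evaluate it in one shot; convexity together with Slater's condition guarantees that the fixed point reached by the Level-1/Level-2 block-coordinate iteration of Section~\ref{sec:opt1} is the global optimum of~\eqref{prob1} and hence satisfies~\eqref{eq:Eb}--\eqref{eq:ta}.
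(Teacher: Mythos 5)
Your proposal follows essentially the same route as the paper's own proof in Appendix~\ref{app:opt_soln_STORA}: block-wise KKT stationarity conditions from $\mathcal{L}_1$ and $\mathcal{L}_2$ for $(E_{i\mathrm{h}},E_{i\mathrm{p}})$ and $(t_0,t_i)$, complementary slackness together with the tightness of \eqref{eq:qos}, \eqref{eq:conv1}, \eqref{eq:conv2} (the paper's Proposition~\ref{prop1}) to obtain \eqref{eq:te} and the multiplier signs, and inversion of the resulting transcendental equations of the type $\ln(1+x)-x/(1+x)=c$ via the Lambert $\mathcal{W}$ function. The only cosmetic differences are that you re-argue the tightness of the total-time constraint directly instead of citing Proposition~\ref{prop1}, and that for the $t_0$ condition the argument of the logarithm is shifted by $\gamma_{\mathrm{p}}$, which is exactly what produces the $-1-\gamma_{\mathrm{p}}$ in the denominator of \eqref{eq:t0}.
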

\begin{proof}
See Appendix \ref{app:opt_soln_STORA}.
\end{proof}

Algorithm~\ref{alg1} summarizes the above iterative procedure. 

\begin{algorithm}
\linespread{0.9}
\small{
\caption{Optimal solution of~\eqref{prob1} for a given $\mathcal{S_D}$}
\label{alg1}
\begin{itemize}
\item[1.] Initialize primal variables $\boldsymbol{E}_{\boldsymbol{\mathrm{s}\mathrm{p}} }, \boldsymbol{t_{\mathrm{a}}}, t_{0} , t_{\mathrm{e}} > 0$, and dual variables $\lambda, \kappa, \nu $ $ > 0$ 
\item[2.] Set $\mu_{\min}, \mu_{\max} > 0$, $\epsilon = 10^{-5}$, step size $ \alpha > 0$, and iteration index $l = 0$ 
\item[3.] \textbf{Repeat}: $l \leftarrow l+1$ , $\alpha \leftarrow {{\alpha}/{\sqrt{l}}}^{*}$
\item[4.] \textit{Level 1}: \textbf{Repeat}
\begin{itemize}
\item[-] \texttt{To solve for (SP1)} -
\item[a.] \textbf{For} $i = 1:N$
\begin{itemize}
\item[•] \textbf{While} $|\mu_{\max} - \mu_{\min}| > \epsilon$
\begin{itemize}
\item[•]  $\mu_{i} \leftarrow (\mu_{\min} + \mu_{\max})/2$ 
\item[•] Compute $(E_{i\mathrm{h}}, E_{i\mathrm{p}})$  using $(\boldsymbol{t_{\mathrm{a}}}, t_{0})$ in \eqref{eq:Eb}-\eqref{eq:Ep} 
\item[•] If $(E_{i\mathrm{h}} + E_{i\mathrm{p}}) < (P_{\mathrm{e}} + \theta_i) t^{l-1}_{\mathrm{e}} $,  $\mu_{\max} \leftarrow \mu_{i}$; Else, $\mu_{\min} \leftarrow \mu_{i}$.
\end{itemize}
\item[•] \textbf{End}
\end{itemize} 
\item[c.] \textbf{End}
\item[-] \texttt{To solve for (SP2)} -
\item[d.] Compute $(t_{i}, t_{0})$ $ \forall i$ using $( \boldsymbol{E}_{\boldsymbol{\mathrm{s}\mathrm{h}}},  \boldsymbol{E}_{\boldsymbol{\mathrm{s}\mathrm{p}}})$ in \eqref{eq:t0}-\eqref{eq:ta}. 
\item[e.] Update the dual variables $\lambda, \kappa, \nu$ with their gradients given in $\eqref{eq:grad_lam} - \eqref{eq:grad_nu}$ using interior point method  \end{itemize}
\item[5.] \textbf{Until} convergence
\item[6.] \textit{Level 2}: Update the primal variable: $t^{l}_{\mathrm{e}} = \left[ t^{l-1}_{\mathrm{e}} + \alpha \nabla t_{\mathrm{e}} \right]^{+}$
\item[7.] \textbf{Until} all the optimization parameters converge
\item[8.] Compute the optimal powers $P^{*}_{i\mathrm{h}} = \eta h_{\mathrm{h}i} \frac{E^{*}_{i\mathrm{h}}}{t^{*}_{i}}$, $P^{*}_{i\mathrm{p}} = \eta h_{\mathrm{h}i} \frac{E^{*}_{i\mathrm{p}}}{t^{*}_{0}}$ $\forall i$ 
\end{itemize}}
$^{*}$ The step-size $\alpha$ is chosen such that it satisfies the diminishing step-size rule~\cite[Chapter 1]{bertesekas}.
\end{algorithm}

{We now calculate the complexity of Algorithm 2. The ``while loop" that uses bisection method runs for $N$ times, thus having complexity $\mathcal{O}(N)$ as $\mu_{\min}$, $\mu_{\max}$, and $\epsilon$ are constant. The other bottleneck in the loop corresponding to $l$ is the computation of $t_i$ and $t_0$ having complexity $\mathcal{O}(N)$ (step 4.d.). Since steps $4-7$ are repeated for $l$ times, the overall complexity of steps $4-7$ is $\mathcal{O}(lN)$. Fig.~\ref{fig:compl} shows the impact of average number of iterations $l$ till convergence on the algorithm for the given simulation parameters considered in Section~\ref{sec:results}, where the number of iterations used till convergence remains almost the same with increase in $N$, and thus has complexity $\mathcal{O}(1)$. Finally, $\mathcal{O}(N)$ operations are done at step 8. Thus, Algorithm 2 has complexity $\mathcal{O}(N)$.}
\begin{figure}
\centering
   \includegraphics[scale=0.38]{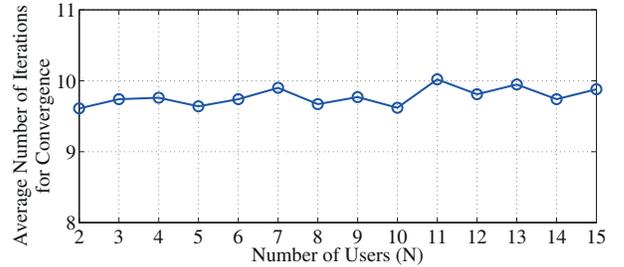} 
        \caption{Average number of iterations till convergence for $\bar{R}_{\mathrm{p}} = \mathrm{1.5}$~$\mathrm{nats/s/Hz}$ and $P_{\mathrm{e}} = 20$ dBm.}
      \label{fig:compl}
    \end{figure}
The following proposition states the properties of the optimal solution.
\begin{proposition}
The optimal solution $(\boldsymbol{E}^{*}_{\boldsymbol{\mathrm{s}\mathrm{h}}}, \boldsymbol{E}^{*}_{\boldsymbol{\mathrm{s}\mathrm{p}}}, \boldsymbol{t}^{*})$ satisfies the constraints \eqref{eq:qos}, \eqref{eq:conv1}, and \eqref{eq:conv2} with equality.
\label{prop1}
\end{proposition}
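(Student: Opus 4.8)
The plan is a routine exchange (perturbation) argument: assume an optimal solution $(\boldsymbol{E}^{*}_{\boldsymbol{\mathrm{s}\mathrm{h}}},\boldsymbol{E}^{*}_{\boldsymbol{\mathrm{s}\mathrm{p}}},\boldsymbol{t}^{*})$ of \eqref{prob1} for which one of \eqref{eq:qos}, \eqref{eq:conv1}, \eqref{eq:conv2} is slack, and exhibit a feasible point with strictly larger objective, contradicting optimality. The only analytic ingredient is the monotonicity of the perspective map $\phi(x,t):=t\ln(1+x/t)$: for $t>0$ it is strictly increasing in $x$, and for $x\ge 0$ it is non-decreasing in $t$, strictly so when $x>0$ (both from $\ln(1+u)\ge u/(1+u)$, with equality only at $u=0$), together with $\partial_t\phi(x,\cdot)\to+\infty$ as $t\downarrow 0$ for fixed $x>0$. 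I would first record two consequences of the paper's standing assumption that PU cannot reach $\bar{R}_{\mathrm{p}}$ on its direct link alone, i.e. $Q_1=\ln(1+\gamma_{\mathrm{p}})<\bar{R}_{\mathrm{p}}$ with $Q_1$ as in \eqref{eq:Q1}: at \emph{every} feasible point of \eqref{prob1} one has $t_{\mathrm{e}}>0$, $t_0>0$, and $\sum_{\mathrm{SU}_i\in\mathcal{S_D}}\gamma_{i\mathrm{p}}E_{i\mathrm{p}}>0$, since if any of these were zero then $R_{\mathrm{p,c}}$ reduces to $Q_1(t_{\mathrm{e}}+t_0)\le Q_1<\bar{R}_{\mathrm{p}}$, violating \eqref{eq:qos} (here $t_{\mathrm{e}}+t_0\le 1$ follows from \eqref{eq:conv2}).

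I would dispatch \eqref{eq:conv1} first. If $E^{*}_{i\mathrm{p}}+E^{*}_{i\mathrm{h}}<(P_{\mathrm{e}}+\theta_i)t^{*}_{\mathrm{e}}$ for some $i$ with gap $\delta>0$, replace $E^{*}_{i\mathrm{h}}$ by $E^{*}_{i\mathrm{h}}+\delta$; none of \eqref{eq:qos}, \eqref{eq:dec_con2}, \eqref{eq:conv2}, nor the other users' energy constraints involve $E_{i\mathrm{h}}$, so feasibility is kept and $R_i$ strictly increases by monotonicity of $\phi$ in $x$ (provided $t^{*}_i>0$; the case $t^{*}_i=0$ is discussed below). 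Next, \eqref{eq:conv2}: if $t^{*}_{\mathrm{e}}+2t^{*}_0+\sum_i t^{*}_i<1$ with gap $\delta>0$, enlarge $t_{\mathrm{e}}$ by $\delta$; this makes \eqref{eq:conv2} tight and only relaxes \eqref{eq:conv1} (its RHS grows), \eqref{eq:dec_con2} and \eqref{eq:qos} (the $Q_1 t_{\mathrm{e}}$ term grows), so the point stays feasible but \eqref{eq:conv1} now has slack, reducing us to the case just treated. Hence both \eqref{eq:conv1} and \eqref{eq:conv2} hold with equality.

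For \eqref{eq:qos}: suppose $R_{\mathrm{p,c}}(\boldsymbol{E}^{*}_{\boldsymbol{\mathrm{s}\mathrm{p}}},t^{*}_{\mathrm{e}},t^{*}_0)>\bar{R}_{\mathrm{p}}$. By the preliminary observation some $\mathrm{SU}_i\in\mathcal{S_D}$ has $E^{*}_{i\mathrm{p}}>0$. Decrease $E^{*}_{i\mathrm{p}}$ by a small $\delta>0$ and move that $\delta$ into $E^{*}_{i\mathrm{h}}$: the sum $E_{i\mathrm{p}}+E_{i\mathrm{h}}$ is unchanged so \eqref{eq:conv1} still holds; \eqref{eq:dec_con2} and \eqref{eq:conv2} do not see $E_{i\mathrm{p}}$ or $E_{i\mathrm{h}}$; and, by continuity, $R_{\mathrm{p,c}}$ stays $\ge\bar{R}_{\mathrm{p}}$ for $\delta$ small. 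The objective changes only through $R_i$, which strictly increases (again assuming $t^{*}_i>0$), a contradiction, so \eqref{eq:qos} is tight.

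The main obstacle I anticipate is the boundary regime in which a relaying SU has been allotted \emph{no} access time ($t^{*}_i=0$), so raising $E_{i\mathrm{h}}$ does nothing to the objective; this is exactly where the coupling bites, because then the useful move is not a pure energy swap but a \emph{joint} perturbation that also hands $\mathrm{SU}_i$ an infinitesimal sliver $\beta$ of access time taken from some $\mathrm{SU}_j$ with $t^{*}_j,E^{*}_{j\mathrm{h}}>0$ (such a $j$ exists whenever the optimal value is positive; the zero-value corner is the minimal-resource solution and is checked directly). Because $\partial_t\phi(x,\cdot)\to+\infty$ as $t\downarrow 0$, the gain $\beta\ln(1+\gamma_{i\mathrm{h}}\delta/\beta)\sim\beta\ln(1/\beta)$ dominates the loss $\approx\beta\,\partial_t R_j$, which is $O(\beta)$, for $\beta$ small, giving a strict improvement while \eqref{eq:qos} (with its small slack), \eqref{eq:dec_con2} (unchanged, since $t_0$ is untouched), \eqref{eq:conv1}, and \eqref{eq:conv2} all remain satisfied. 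A cleaner alternative, avoiding these case splits, is to argue from the KKT conditions of the convex problem \eqref{prob1}: the multipliers $\lambda^{*},\mu^{*}_i,\nu^{*}$ of Proposition~\ref{lemma1} must be strictly positive (they appear in denominators of \eqref{eq:Eb}--\eqref{eq:ta} and inside Lambert-$\mathcal{W}$ arguments that must lie on the principal real branch), so complementary slackness immediately forces \eqref{eq:qos}, \eqref{eq:conv1}, and \eqref{eq:conv2} to be active; one should just note that \eqref{eq:conv2}'s tightness is already built into \eqref{eq:te}, so to avoid circularity that constraint is best handled first by the elementary argument above.
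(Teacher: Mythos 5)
Your proposal is correct and follows essentially the same route as the paper: a perturbation/exchange argument showing that slack in \eqref{eq:qos}, \eqref{eq:conv1}, or \eqref{eq:conv2} lets you shift energy to $E_{i\mathrm{h}}$ (or enlarge the time budget) and strictly increase the objective, contradicting optimality. The only difference is that you are more careful than the paper's own proof, which silently assumes $t^{*}_i>0$ when increasing $E_{i\mathrm{h}}$ or $t_i$; your joint time--energy perturbation for the $t^{*}_i=0$ corner (and your caveat about the circularity of the KKT shortcut, since the multiplier positivity in Proposition~\ref{lemma1} is itself deduced from this proposition) patches gaps the paper glosses over.
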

\begin{proof}
Assume that the constraints \eqref{eq:qos} and \eqref{eq:conv1} are satisfied with strict inequality. For \eqref{eq:qos}, we can decrease $E_{i\mathrm{p}}$ on the relaying link without violating the primary rate constraint and increase $E_{i\mathrm{h}}$ on the secondary access link to HAP. This increases SU sum-throughput, which contradicts with our assumption of optimality. Similarly, for \eqref{eq:conv1}, we can always increase $E_{i\mathrm{h}}$ without violating the constraint, resulting in higher sum-throughput, contradicting our assumption. Now, let us assume \eqref{eq:conv2} is satisfied with inequality. Then, we can increase $t_i$ improving SU sum-throughput, contradicting with our assumption. 
\end{proof}

\subsection{SU Selection for Relaying}
We now use structural properties of the optimal solution to find the underlying SU selection to relay PU data. We proceed in this direction by first proving the following lemma, which will be used in the proof of Proposition~\ref{Esp_ratio}.
\begin{lemma}
For SU$_i$ and SU$_j$ transmitting in the rewarded period, we have $\frac{\gamma_{i\mathrm{h}}E^{*}_{i\mathrm{h}}}{t^{*}_{i}} = \frac{\gamma_{j\mathrm{h}}E^{*}_{j\mathrm{h}}}{t^{*}_{j}}$ with $ i \neq j$.
\label{monot_lemma}
\end{lemma}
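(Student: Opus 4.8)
The plan is to read off the claim directly from the analytic expressions for $t^{*}_{i}$ and $E^{*}_{i\mathrm{h}}$ established in Proposition~\ref{lemma1}. From \eqref{eq:ta}, for every SU$_i$ that transmits in the rewarded period (so that $t^{*}_i > 0$ and hence $E^{*}_{i\mathrm{h}} > 0$), we have
\begin{equation}
t^{*}_{i} = \frac{\gamma_{i\mathrm{h}}E^{*}_{i\mathrm{h}}}{\exp\!\left(\mathcal{W}\!\left(\tfrac{-1}{\nu^{*}+1}\right)+\nu^{*}+1\right)-1}. \nonumber
\end{equation}
The key observation is that the denominator on the right-hand side depends only on the single dual variable $\nu^{*}$ associated with the total-time constraint \eqref{eq:conv2}, and \emph{not} on the index $i$. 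Therefore I would rearrange this identity as $\dfrac{\gamma_{i\mathrm{h}}E^{*}_{i\mathrm{h}}}{t^{*}_{i}} = \exp\!\left(\mathcal{W}\!\left(\tfrac{-1}{\nu^{*}+1}\right)+\nu^{*}+1\right)-1$, a quantity common to all transmitting SUs, and likewise write the same equality with $i$ replaced by $j$; equating the two right-hand sides gives $\dfrac{\gamma_{i\mathrm{h}}E^{*}_{i\mathrm{h}}}{t^{*}_{i}} = \dfrac{\gamma_{j\mathrm{h}}E^{*}_{j\mathrm{h}}}{t^{*}_{j}}$, which is exactly the lemma.

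Before invoking \eqref{eq:ta} I should justify that it applies, i.e. that $\nu^{*}$ is well defined and strictly positive so the division is legitimate: this follows from Proposition~\ref{prop1}, which states that \eqref{eq:conv2} is active at the optimum, hence its multiplier $\nu^{*}$ is nonzero, and from the diminishing-step-size projected-subgradient updates ensuring $\nu^{*}>0$. I would also remark that the restriction ``SU$_i$ and SU$_j$ transmitting in the rewarded period'' is precisely what guarantees $t^{*}_i, t^{*}_j > 0$ so that the ratios in the statement make sense; for an SU with $t^{*}_i = 0$ the relation is vacuous. An alternative, slightly more self-contained route would be to bypass \eqref{eq:ta} and argue from the KKT stationarity condition for $t_i$ directly: differentiating the Lagrangian $\mathcal{L}_2$ in \eqref{eq:lag2} with respect to $t_i$ yields $\ln\!\left(1+\tfrac{\gamma_{i\mathrm{h}}E^{*}_{i\mathrm{h}}}{t^{*}_i}\right) - \tfrac{\gamma_{i\mathrm{h}}E^{*}_{i\mathrm{h}}/t^{*}_i}{1+\gamma_{i\mathrm{h}}E^{*}_{i\mathrm{h}}/t^{*}_i} = \nu^{*}$ for every transmitting SU$_i$; since the left side is a strictly monotincreasing function of the single scalar $\gamma_{i\mathrm{h}}E^{*}_{i\mathrm{h}}/t^{*}_i$ and the right side is index-independent, all transmitting SUs share the same value of $\gamma_{i\mathrm{h}}E^{*}_{i\mathrm{h}}/t^{*}_i$.

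I do not expect a genuine obstacle here — the lemma is essentially a corollary of Proposition~\ref{lemma1}. The only point requiring a little care is the monotonicity/injectivity argument in the KKT-based route (showing the map $x \mapsto \ln(1+x) - \tfrac{x}{1+x}$ is strictly increasing on $x>0$, which is immediate since its derivative is $\tfrac{x}{(1+x)^2}>0$), together with the bookkeeping that the common value is positive and finite; in the expression-based route even that is unnecessary, as one simply reads off the common denominator. I would present the expression-based argument as the main proof and mention the KKT interpretation as a remark.
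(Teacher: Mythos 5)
Your expression-based argument is exactly the paper's own proof: rearrange \eqref{eq:ta} so that $\frac{\gamma_{i\mathrm{h}}E^{*}_{i\mathrm{h}}}{t^{*}_{i}}$ equals a quantity depending only on $\nu^{*}$, hence common to all SUs accessing the spectrum. Your alternative KKT route is also fine but is just the derivation behind \eqref{eq:ta} itself (it is the stationarity condition \eqref{eq:kkt3} in Appendix~\ref{app:opt_soln_STORA}), so it adds nothing essentially different.
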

\begin{proof}
Rearranging terms of~\eqref{eq:ta}, we obtain
\begin{equation}
\exp\left( \mathcal{W}\left( \frac{-1}{\nu^{*} + 1}  \right) + \nu^{*} +1\right)-1  = \frac{{\gamma_{i\mathrm{h}}}E^{*}_{i\mathrm{h}}}{t^{*}_{i}},\quad \forall i,
\end{equation}
Since the term $\exp\left( \mathcal{W}\left( \frac{-1}{\nu^{*} + 1}  \right) + \nu^{*} +1\right)-1 $ is same for all SUs who access the spectrum, we have $\frac{\gamma_{i\mathrm{h}}E^{*}_{i\mathrm{h}}}{t^{*}_{i}} = \frac{\gamma_{j\mathrm{h}}E^{*}_{j\mathrm{h}}}{t^{*}_{j}} = \frac{1}{C}$, where $C$ is a constant.
\end{proof}
For STORA scheme, the following proposition states the underlying SU selection to relay PU data and the energy allocation at the relaying SUs.
\begin{proposition}
1) The SUs are prioritized to relay primary data in the increasing order of ratio $\frac{\gamma_{i\mathrm{h}}}{ \gamma_{i\mathrm{p}}}$. 2) The optimal energy allocated on the relaying link $E^*_{i\mathrm{p}}$ is fractional for at most one SU.
\label{Esp_ratio}
\end{proposition}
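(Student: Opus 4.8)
The plan is to extract both claims from the KKT stationarity expressions already derived in Proposition~\ref{lemma1}, together with the complementary slackness conditions. The key quantity is $E^{*}_{i\mathrm{p}}$ in~\eqref{eq:Ep}: its ``$[\,\cdot\,]^{+}$'' argument is positive (so SU$_i$ actually relays) exactly when $\frac{\lambda^{*}}{\mu^{*}_{i}} > \frac{1+\gamma_{\mathrm{p}}}{\gamma_{i\mathrm{p}}} + \frac{\sum_{j\neq i}\gamma_{j\mathrm{p}}E^{*}_{j\mathrm{p}}}{\gamma_{i\mathrm{p}}}$, i.e. after multiplying through by $\gamma_{i\mathrm{p}}$, when $\lambda^{*}\frac{\gamma_{i\mathrm{p}}}{\mu^{*}_{i}} > (1+\gamma_{\mathrm{p}}) + \sum_{j\neq i}\gamma_{j\mathrm{p}}E^{*}_{j\mathrm{p}}$. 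The right-hand side here is the \emph{same} for every SU in $\mathcal{S_D}$ up to the single term $\gamma_{i\mathrm{p}}E^{*}_{i\mathrm{p}}$ that is excluded from the sum; this near-symmetry is what forces a threshold structure, and the threshold will be governed by the product $\frac{\gamma_{i\mathrm{p}}}{\mu^{*}_{i}}$.

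First I would pin down $\mu^{*}_{i}$. From~\eqref{eq:Eb}, since every SU that transmits in the rewarded period has $E^{*}_{i\mathrm{h}}>0$, complementary slackness / the form of~\eqref{eq:Eb} gives $\mu^{*}_{i} = \frac{\gamma_{i\mathrm{h}}}{1 + \gamma_{i\mathrm{h}}E^{*}_{i\mathrm{h}}/t^{*}_{i}}$, and by Lemma~\ref{monot_lemma} the ratio $\gamma_{i\mathrm{h}}E^{*}_{i\mathrm{h}}/t^{*}_{i} = 1/C$ is a constant independent of $i$. Hence $\mu^{*}_{i} = \frac{\gamma_{i\mathrm{h}}}{1 + 1/C} = c\,\gamma_{i\mathrm{h}}$ for a common constant $c = \frac{C}{C+1}$. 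Substituting this into the relaying condition, SU$_i$ relays iff $\frac{\lambda^{*}}{c}\cdot\frac{\gamma_{i\mathrm{p}}}{\gamma_{i\mathrm{h}}} > (1+\gamma_{\mathrm{p}}) + \sum_{j\in\mathcal{S_D}}\gamma_{j\mathrm{p}}E^{*}_{j\mathrm{p}} - \gamma_{i\mathrm{p}}E^{*}_{i\mathrm{p}}$. Equivalently, writing the right side as $K - \gamma_{i\mathrm{p}}E^{*}_{i\mathrm{p}}$ with $K := (1+\gamma_{\mathrm{p}}) + \sum_{j\in\mathcal{S_D}}\gamma_{j\mathrm{p}}E^{*}_{j\mathrm{p}}$ a global constant: if $E^{*}_{i\mathrm{p}}=0$ the test reads $\frac{\lambda^{*}}{c}\cdot\frac{\gamma_{i\mathrm{p}}}{\gamma_{i\mathrm{h}}} \le K$, and if $E^{*}_{i\mathrm{p}}>0$ substituting~\eqref{eq:Ep} back in shows that at an active SU, $\frac{\lambda^{*}}{c}\cdot\frac{\gamma_{i\mathrm{p}}}{\gamma_{i\mathrm{h}}} = $ the same expression, namely $\frac{\lambda^{*}\gamma_{i\mathrm{p}}}{\mu^{*}_i} = (1+\gamma_{\mathrm{p}}) + \sum_{j\in\mathcal{S_D}}\gamma_{j\mathrm{p}}E^{*}_{j\mathrm{p}}$, independent of $i$. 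This is the crux: \emph{all} active relays share one common value of $\frac{\gamma_{i\mathrm{p}}}{\gamma_{i\mathrm{h}}}$-weighted threshold, while inactive ones lie below it. Ordering SUs by increasing $\frac{\gamma_{i\mathrm{h}}}{\gamma_{i\mathrm{p}}}$ (equivalently decreasing $\frac{\gamma_{i\mathrm{p}}}{\gamma_{i\mathrm{h}}}$), those with the largest $\frac{\gamma_{i\mathrm{p}}}{\gamma_{i\mathrm{h}}}$ meet the inequality first, which proves part~1): SUs are recruited to relay in increasing order of $\frac{\gamma_{i\mathrm{h}}}{\gamma_{i\mathrm{p}}}$.

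For part~2), observe that the common-value equation $\frac{\lambda^{*}\gamma_{i\mathrm{p}}}{\mu^{*}_i} = (1+\gamma_{\mathrm{p}}) + \sum_{j}\gamma_{j\mathrm{p}}E^{*}_{j\mathrm{p}}$ must hold simultaneously for every active relay, and its right-hand side does not depend on $i$. If two distinct SUs, say SU$_a$ and SU$_b$, both had $E^{*}_{a\mathrm{p}},E^{*}_{b\mathrm{p}}\in(0,\text{their energy budget})$ — i.e. both strictly fractional, so both the relaying link and the access link are used and the energy constraint~\eqref{eq:conv1} binds with $E^{*}_{i\mathrm{h}}>0$ — then we would need $\frac{\gamma_{a\mathrm{p}}}{\mu^{*}_a} = \frac{\gamma_{b\mathrm{p}}}{\mu^{*}_b}$, i.e. $\frac{\gamma_{a\mathrm{p}}}{\gamma_{a\mathrm{h}}} = \frac{\gamma_{b\mathrm{p}}}{\gamma_{b\mathrm{h}}}$, a knife-edge coincidence that is generically excluded (and can be ruled out by a perturbation/tie-breaking argument, or simply excluded by the genericity assumption on channel gains). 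So at most one active relay has $E^{*}_{i\mathrm{p}}$ strictly between $0$ and the value that would exhaust its harvested energy; every other recruited relay spends its \emph{entire} available energy on relaying, i.e. $E^{*}_{i\mathrm{p}}$ is ``saturated,'' while non-recruited SUs have $E^{*}_{i\mathrm{p}}=0$. Hence $E^{*}_{i\mathrm{p}}$ is fractional for at most one SU. I expect the main obstacle to be the bookkeeping in part~2): carefully distinguishing the three regimes (SU not relaying; SU relaying with a saturated energy split; SU relaying with a fractional split) and showing the middle regime forces the common-threshold equation, so that two such SUs would violate the generic distinctness of the ratios $\frac{\gamma_{i\mathrm{p}}}{\gamma_{i\mathrm{h}}}$ — together with handling the edge cases where $\lambda^{*}=0$ or where the energy constraint happens to slacken, which must be excluded using Proposition~\ref{prop1}.
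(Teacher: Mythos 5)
Your route is genuinely different from the paper's (a KKT/dual-threshold argument rather than the paper's reduction, after the change of variables $\bar E_{i\mathrm{p}}=E_{i\mathrm{p}}\gamma_{i\mathrm{p}}/K$ and Lemma~\ref{monot_lemma}, to a linear resource-allocation problem with weights $w_i\propto \gamma_{i\mathrm{h}}/\gamma_{i\mathrm{p}}$ solved by an explicit exchange/greedy argument), but as written it has a gap precisely where the priority claim lives. Your key identity $\mu^{*}_{i}=c\,\gamma_{i\mathrm{h}}$ comes from \eqref{eq:Eb}/\eqref{eq:kkt1} together with Lemma~\ref{monot_lemma}, and both apply only to SUs that actually transmit in the rewarded period, i.e.\ with $E^{*}_{i\mathrm{h}}>0$ and $t^{*}_{i}>0$. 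The SUs at the heart of part~1 are the \emph{saturated} relays, which by Proposition~\ref{prop1} have $E^{*}_{i\mathrm{h}}=0$ and hence (by \eqref{eq:ta}) $t^{*}_{i}=0$. At that corner the term $t_{i}\ln\bigl(1+\gamma_{i\mathrm{h}}E_{i\mathrm{h}}/t_{i}\bigr)$ is nondifferentiable and its naive partial derivatives in $E_{i\mathrm{h}}$ and $t_{i}$ both vanish, so the stationarity conditions for such an SU carry no information about $\gamma_{i\mathrm{h}}$ at all: they reduce to $\mu^{*}_{i}=\lambda^{*}\gamma_{i\mathrm{p}}/(1+\gamma_{\mathrm{p}}+y^{*})$ plus trivially satisfied inequalities, and would be formally satisfied even if the ``wrong'' SU (one with a large $\gamma_{i\mathrm{h}}/\gamma_{i\mathrm{p}}$) were saturated. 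Thus your threshold comparison only separates the (at most one) fractional relay from the non-relaying SUs; it does not establish that the fully-relaying SUs are exactly those with the smallest $\gamma_{i\mathrm{h}}/\gamma_{i\mathrm{p}}$, which is the substance of part~1. Closing this requires either a proper supergradient/directional-derivative treatment at the degenerate corner or an exchange/perturbation argument comparing corner allocations --- the latter being essentially what the paper does with strategies~1 and~2 and Cases~1--3 of problem \eqref{eq:prob_lemma}.

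Two smaller points. For part~2 you invoke genericity (``knife-edge coincidence'') to exclude two fractional relays; the paper's constructive greedy solution \eqref{eq:sol_p}--\eqref{eq:sol_q} needs no such assumption and handles ties by arbitrary tie-breaking, so your version proves a slightly weaker statement. Also, your common-threshold equation should read $\lambda^{*}\gamma_{i\mathrm{p}}/\mu^{*}_{i}=1+\gamma_{\mathrm{p}}+\bigl(\sum_{\mathrm{SU}_j\in\mathcal{S_D}}\gamma_{j\mathrm{p}}E^{*}_{j\mathrm{p}}\bigr)/t^{*}_{0}$ (the relayed energies enter normalized by $t_{0}$, per \eqref{eq:kkt2}); this does not affect the independence-of-$i$ argument but should be stated correctly.
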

\begin{proof}
\textit{1)} SUs that relay PU data have to allocate their harvested energy between relaying and transmitting their own data. The energy allocation subproblem $\mathtt{SP1}$ of the problem~\eqref{prob1} given by \eqref{eq:sp1} aims to maximize the objective~\eqref{eq:obj} under the primary rate constraint~\eqref{eq:qos} and the energy neutrality constraint~\eqref{eq:conv1}. Under the optimal condition, since the constraints~\eqref{eq:qos} and \eqref{eq:conv1} are met with equality (see Proposition~\ref{prop1}), we can write them as
\begin{align}
\sum_{\mathrm{SU}_i \in \mathcal{S_{D}}}\gamma_{i\mathrm{p}}E_{i\mathrm{p}}&= t_0\left(\exp\left(\frac{\bar{R}_{\mathrm{p}} - Q_1t_{\mathrm{e}}}{t_0}\right) - (1+\gamma_{\mathrm{p}})\right),\label{eq:mod_sum} \\
E_{i\mathrm{p}} + E_{i\mathrm{h}} &= (P_{\mathrm{e}} + \theta_i) t_{\mathrm{e}},\,\,\,\, \forall i,\label{eq:mod_conv1}
\end{align}
respectively. Without altering the solution, we can make $\gamma_{i\mathrm{p}}$ in \eqref{eq:mod_sum} equal to a constant $K$ and scale corresponding ${\gamma_{i\mathrm{h}}}$, $E_{i\mathrm{h}}$, and $E_{i\mathrm{p}}$ accordingly. Then, the energy allocation subproblem $\mathtt{SP1}$ can be reformulated as
\begin{subequations}
\begin{align}
\! \mathop{\mathrm{maximize}}_{\boldsymbol{\bar{E}_{\mathrm{s}\mathrm{h}}},\boldsymbol{\bar{E}_{\mathrm{s}\mathrm{p}}}}
& ~~ \sum_{i=1}^N t_{i}\ln\left(1+ \frac{\bar{\gamma}_{i\mathrm{h}}\bar{E}_{i\mathrm{h}} }{t_{i}} \right) \label{eq:mod_obj1} \\
\mathrm{subject~to} &~~ K\!\!\!\! \sum_{\mathrm{SU}_i \in \mathcal{S_{D}}}\!\!\!\!\bar{E}_{i\mathrm{p}} = t_0\left(\!\exp\left(\frac{\bar{R}_{\mathrm{p}} - Q_1t_{\mathrm{e}}}{t_0}\!\right) - (1+\gamma_{\mathrm{p}})\right)\!,\label{eq:mod_sum1} \\
&~~ \frac{K}{\gamma_{i\mathrm{p}}}(\bar{E}_{i\mathrm{p}}  + \bar{E}_{i\mathrm{h}}) = (P_{\mathrm{e}}+\theta_i)t_{\mathrm{e}} , \hspace{3mm} \forall i,\label{eq:con11} \\
&~~ \bar{E}_{i\mathrm{p}}, \bar{E}_{i\mathrm{h}}  \geq 0, ~~~\forall  i, \label{eq:con12} 
\end{align}
\label{eq:fin_eq}
\end{subequations}\vspace*{-4mm}

\noindent where $ \bar{\gamma}_{i\mathrm{h}} = \gamma_{i\mathrm{h}}\frac{K}{\gamma_{i\mathrm{p}}},  \bar{E}_{i\mathrm{p}} = E_{i\mathrm{p}}\frac{\gamma_{i\mathrm{p}}}{K}$, $ \bar{E}_{i\mathrm{h}} = E_{i\mathrm{h}}\frac{\gamma_{i\mathrm{p}}}{K}$, $\boldsymbol{\bar{E}_{\mathrm{s}\mathrm{h}}} = [\bar{E}_{1\mathrm{h}}, \dotsc, \bar{E}_{N\mathrm{h}}]$, and $\boldsymbol{\bar{E}_{\mathrm{s}\mathrm{p}}} = [\bar{E}_{1\mathrm{p}}, \dotsc, \bar{E}_{N\mathrm{p}}]$. Using \eqref{eq:mod_obj1}, the SU sum-throughput can be rewritten as
\begin{align}
R_{\mathrm{s,sum}} =  \sum_{i=1}^N t_{i}\ln\left(1+ \frac{K \frac{\gamma_{i\mathrm{h}}}{\gamma_{i\mathrm{p}}}\bar{E}_{i\mathrm{h}}}{t_{i}} \right).\label{eq:inter}
\end{align}
Using Lemma~\ref{monot_lemma}, we have $\frac{\gamma_{i\mathrm{h}}E_{i\mathrm{h}}}{t_{i}} = \frac{\bar{\gamma}_{i\mathrm{h}}\bar{E}_{i\mathrm{h}}}{t_{i}} = \frac{1}{C}$, where $C$ is a constant. Thus, we can write \eqref{eq:inter} as
\begin{align}
R_{\mathrm{s,sum}} = \sum_{i=1}^N CK \frac{ \gamma_{i\mathrm{h}}}{\gamma_{i\mathrm{p}}} \bar{E}_{i\mathrm{h}} \ln\left(1+ \frac{1}{C} \right).
\label{eq:Rs_sum_ratio}
\end{align}
To exploit the structural properties of the optimization problem \eqref{eq:fin_eq} with its objective \eqref{eq:mod_obj1} replaced by \eqref{eq:Rs_sum_ratio}, we write it in the form of a generic resource allocation problem as follows:
\begin{subequations}
\begin{align}
  \mathop{\mathrm{maximize}}_{\boldsymbol{p}, \boldsymbol{q}}&~~~ \sum_{i = 1}^{N} w_{i}p_{i} \label{eq:obj_generic}\\
\mathrm{subject~to} &~~~ p_{i} + q_{i}  =  r_{i}, \label{eq:sum} \\
&~~~ \sum_{i = 1}^{N} q_{i} = s, \label{eq:sum1}  \\ 
&~~~\boldsymbol{p}, \boldsymbol{q} \geq 0,\label{eq:final}
\end{align}
\label{eq:prob_lemma}\vspace*{-5mm}
\end{subequations}

\noindent where $w_{i} = CK \frac{ \gamma_{i\mathrm{h}}}{\gamma_{i\mathrm{p}}} \ln\left(1+ \frac{1}{C} \right)$, $p_i = \bar{E}_{i\mathrm{h}}$, $q_i = \bar{E}_{i\mathrm{p}}$, $r_{i} = \left(P_{\mathrm{e}}+ \theta_i \right)t_{\mathrm{e}}\frac{\gamma_{i\mathrm{p}}}{K}$, $s = \frac{t_0}{K}\left(\exp\left(\frac{\bar{R}_{\mathrm{p}} - Q_1t_{\mathrm{e}}}{t_0}\right) - (1+\gamma_{\mathrm{p}})\right)$, $\boldsymbol{p} = [p_{1}, \dotsc, p_{N}]$, and $\boldsymbol{q} = [q_{1}, \dotsc, q_{N}]$.

We split the resource allocation problem in~\eqref{eq:prob_lemma} into three cases to obtain its optimal solution. 
Without loss of generality, let us assume that $w_{1} > w_{2} > \dotsc > w_{N}$. We denote $r' = \min(r_{1}, r_2, \dotsc, r_N)$. 

\underline{\textit{Case 1}}: Let $ s < r' $.
Consider following two possible strategies:
\begin{itemize}
\item[1.] Strategy 1: $q_{i} = s$, $q_{j} = 0$ with $j \neq i$.
\item[2.] Strategy 2: $q_{i} = a_{i}s$. Here, $a_i$ ($0 \leq a_i < 1$) can be chosen arbitrarily such that $\sum_{i=1}^{N} a_{i} = 1$.
\end{itemize}
Under strategy 1, if SU$_1$ allocates $s$ on $\mathbb{Q}$ link, the objective of problem~\eqref{eq:prob_lemma} can be rewritten as
\begin{align}
\sum_{i = 1}^{N} w_{i}p_{i} &=  w_{1}(r_{1}-s) +  {\underset{ j = 2} {\overset{N} \sum}} w_{j} r_j \nonumber \\
&=  \sum_{i=1}^{N}w_i r_{i} - w_1 s \nonumber  \\
&<  \sum_{i=1}^{N}w_i r_{i} - w_2 s \nonumber  \\
&\hspace*{7mm} \vdots \hspace{9mm}  \vdots  \hspace{9mm} \vdots \nonumber  \\ 
& < \sum_{i=1}^{N}w_i r_{i} - w_N s.
\label{eq:strat1}
\end{align}
Thus, allocating the resource $s$ fully on $\mathbb{Q}$ link of SU$_N$ with the lowest weight $w_{N}$, i.e., when $q_{N} = s$, maximizes the objective. Now, under strategy 2, we can write the objective as
\begin{align}
\sum_{i = 1}^{N} w_{i}p_{i} & =  \sum_{i=1}^{N} w_{i}(r_{i} -a_{i}s) \nonumber \\
& = \sum_{i = 1}^{N}w_i r_i - \sum_{i = 1}^{N}a_i w_i s.
\label{eq:strat2}
\end{align}
Since $\sum_{i = 1}^{N}a_i w_i s > \sum_{i = 1}^{N}a_iw_N s = w_N s$, we have $\eqref{eq:strat1} >  \eqref{eq:strat2}$. Thus, strategy 1 outperforms strategy 2 and allocating complete resource $s$ to SU with the lowest weight yields the optimal solution, i.e., $q^*_N = s$ and $q^*_{i}$ = 0 if $i \neq N$. From \eqref{eq:sum}, we then have $p^*_N = r_N - s$ and $p^*_{i} = r_i$ if $i \neq N$.\footnote{If $w_i = w_j = \min{\boldsymbol{w}}$ with $i \neq j$, then we can randomly allocate resource $s$ completely to either SU$_i$ or SU$_j$.}

\underline{\textit{Case 2}}: Let $r' < s \leq r_N $. Under strategy 1, for SU$_i$ with $r_{i} < s$, we cannot allocate complete resource $s$ as it makes $p_i < 0$. Similar argument applies for strategy 2 and when $r_{i} < a_i s $. Then, excluding such users and following the same proof for \textit{Case 1}, we can show that the objective is maximized if user $N$ with the lowest weight is allocated the resource $s$ fully. Therefore, $p^*_i$ and $q^{*}_{i}$ are the same as those for \textit{Case 1}.

\underline{\textit{Case 3}}: Let $s > r_N$. In this case, the $\mathbb{Q}$ link of SU$_N$ can accommodate at the most $r_N$ out of the resource $s$. This implies that we have to allocate the resource $s$ over $\mathbb{Q}$ links of more than one user. Following the proof given for \textit{Case 1}, we can show that $q^{*}_{N} = r_N$ and $ p^{*}_{N} = 0$.  Now, the problem~\eqref{eq:prob_lemma} reduces to finding $\boldsymbol{p}$ and $\boldsymbol{q}$ of dimension $N-1$, i.e., with $N-1$ users. In this modified optimization problem, $s$ in \eqref{eq:sum1} is replaced by $s-r_N$. Here, we have to investigate two cases depending on whether $(s-r_N) \leq r_{N-1}$ or $(s-r_N) > r_{N-1}$, to obtain the optimal solution of the modified optimization problem. Based on the aforementioned discussion in this proof, we get $q^{*}_{N-1} = \min(r_{N-1}, s-r_N)$ and $ p^{*}_{N-1} = r_{N-1}-q^*_{N-1}$. The problem~\eqref{eq:prob_lemma} can be further reduced to dimension $(N-2)$ and later upto $(N-(N-1))$ recursively. Combining the solution of each recursive step together, we can finally write the optimal solution to the problem \eqref{eq:prob_lemma} as 
\begin{eqnarray}
    q_{i}^* =
\begin{cases}
     \min(r_{i},s),& \!\! \hspace*{-9mm}  \text{if}~ w_{i} = \min (\boldsymbol{w}), \\
       \min\left(\!\!r_{i}, \left[\!s - {\underset{\lbrace j:   j \neq i, w_{j} \leq w_{i}\rbrace} \sum} \!q_{j}^{*}\!\right]^{+}\!\right)\!, &  \!\!\!\text{otherwise},
\end{cases} 
\label{eq:sol_p}
\end{eqnarray}
and
\begin{align}
p_{i}^* = r_{i} - q^{*}_{i}, \hspace{2mm} \forall i,
\label{eq:sol_q}
\end{align}
which also comprises the optimal solutions for cases 1 and 2.

{The optimal solution of \eqref{eq:prob_lemma} given in \eqref{eq:sol_p} and \eqref{eq:sol_q} along with its derivation shows that the objective~\eqref{eq:obj_generic} is maximized when the resources $q_i$ on $\mathbb{Q}$ links of SUs are allocated in the increasing order of the weights $w_i$ on their corresponding $\mathbb{P}$ links.} Since the subproblem \eqref{eq:fin_eq} is identical to \eqref{eq:prob_lemma}, the SU sum-throughput given by \eqref{eq:Rs_sum_ratio} under the constraints \eqref{eq:mod_sum1}-\eqref{eq:con12} is maximized when the SU with the lowest $w_i = CK \frac{ \gamma_{i\mathrm{h}}}{\gamma_{i\mathrm{p}}} \ln\left(1+ \frac{1}{C} \right)$, i.e., $\frac{ \gamma_{i\mathrm{h}}}{\gamma_{i\mathrm{p}}}$ (since $CK \ln\left(1+ \frac{1}{C} \right)$ is same for all SUs), is prioritized to relay PU data and then the SU with the next lowest $\frac{ \gamma_{i\mathrm{h}}}{\gamma_{i\mathrm{p}}}$. This completes the proof of the first part of Proposition~\ref{Esp_ratio}.

\textit{2)} If the cooperation from SU with the lowest $\frac{ \gamma_{i\mathrm{h}}}{\gamma_{i\mathrm{p}}}$ to relay PU data cannot meet the primary rate constraint even after it has spent its complete harvested energy on the relaying link, then only the SU with the next lowest $\frac{ \gamma_{i\mathrm{h}}}{\gamma_{i\mathrm{p}}}$ is considered to relay PU data, and so on. Thus, at most one SU will spend a fraction of its harvested energy on the relaying link, while other relaying SUs spend their complete harvested energy to relay PU data.
\end{proof}

The following remark highlights the unfairness created among SUs by the STORA scheme.
\begin{remark}
Relaying SUs that spend their complete harvested energy in relaying achieve zero throughput. On the contrary, SUs that do not relay PU data gain spectrum access and use their complete harvested energy to achieve non-zero individual throughput. Thus, the STORA scheme creates unfairness among SUs in terms of individual throughput. 
\end{remark}
\section{Fairness Enhancing Resource Allocation}
\label{sec:fair_schemes}
In this section, we examine three fairness enhancing schemes: (i) \textit{equal time allocation} (ETA), (ii) \textit{minimum throughput maximization} (MTM), and (iii) \textit{proportional time allocation} (PTA).

\subsection{Equal Time Allocation}
The goal is to maximize the SU sum-throughput under the condition of allocating all SUs equal time to access the channel, i.e., $t_{1} = t_{2}  = \dotsc = t_{N} = t_{\mathrm{eq}}$. Unlike STORA scheme, the condition of equal time to each SU  ensures that all SUs get an opportunity to access the channel.
For a fixed decoding set $\mathcal{S_D}$, the problem of SU sum-throughput maximization with equal time allocation is formulated as follows:
\begin{align}
\mathop{\mathrm{maximize}}_{\boldsymbol{E_{\mathrm{s}\mathrm{h}}},\boldsymbol{E_{\mathrm{s}\mathrm{p}}}, t_{\mathrm{e}}, t_{0}, t_{\mathrm{eq}}}& ~~~ \sum_{i=1}^N t_{\mathrm{eq}}\ln\left(1+ \frac{\gamma_{i\mathrm{h}}E_{i\mathrm{h}} }{t_{\mathrm{eq}}} \right)  \nonumber \\
\mathrm{subject~to} &~~~ \eqref{eq:qos} - \eqref{eq:conv1}  \nonumber \\
&~~~ t_{\mathrm{e}}+2t_{0}+Nt_{\mathrm{eq}} \leq 1,  \nonumber \\
 & ~~~\boldsymbol{E_{\mathrm{s}\mathrm{h}}},\boldsymbol{E_{\mathrm{s}\mathrm{p}}}, t_{\mathrm{e}}, t_{0}, t_{\mathrm{eq}} \geq 0.  
\label{eq:eta}
\end{align}
The problem in~\eqref{eq:eta} is a convex optimization problem and Proposition~\ref{prop1} also holds true for \eqref{eq:eta}. The STORA and ETA problems differ only in their allocation of the access time. Thus, we can use the same approach as that of STORA scheme to solve~\eqref{eq:eta} with the optimal $t_{\mathrm{eq}}$ given in the following proposition.
\begin{proposition}
\label{eta_teq}
The optimal spectrum access time $t^{*}_{\mathrm{eq}}$ of an SU is the solution of the equation
\begin{equation}
\sum_{i=1}^{N} \ln\left(1+ \frac{\gamma_{i\mathrm{h}}E^{*}_{i\mathrm{h}} }{t^{*}_{\mathrm{eq}}} \right) - \frac{\frac{\gamma_{i\mathrm{h}}E^{*}_{i\mathrm{h}} }{t^{*}_{\mathrm{eq}}}  }{1+\frac{\gamma_{i\mathrm{h}}E^{*}_{i\mathrm{h}} }{t^{*}_{\mathrm{eq}}}} = N\nu^{*}.
\label{eq:teq}
\end{equation}
\end{proposition}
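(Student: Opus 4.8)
The plan is to obtain $t^{*}_{\mathrm{eq}}$ from the Karush--Kuhn--Tucker (KKT) stationarity condition of the convex problem~\eqref{eq:eta}, mirroring the derivation of the individual access times $t^{*}_{i}$ in the STORA scheme (cf.~\eqref{eq:ta}). First I would form the Lagrangian of~\eqref{eq:eta}, associating dual variables $\lambda$, $\kappa$, $\boldsymbol{\mu}=[\mu_1,\dots,\mu_N]$, and $\nu$ with the primary rate constraint~\eqref{eq:qos}, the decoding constraint~\eqref{eq:dec_con2} in its reduced form~\eqref{eq:dec_con3}, the energy neutrality constraints~\eqref{eq:conv1}, and the total time constraint $t_{\mathrm{e}}+2t_0+Nt_{\mathrm{eq}}\le 1$, respectively:
\begin{align}
\mathcal{L} &= \sum_{i=1}^N t_{\mathrm{eq}}\ln\!\left(1+\frac{\gamma_{i\mathrm{h}}E_{i\mathrm{h}}}{t_{\mathrm{eq}}}\right) - \lambda\big(\bar{R}_{\mathrm{p}} - R_{\mathrm{p,c}}(\boldsymbol{E_{\mathrm{s}\mathrm{p}}},t_{\mathrm{e}},t_0)\big) \nonumber\\
&\quad - \kappa\big(\bar{R}_{\mathrm{p}} - Q_1 t_{\mathrm{e}} - Q_2 t_0\big) - \sum_{i=1}^N \mu_i\big(E_{i\mathrm{p}}+E_{i\mathrm{h}} - (P_{\mathrm{e}}+\theta_i)t_{\mathrm{e}}\big) \nonumber\\
&\quad - \nu\big(t_{\mathrm{e}}+2t_0+Nt_{\mathrm{eq}}-1\big).
\end{align}
Since~\eqref{eq:eta} is convex and satisfies Slater's condition, the KKT conditions are necessary and sufficient. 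At the optimum $t^{*}_{\mathrm{eq}}>0$ (otherwise the objective is zero, which is suboptimal whenever the secondary network can be awarded any positive access time), so the multiplier of $t_{\mathrm{eq}}\ge 0$ vanishes by complementary slackness and stationarity reduces to $\partial\mathcal{L}/\partial t_{\mathrm{eq}}=0$ evaluated at the optimal point.

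Next I would carry out the differentiation. Among all constraints, only the objective and the term $-\nu N t_{\mathrm{eq}}$ depend on $t_{\mathrm{eq}}$ ($R_{\mathrm{p,c}}$, the decoding constraint, and the energy neutrality constraints involve only $t_{\mathrm{e}},t_0$ and the energies). Using the elementary identity $\frac{d}{dt}\!\left[t\ln(1+a/t)\right] = \ln(1+a/t) - \frac{a/t}{1+a/t}$ for each summand with $a=\gamma_{i\mathrm{h}}E^{*}_{i\mathrm{h}}$, the stationarity condition becomes
\begin{equation}
\sum_{i=1}^N\!\left[\ln\!\left(1+\frac{\gamma_{i\mathrm{h}}E^{*}_{i\mathrm{h}}}{t^{*}_{\mathrm{eq}}}\right) - \frac{\gamma_{i\mathrm{h}}E^{*}_{i\mathrm{h}}/t^{*}_{\mathrm{eq}}}{1+\gamma_{i\mathrm{h}}E^{*}_{i\mathrm{h}}/t^{*}_{\mathrm{eq}}}\right] = N\nu^{*},
\end{equation}
which is exactly~\eqref{eq:teq}. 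The remaining optimal quantities $E^{*}_{i\mathrm{h}}$, $E^{*}_{i\mathrm{p}}$, $t^{*}_{\mathrm{e}}$, $t^{*}_0$ and the duals are produced by the same two-level block coordinate descent procedure used for~\eqref{prob1}: the energy- and time-allocation subproblems retain their form with $t_i$ everywhere replaced by the common variable $t_{\mathrm{eq}}$, and Proposition~\ref{prop1} still forces~\eqref{eq:qos}, \eqref{eq:conv1} and the total-time constraint to hold with equality, so in particular $t^{*}_{\mathrm{e}} = 1 - 2t^{*}_0 - Nt^{*}_{\mathrm{eq}}$ and $t^{*}_0$ follows from the analogue of~\eqref{eq:t0}.

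I expect the only genuine subtlety to be that~\eqref{eq:teq} is an \emph{implicit} equation rather than a closed form. Unlike~\eqref{eq:ta}, where $t_i$ appears alone so the scalar stationarity equation inverts through the Lambert $W$ function, here all access times are tied to the single $t_{\mathrm{eq}}$, so stationarity is a sum of $N$ such terms and cannot be solved explicitly; one simply records that the left-hand side is continuous and strictly decreasing in $t_{\mathrm{eq}}$ for fixed $E^{*}_{i\mathrm{h}}$ (since $g(x)=\ln(1+x)-\frac{x}{1+x}$ is increasing in $x=\gamma_{i\mathrm{h}}E^{*}_{i\mathrm{h}}/t_{\mathrm{eq}}$), which justifies referring to $t^{*}_{\mathrm{eq}}$ as ``the solution of the equation.'' Minor bookkeeping items are the non-degeneracy argument for $t^{*}_{\mathrm{eq}}>0$ and the remark that feasibility of~\eqref{eq:eta} for the given $\mathcal{S_D}$ is inherited from the feasibility analysis already performed for~\eqref{prob1}.
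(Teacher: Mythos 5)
Your proposal is correct and follows essentially the same route as the paper: form the Lagrangian of the ETA problem (the paper uses the time-allocation subproblem's Lagrangian $\mathcal{L}_3$, which omits the energy-neutrality terms since they do not involve $t_{\mathrm{eq}}$, exactly as you observe), differentiate with respect to $t_{\mathrm{eq}}$ using $\frac{d}{dt}\bigl[t\ln(1+a/t)\bigr]=\ln(1+a/t)-\frac{a/t}{1+a/t}$, and set the stationarity condition to zero to obtain \eqref{eq:teq}. Your added remarks on $t^{*}_{\mathrm{eq}}>0$ and on monotonicity of the left-hand side are sound extras but not needed beyond what the paper states.
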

\begin{proof}
See Appendix~\ref{app:prop_eta}.
\end{proof}
As~\eqref{eq:teq} shows, we cannot obtain the closed-form expression for $t^{*}_{\mathrm{eq}}$. But, we can find $t^{*}_{\mathrm{eq}}$ numerically using bisection method. Interestingly, an asymptotic solution in closed-form exists when HAP experiences high received SNR conditions, i.e., when $\frac{\gamma_{i\mathrm{h}}E^{*}_{i\mathrm{h}} }{t^{*}_{\mathrm{eq}}} \gg 1$. In this case, \eqref{eq:teq} can be written as
 \begin{equation}
\sum_{i=1}^{N} \ln\left(\frac{\gamma_{i\mathrm{h}}E^{*}_{i\mathrm{h}} }{t^{*}_{\mathrm{eq}}} \right) = N\nu^{*},
\label{eq:teq_approx}
\end{equation}
which yields $ \sum_{i=1}^{N} \ln\left({\gamma_{i\mathrm{h}}E^{*}_{i\mathrm{h}} } \right) = N\nu^{*} + N\ln(t^*_{\mathrm{eq}})$. Rearranging terms, we get
\begin{align}
 t^*_{\mathrm{eq}} & = \frac{1}{\exp(\nu^{*})}\left(\prod_{i= 1}^{N} \gamma_{i\mathrm{h}}E^{*}_{i\mathrm{h}}\right)^{\frac{1}{N}}.
 \label{eq:teq_high_snr1}
 \end{align}
Thus, \eqref{eq:teq_high_snr1} shows that, under high SNR conditions, the time allocated for spectrum access to an SU is proportional to the geometric mean of the received energies at HAP. Substituting \eqref{eq:teq_high_snr1} in \eqref{eq:teq_approx}, the objective of~\eqref{eq:eta}, i.e., SU sum-throughout, can be written as
\begin{equation}
C_{2} \left(\prod_{i= 1}^{N} \gamma_{i\mathrm{h}}E^{*}_{i\mathrm{h}}\right)^{\frac{1}{N}} \ln\left(\frac{1}{C_{2}}\left(\prod_{i= 1}^{N} \gamma_{i\mathrm{h}}E^{*}_{i\mathrm{h}}\right)^{\frac{N-1}{N}}\right),
\label{eq:mod_obj_eq}
\end{equation}
where $C_{2} = \frac{1}{\exp(\nu^{*})}$. Thus, even if a single SU is allocated zero energy on the link to HAP, the objective \eqref{eq:mod_obj_eq} becomes zero. Therefore, each SU attains zero throughput. This implies that no SU should spend full amount of harvested energy relaying PU data in order to achieve non-zero SU sum-throughput.

Let us now consider the case where SUs experience bad channel conditions to HAP, i.e., a low SNR scenario.  Using the approximation $\ln(1+x) \approx x$ for $x \ll 1 $, the objective in~\eqref{eq:eta} becomes $\sum_{i=1}^{N} \gamma_{i\mathrm{h}}E_{i\mathrm{h}}$. Then, using optimality conditions given in Proposition \ref{prop1}, the energy allocation subproblem is written as
\begin{align}
 \mathop{\mathrm{maximize}}_{\boldsymbol{E_{\mathrm{s}\mathrm{h}}},\boldsymbol{E_{\mathrm{s}\mathrm{p}}}} &~~~ \sum_{i=1}^{N} \gamma_{i\mathrm{h}}E_{i\mathrm{h}} \nonumber \\
\mathrm{subject~to}&~~~ \eqref{eq:mod_sum}, \eqref{eq:mod_conv1}, \boldsymbol{E_{\mathrm{s}\mathrm{h}}},\boldsymbol{E_{\mathrm{s}\mathrm{p}}} \geq 0,
\end{align}
which is equivalent to the problem in \eqref{eq:prob_lemma}.
Then, the energy allocation in ETA scheme in low SNR regime is same as the energy allocation in STORA scheme, i.e., SUs are prioritized to relay PU data in the increasing order of the ratio $\frac{\gamma_{i\mathrm{h}}}{\gamma_{i\mathrm{p}}}$, and at most one SU spends fractional energy on the relaying link. This implies that an SU might not have residual energy for spectrum access, despite being allocated time for it. Thus, in low SNR regime, even though the relay and energy allocation is same as that of STORA scheme, the unused time due to the lack of energy causes lower sum-throughput in ETA scheme compared to STORA scheme. {Since the problem in \eqref{eq:eta} has the same structure as the problem \eqref{prob1} for STORA scheme, we can use Algorithms~\ref{main_alg} and \ref{alg1} to compute the optimal solution of  \eqref{eq:eta}, which has the same complexity as STORA scheme, i.e., $\mathcal{O}(N^2)$.}

\subsection{Minimum Throughput Maximization (MTM)}
\label{sec:mmf}
The aim is to guarantee minimum throughput to each SU and yet maximize the SU sum-throughput. For a fixed $\mathcal{S_{D}}$, the throughput maximization problem for MTM scheme is given by
\begin{subequations}
\begin{align}
\mathop{\mathrm{maximize}}_{\boldsymbol{E_{\mathrm{s}\mathrm{h}}},\boldsymbol{E_{\mathrm{s}\mathrm{p}}}, \boldsymbol{t}, R_{\min}}&~~~R_{\min}  \nonumber \\
\mathrm{subject~to} & ~~~ t_{{i}}\ln\left(1+ \frac{\gamma_{i\mathrm{h}}E_{i\mathrm{h}} }{t_{i}} \right) \geq R_{\min}, \label{eq:mmf_con} \\
&~~~ \eqref{eq:qos} - \eqref{eq:last},  R_{\min} \geq 0.
\end{align}
\label{eq:mmf}
\end{subequations}
Since the term $t_{{i}}\ln\left(1+ \frac{\gamma_{i\mathrm{h}}E_{i\mathrm{h}} }{t_{i}} \right)$ in \eqref{eq:mmf_con} monotonically increases with $t_i$ and $E_{i\mathrm{h}}$, \eqref{eq:mmf_con} implies that the maximum $R_{\min}$ is obtained when rates of all SUs are equal. Thus, unless $ R_{\min} = 0$, all SUs get a chance to transmit on the direct link to HAP and no SU spends full amount of harvested energy on the relaying link. The SUs that have not decoded PU data successfully can access the spectrum, and thus Proposition~\ref{prop1} holds true. 

The problem~\eqref{eq:mmf} is convex and Slater's condition~\cite[Chapter 3]{bertesekas} holds true. Thus, the analytical expressions of the optimization variables can be obtained using KKT conditions.
Similar to the STORA problem, for a fixed $\mathcal{S_D}$, this problem can be divided into two levels, one for solving the energy allocation and time allocation on secondary links and other for solving $t_{\mathrm{e}}$ and $R_{\min}$ as a master primal problem. 
The Lagrangian dual of the energy allocation subproblem for a fixed $(R_{\min}, t_{\mathrm{e}})$ is given by
\begin{align}
\underset{\boldsymbol{E_{\mathrm{s}\mathrm{h}}},\boldsymbol{E_{\mathrm{s}\mathrm{p}}}}{\mathop{\mathrm{max}}} \hspace{2mm} & R_{\min} - \sum_{i=1}^N \rho_{i}\left(R_{\min} - R_{i}(E_{i\mathrm{h}}, t_{i})\right) \nonumber \\
&- \lambda \left( \bar{R}_{\mathrm{p}} -  R_{\mathrm{p,c}}(\boldsymbol{E_{\mathrm{sp}}} , t_{0})\right) \nonumber \\
&- \sum_{i = 1}^{N} \mu_{i}(E_{i\mathrm{p}} + E_{i\mathrm{h}} - (P_{\mathrm{e}} + \theta_i) t_{\mathrm{e}}), 
\label{eq:dual1_mmf}
\end{align}

\noindent where $\rho_{i}$ is the dual variable associated with the secondary rate constraint~\eqref{eq:mmf_con} and
dual of the time allocation subproblem is
\begin{align}
\underset{t_{0}, \boldsymbol{t_{\mathrm{a}}}}{\mathop{\mathrm{max}}} \hspace{2mm} &  R_{\min} - \sum_{i=1}^N \rho_{i}\left(R_{\min} - R_{i}(E_{i\mathrm{h}}, t_{i})\right) - \lambda \bar{R}_{\mathrm{p}}  \nonumber \\
& + \lambda R_{\mathrm{p,c}}(\boldsymbol{E_{\mathrm{sp}}} , t_{\mathrm{e}}, t_{0})  
-  \kappa \left( \bar{R}_{\mathrm{p}} - Q_1 t_{\mathrm{e}} - Q_2 t_{0} \right) \nonumber \\
& -  \nu (t_{\mathrm{e}}+2t_{0}+\sum_{i=1}^{N}t_{i}-1).
\label{eq:dual2_mmf}
\end{align}
The dual variables $(\lambda, \kappa, \nu)$ can be computed using their gradients given by \eqref{eq:grad_lam}-\eqref{eq:grad_nu}, while the dual variable $\boldsymbol{\mu}$ is obtained using bisection method. The dual variable $\boldsymbol{\rho} = [\rho_1, \rho_2, \dotsc, \rho_N]$ minimizing the dual function \eqref{eq:dual2_mmf} can be found using its gradient given by
$ g_{\rho_{i}} = R_{i}(E_{i\mathrm{h}}, t_{i}) - R_{\min}$.
The master primal problem in charge of updating $t_{\mathrm{e}}$ and $R_{\min}$ can be solved using subgradient based methods like ellipsoid method~\cite[Chapter 1]{bertesekas} with their subgradients given by \eqref{eq:grad_te} and $g_{R_{\min}} = 1 - \sum_{i=1}^N \rho_i$, respectively. The Levels 1 and 2 are executed in a similar fashion as that of Algorithm $\ref{alg1}$ with additional dual update using $g_{{\boldsymbol{\rho}}}$ in Level 1 and primal update using $g_{R_{\min}}$ in Level 2. The decoding set that maximizes the sum-throughput can be found using the strategy presented in Section \ref{sec:opt_dec_set} and Algorithm~\ref{main_alg}.

The following proposition provides the analytical expressions corresponding to the optimal solution.
\begin{proposition}
The optimal energy and time allocation for the MTM scheme is given by
\begin{align*}
& [E^{*}_{\mathrm{s}_{i}\mathrm{h}}, ~E^{*}_{i\mathrm{p}}] =  \left[t^{*}_{i}\left[ \frac{\rho^{*}_{i}}{\mu^{*}_{i}} - \frac{1}{\gamma_{i\mathrm{h}}}\right]^{+}, \eqref{eq:Ep} \right], \nonumber \\
& [t^{*}_{\mathrm{e}}, ~t^{*}_{0}, ~t^{*}_{i}] = \!\left[\eqref{eq:te}, ~\eqref{eq:t0}, ~\frac{{\gamma_{i\mathrm{h}}}E^{*}_{i\mathrm{h}}}{\exp \left({ \mathcal{W}\left( \frac{-1}{\frac{\nu^{*}}{\rho^*_{i}} + 1}  \right) + \frac{\nu^{*}}{\rho^*_{i}} +1 }\right)-1} \right]\!\!.
\end{align*}
\label{opt_soln_mmf}
\end{proposition}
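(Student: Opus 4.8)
The plan is to apply the Karush--Kuhn--Tucker (KKT) conditions to the convex program~\eqref{eq:mmf}, exploiting the fact that it differs from the STORA problem~\eqref{prob1} only in its objective and in the extra per-user constraints~\eqref{eq:mmf_con}. First I would form the Lagrangian of~\eqref{eq:mmf}, attaching multipliers $\rho_i$ to~\eqref{eq:mmf_con}, $\lambda$ to the primary-rate constraint~\eqref{eq:qos}, $\kappa$ to the decoding constraint~\eqref{eq:dec_con2}, $\mu_i$ to the energy-neutrality constraints~\eqref{eq:conv1}, and $\nu$ to the total-time constraint~\eqref{eq:conv2}; for fixed $t_{\mathrm e}$ and $R_{\min}$ this is exactly the sum of the objectives in~\eqref{eq:dual1_mmf} and~\eqref{eq:dual2_mmf} together with the total-time term. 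The key structural observation is that the dependence of this Lagrangian on the allocation variables $(\boldsymbol{E_{\mathrm{sh}}},\boldsymbol{t_{\mathrm a}})$ enters only through $\sum_i\rho_i R_i(E_{i\mathrm h},t_i)$, i.e.\ through the STORA objective $\sum_i R_i$ with each term reweighted by $\rho_i$ (while the $R_{\min}$ term feeds the master update $g_{R_{\min}}=1-\sum_i\rho_i$).

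With that observation the derivation follows the proof of Proposition~\ref{lemma1} essentially verbatim. The variables $E_{i\mathrm p}$, $t_0$ and $t_{\mathrm e}$ do not appear in any $R_i$, and the constraints~\eqref{eq:qos}, \eqref{eq:dec_con2}, \eqref{eq:conv1}, \eqref{eq:conv2} are structurally identical to those in~\eqref{prob1}; hence, invoking Proposition~\ref{prop1} (so that~\eqref{eq:qos}, \eqref{eq:conv1}, \eqref{eq:conv2} hold with equality), their stationarity conditions reproduce the expressions~\eqref{eq:Ep} for $E^*_{i\mathrm p}$, \eqref{eq:t0} for $t^*_0$, and \eqref{eq:te} for $t^*_{\mathrm e}$ unchanged. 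For $E_{i\mathrm h}$, stationarity now reads $\rho_i^*\gamma_{i\mathrm h}/(1+\gamma_{i\mathrm h}E_{i\mathrm h}/t_i)=\mu_i^*$, which together with $E_{i\mathrm h}\ge 0$ and complementary slackness gives $E^*_{i\mathrm h}=t^*_i[\rho^*_i/\mu^*_i-1/\gamma_{i\mathrm h}]^+$, i.e.\ the STORA formula~\eqref{eq:Eb} multiplied by $\rho^*_i$.

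The only genuinely new computation is the stationarity condition for $t_i$. Writing $x_i=\gamma_{i\mathrm h}E_{i\mathrm h}/t_i$ and using $\partial[t_i\ln(1+x_i)]/\partial t_i=\ln(1+x_i)-x_i/(1+x_i)$, this condition becomes $\rho_i^*\big(\ln(1+x_i)-x_i/(1+x_i)\big)=\nu^*$ --- exactly the STORA condition with $\nu^*$ replaced by $\nu^*/\rho_i^*$. Solving it is the same transcendental problem met in the proof of Proposition~\ref{lemma1}: the substitution $u_i=1+x_i$ turns it into $\ln u_i + 1/u_i = \nu^*/\rho^*_i + 1$, which the Lambert~$\mathcal{W}$ function solves in closed form, yielding $t^*_i = \gamma_{i\mathrm h}E^*_{i\mathrm h}/\big(\exp(\mathcal{W}(-1/(\nu^*/\rho^*_i+1))+\nu^*/\rho^*_i+1)-1\big)$. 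Collecting the five stationarity outcomes gives the claimed expressions, and the coupling of the two subproblems through the shared multipliers $\lambda,\nu$ and the master variables $t_{\mathrm e},R_{\min}$ is resolved by the same two-level iteration as Algorithm~\ref{alg1}.

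I expect the main difficulty to be bookkeeping rather than depth: one must check that attaching the $\rho_i$ multipliers leaves the $t_0$-, $t_{\mathrm e}$-, and $E_{i\mathrm p}$-stationarity conditions literally the same as in Proposition~\ref{lemma1} (so that those expressions can be imported term by term), keep track of the $[\cdot]^+$ operators arising from complementary slackness on the non-negativity constraints, and --- since $\mathcal{G}(t_{\mathrm e})$ and the $R_{\min}$-problem are nonsmooth --- argue the master-level updates via subgradients exactly as in Section~\ref{sec:opt1}.
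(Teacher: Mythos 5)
Your proposal is correct and follows essentially the same route as the paper's proof: differentiate the MTM Lagrangian (the duals \eqref{eq:dual1_mmf}--\eqref{eq:dual2_mmf}) with respect to $E_{i\mathrm{h}}$ and $t_i$ to get the $\rho_i$-weighted stationarity conditions, note that the conditions in $E_{i\mathrm{p}}$, $t_0$ are unchanged from STORA so \eqref{eq:Ep}, \eqref{eq:t0} carry over, obtain $t_{\mathrm{e}}$ from the total-time equality \eqref{eq:te} via Proposition~\ref{prop1}, and rearrange to get the stated closed forms with $\nu^*$ replaced by $\nu^*/\rho_i^*$. The only cosmetic difference is that you spell out the Lambert-$\mathcal{W}$ substitution and the master-level subgradient updates explicitly, which the paper handles by reference to Proposition~\ref{lemma1} and Section~\ref{sec:mmf}.
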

\begin{proof}
See Appendix \ref{app_prop_mmf}.
\end{proof}
{\textit{Complexity of MTM scheme}: 
Algorithm~\ref{alg1} for MTM scheme needs to update $N+3$ dual variables. Updating these variables using the interior point method has complexity $\mathcal{O}(N^3)$~\cite[Chapter 4]{bertesekas}, which dominates the complexity of Algorithm~\ref{alg1} for MTM scheme. Including the complexity of Algorithm~\ref{main_alg}, the complexity to calculate the optimal solution for MTM scheme is $\mathcal{O}(N^4)$.}
\subsection{Proportional Time Allocation (PTA)}
Recall that the cooperation between PU and SUs improves the received primary SNR and reduces the PU transmission time for a target primary rate. This, in turn, creates an opportunity for SUs to transmit their own data to HAP. Thus, to divide the reward-generated access time among SUs, one of the plausible criteria is to allocate the access time to an SU proportional to its contribution in the received primary SNR. Then, in this case, to obtain an opportunity to access the spectrum, an SU has to relay PU data. Now, to relay PU data, the SU should decode PU data successfully. Based on this \textit{decode-relay-then-access} condition, the optimization problem for the PTA scheme is formulated for a fixed $\mathcal{S_D}$ as
\begin{subequations}
\begin{align}
\mathop{\mathrm{maximize}}_{\boldsymbol{E_{\mathrm{s}\mathrm{h}}},\boldsymbol{E_{\mathrm{s}\mathrm{p}}}, \boldsymbol{t}} &~~~ \sum_{i=1}^N t_{i}\ln\left(1+ \frac{\gamma_{i\mathrm{h}}E_{i\mathrm{h}} }{t_{i}} \right)  \label{eq:obj22} \\
\mathrm{subject~to}&~~~ t_{i} = \sum_{j = 1}^{N} t_{j} \frac{\gamma_{i\mathrm{p}}E_{i\mathrm{p}}}{\sum_{j =1}^{N}\gamma_{j\mathrm{p}}E_{j\mathrm{p}}}, \label{eq:prop_time} \\ 
&~~~\eqref{eq:qos} - \eqref{eq:last}, \label{eq:common_const}
\end{align}
\label{eq:prob4}
\end{subequations}\vspace*{-4mm}

\noindent where \eqref{eq:prop_time} is the proportional time allocation constraint. Let us define the ratio
\begin{equation}
 \frac{\sum_{j = 1}^{N} t_{j}}{\sum_{j =1}^{N}\gamma_{j\mathrm{p}}E_{j\mathrm{p}}} = \frac{t_{i}}{\gamma_{i\mathrm{p}}E_{i\mathrm{p}}}  = \zeta,
 \label{eq:prop_ratio}
\end{equation}
where $\zeta$ is a positive constant whose optimal value is unknown. Equation \eqref{eq:prop_ratio} implies $t_{i} = \zeta \gamma_{i\mathrm{p}}E_{i\mathrm{p}}$ and $\sum_{j = 1}^{N} t_{j} = \zeta \sum_{j =1}^{N}\gamma_{j\mathrm{p}}E_{j\mathrm{p}}$; nevertheless, the former includes the latter. Therefore, the constraint~\eqref{eq:prop_time} can be replaced by
\begin{equation}
t_{i} = \zeta \gamma_{i\mathrm{p}}E_{i\mathrm{p}}. 
\label{eq:prop_fair_mod} 
\end{equation}
Observe that the product of optimization variables $\zeta$ and $E_{i\mathrm{p}}$, for all $i$, in \eqref{eq:prop_fair_mod} makes the problem \eqref{eq:prob4} non-convex. However, for a given $\zeta$, the constraint \eqref{eq:prop_fair_mod} is affine. The other constraints which are independent of $\zeta$ are either concave or affine as shown for STORA, ETA, and MTM schemes. Therefore, for a given $\zeta$, the problem \eqref{eq:prob4} is a convex optimization problem. The optimal decoding set $\mathcal{S_{D}}$ can be obtained using the strategy proposed in section \ref{sec:opt_dec_set} and Algorithm~\ref{main_alg}. Using \eqref{eq:prop_fair_mod}, for a fixed $(\mathcal{S_D}, \zeta, t_{\mathrm{e}})$, the energy allocation and time allocation subproblems are given by
\begin{align}
\mathtt{SP3}:~ \mathop{\mathrm{maximize}}_{\boldsymbol{{E}_{{\mathrm{s}\mathrm{h}}}}, \boldsymbol{{E}_{{\mathrm{s}\mathrm{p}}}}} &~~~\sum_{i=1}^N R_{i}(E_{i\mathrm{h}}, \zeta \gamma_{i\mathrm{p}}E_{i\mathrm{p}} ) \nonumber \\
\mathrm{subject~to} &~~~ \eqref{eq:qos}, \eqref{eq:conv1}, \eqref{eq:last}
\label{eq:prop_sp1}
\end{align}
and
\begin{align}
\mathtt{SP4}:~ \mathop{\mathrm{maximize}}_{t_{0}}&~~~ \sum_{i=1}^N R_{i}(E_{i\mathrm{h}}, \zeta \gamma_{i\mathrm{p}}E_{i\mathrm{p}} ) \nonumber \\
\mathrm{subject~to}&~~~ \eqref{eq:qos}, \eqref{eq:dec_con2}, \nonumber \\
& ~~~  t_{\mathrm{e}} + 2t_{0} + \sum_{i=1}^{N}\zeta \gamma_{i\mathrm{p}}E_{i\mathrm{p}} \leq 1, \nonumber \\
&~~~ t_{0} \geq 0,
\label{eq:prop_sp2} 
\end{align}
respectively. The time allocation subproblem $\mathtt{SP4}$ reduces to a feasibility problem to solve for $t_{0}$ than solving both $(\boldsymbol{t_{\mathrm{a}}}, t_{0})$, and the master problem is responsible for updating $t_{\mathrm{e}}$ as that of the STORA scheme. 
For a given $\zeta$, the subproblems $\mathtt{SP3}$ and $\mathtt{SP4}$, and the master problem are solved iteratively in a similar manner as that of in Algorithm \ref{alg1}. Since strong duality holds for a fixed $\zeta$, the optimization variables can be found using KKT conditions as presented in Proposition \ref{opt_soln_prop}.
\noindent To gain insights on the effect of $\zeta$ on the SU sum-throughput, we obtain the Lagrangian of the problem \eqref{eq:prob4} 
\begin{align}
\mathcal{L}_{4}  &=  \sum_{i=1}^N R_{i}(E_{i\mathrm{h}}, \zeta\gamma_{i\mathrm{p}}E_{i\mathrm{p}} ) 
 - \lambda \left(\bar{R}_{\mathrm{p}} - R_{\mathrm{p,c}}\left(\boldsymbol{E_{\mathrm{s}\mathrm{p}}} ,t_{\mathrm{e}},  t_{0}\right) \right) \nonumber \\
 & - \sum_{i = 1}^{N} \mu_{i}(E_{i\mathrm{p}} + E_{i\mathrm{h}} - (P_{\mathrm{e}} + \theta_i)t_{\mathrm{e}}) \nonumber \\
&   - \kappa \left( \bar{R}_{\mathrm{p}} - Q_1 t_{\mathrm{e}} - Q_2 t_{0} \right) -\nu (t_{\mathrm{e}} + 2t_{0}+\sum_{i=1}^{N}\zeta\gamma_{i\mathrm{p}}E_{i\mathrm{p}}-1).
\end{align}
The Hessian of $\mathcal{L}_{4}$ with respect to $\zeta$ is
\begin{align}
\frac{\partial^2 \mathcal{L}_{4}}{\partial \zeta^2} = -\sum_{i=1}^{N}\frac{\gamma^2_{i\mathrm{h}}E^2_{i\mathrm{h}}}{\zeta^3\left( 1 + \frac{\gamma_{i\mathrm{h}}E_{i\mathrm{h}}}{\gamma_{i\mathrm{p}}E_{i\mathrm{p}}}\right)^{2} \gamma_{i\mathrm{p}}E_{i\mathrm{p}}} \leq 0 .
\end{align}
This implies that $\mathcal{L}_{4}$ is a concave function of $\zeta$. Then, the optimal value of $\zeta$ can be found using one dimensional search methods like golden section search~\cite[Appendix C]{bertesekas} over $\zeta \geq 0$. 

We give the analytical expressions corresponding to the optimal solution in the following proposition.
\begin{proposition}
The optimal solution of the proportional time allocation (PTA) scheme is
\begin{align}
\left[E^{*}_{i\mathrm{h}}, ~E^{*}_{i\mathrm{p}}\right] & = \left[\zeta^* \gamma_{i\mathrm{p}}E^{*}_{i\mathrm{p}}\left[\frac{1}{\mu_{i}} - \frac{1}{\gamma^{*}_{i\mathrm{h}}}\right]^+,  ~\mathcal{E}^{*}_{i}\right],\nonumber \\
\! \left[t^{*}_{\mathrm{e}}, ~t^{*}_{0}, ~t^*_{i} \right] & = \left[1 - 2t^{*}_{0} - \sum_{i=1}^{N} \zeta^{*}\gamma_{i\mathrm{p}}E^{*}_{i\mathrm{p}}, ~\eqref{eq:t0}, ~\zeta^* \gamma_{i\mathrm{p}}E^*_{i\mathrm{p}}\right], \nonumber 
\end{align}
where $\mathcal{E}^{*}_{i}$ is the solution of
\begin{align}
\ln\left(1+{z^{*}_{i}}\right) - \frac{z^{*}_{i}}{1+z^{*}_{i}}& = \frac{\lambda^{*}/\zeta^{*}}{1+\gamma_{\mathrm{p}}+\frac{\sum_{i=1}^{N}\gamma_{i\mathrm{p}}E^{*}_{i\mathrm{p}}}{t^{*}_{0}}} + \frac{\mu^{*}_{i}}{\zeta^*\gamma_{i\mathrm{p}}}+ \frac{\nu^{*}}{\gamma_{i\mathrm{p}}}, \nonumber
\end{align}
\label{opt_soln_prop}
with $z^{*}_{i} = \frac{\gamma_{i\mathrm{h}}E^{*}_{i\mathrm{h}}}{\zeta^{*}\gamma_{i\mathrm{p}}E^{*}_{i\mathrm{p}}}$ for all $\mathrm{SU}_i \in \mathcal{S_D}$.
\end{proposition}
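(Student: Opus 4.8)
The plan is to follow the same route as the proof of Proposition~\ref{lemma1}. Fix the decoding set $\mathcal{S_D}$, the constant $\zeta$, and the harvesting time $t_{\mathrm{e}}$; as argued after~\eqref{eq:prop_fair_mod}, for such a fixed $\zeta$ the problem~\eqref{eq:prob4} is convex and Slater's condition holds, so the KKT conditions for the Lagrangian $\mathcal{L}_{4}$ are necessary and sufficient. I would therefore just solve the stationarity equations together with complementary slackness to obtain the primal variables as functions of the optimal duals, and recover $\zeta^{*}$ at the end through the one-dimensional search over $\zeta\ge0$ justified by $\partial^{2}\mathcal{L}_{4}/\partial\zeta^{2}\le0$.

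First I would dispose of the ``inexpensive'' variables. The reasoning behind Proposition~\ref{prop1} still applies here: any slack in the time budget~\eqref{eq:conv2} can be shifted into $t_{\mathrm{e}}$, which --- because~\eqref{eq:conv1} is active --- enlarges each energy budget $(P_{\mathrm{e}}+\theta_i)t_{\mathrm{e}}$ and hence $E_{i\mathrm{h}}$, strictly increasing the objective; and~\eqref{eq:qos} must be active by the usual ``move energy from the relaying link to the access link'' argument. Hence~\eqref{eq:qos},~\eqref{eq:conv1},~\eqref{eq:conv2} all hold with equality, and tightness of~\eqref{eq:conv2} with $t_i=\zeta\gamma_{i\mathrm{p}}E_{i\mathrm{p}}$ from~\eqref{eq:prop_fair_mod} yields $t^{*}_{\mathrm{e}}=1-2t^{*}_{0}-\sum_i\zeta^{*}\gamma_{i\mathrm{p}}E^{*}_{i\mathrm{p}}$. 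Moreover $t_{0}$ enters $\mathcal{L}_{4}$ only through $R_{\mathrm{p,c}}$, the term $\kappa Q_{2}t_{0}$ and the term $2\nu t_{0}$ --- exactly as in the STORA subproblem $\mathtt{SP2}$ --- so the same reduction of $\partial\mathcal{L}_{4}/\partial t_{0}=0$ to a Lambert-$W$ equation reproduces~\eqref{eq:t0} for $t^{*}_{0}$.

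Next comes stationarity in $E_{i\mathrm{h}}$: since $E_{i\mathrm{h}}$ appears only in the $i$th objective term and in $-\mu_iE_{i\mathrm{h}}$, the condition $\partial\mathcal{L}_{4}/\partial E_{i\mathrm{h}}=0$ reads $\gamma_{i\mathrm{h}}/\big(1+\gamma_{i\mathrm{h}}E_{i\mathrm{h}}/t_i\big)=\mu^{*}_i$, and solving it, substituting $t_i=\zeta^{*}\gamma_{i\mathrm{p}}E^{*}_{i\mathrm{p}}$, and imposing $E_{i\mathrm{h}}\ge0$ through complementary slackness gives $E^{*}_{i\mathrm{h}}=\zeta^{*}\gamma_{i\mathrm{p}}E^{*}_{i\mathrm{p}}\big[1/\mu^{*}_i-1/\gamma_{i\mathrm{h}}\big]^{+}$; the relation $t^{*}_i=\zeta^{*}\gamma_{i\mathrm{p}}E^{*}_{i\mathrm{p}}$ is then just~\eqref{eq:prop_fair_mod} restated.

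The crux --- and the only place PTA genuinely departs from STORA --- is stationarity in $E_{i\mathrm{p}}$. Because $t_i=\zeta\gamma_{i\mathrm{p}}E_{i\mathrm{p}}$, the variable $E_{i\mathrm{p}}$ appears simultaneously inside the objective (as the prefactor \emph{and} inside the logarithm of $R_i$), inside $R_{\mathrm{p,c}}$ via $\sum_j\gamma_{j\mathrm{p}}E_{j\mathrm{p}}$, and in $-\mu_iE_{i\mathrm{p}}$ and $-\nu\zeta\gamma_{i\mathrm{p}}E_{i\mathrm{p}}$. Differentiating the objective term $\zeta\gamma_{i\mathrm{p}}E_{i\mathrm{p}}\ln\!\big(1+\gamma_{i\mathrm{h}}E_{i\mathrm{h}}/(\zeta\gamma_{i\mathrm{p}}E_{i\mathrm{p}})\big)$ produces the perspective-function derivative $\zeta\gamma_{i\mathrm{p}}\big(\ln(1+z^{*}_i)-z^{*}_i/(1+z^{*}_i)\big)$ with $z^{*}_i=\gamma_{i\mathrm{h}}E^{*}_{i\mathrm{h}}/(\zeta^{*}\gamma_{i\mathrm{p}}E^{*}_{i\mathrm{p}})$; collecting this with the three remaining derivatives, dividing through by $\zeta\gamma_{i\mathrm{p}}$, and inserting the equality form of~\eqref{eq:qos} for $R_{\mathrm{p,c}}$ gives the implicit equation for $\mathcal{E}^{*}_i=E^{*}_{i\mathrm{p}}$ stated in the proposition. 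Uniqueness of the root follows because $\phi(z):=\ln(1+z)-z/(1+z)$ is strictly increasing on $[0,\infty)$ with $\phi(0)=0$ and $\phi(z)\to\infty$, so each admissible value of the dual-determined right-hand side pins down a single $z^{*}_i\ge0$ and hence a single $\mathcal{E}^{*}_i$. I expect this last step --- carrying the chain rule through $t_i(E_{i\mathrm{p}})$ across all four occurrences and then handling the $[\,\cdot\,]^{+}$ projections --- to be the main obstacle; the rest is a transcription of the STORA and ETA derivations already in the paper, and once the four stationarity relations are solved the proposition follows.
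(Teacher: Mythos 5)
Your proposal follows essentially the same route as the paper's Appendix D: for fixed $\zeta$ (and fixed $\mathcal{S_D}$, $t_{\mathrm{e}}$) you invoke convexity and Slater's condition, take KKT stationarity of the Lagrangian in $E_{i\mathrm{h}}$ and $E_{i\mathrm{p}}$ with the chain rule through $t_i=\zeta\gamma_{i\mathrm{p}}E_{i\mathrm{p}}$, reuse the STORA expressions for $t^{*}_{0}$ and $t^{*}_{\mathrm{e}}$, and recover $\zeta^{*}$ by the one-dimensional search justified by concavity in $\zeta$, with your extra remarks on constraint tightness and on uniqueness of the root of $\ln(1+z)-z/(1+z)$ only strengthening the argument. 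This is correct; the only caution is that carrying the algebra through literally makes the $\lambda$-term enter as a marginal \emph{benefit} (negative sign on the right-hand side) and yields $\nu^{*}$ rather than $\nu^{*}/\gamma_{i\mathrm{p}}$, so any mismatch with the printed equation traces to sign/factor slips in the paper's own appendix, not to your approach.
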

\begin{proof}
See Appendix \ref{app_prop_pf}.
\end{proof}
{\textit{Complexity of PTA scheme}: 
In PTA scheme, we have additional golden section search for the variable $\zeta$, which has complexity $\mathcal{O}(\frac{1}{\epsilon})$ for the $\epsilon$-accurate solution. Since $\epsilon$ is constant ($10^{-5}$ in our case), the complexity of PTA scheme is same as that of STORA scheme, i.e, $\mathcal{O}(N^2)$.}

\section{Simulation Results and Discussions}
\label{sec:results}
We present simulation results to evaluate the performance of the proposed WP-CCRN scenario for the four resource allocation schemes, namely, STORA, ETA, MTM, and PTA schemes, against different system parameters. We also discuss the fairness achieved by each of the schemes. 

The instantaneous channel power gain of the link between user $i$ and $j$ is denoted as $h_{ij}d_{ij}^{-\beta}$, where $h_{ij}$ is the instantaneous Rayleigh channel power gain with unit mean, $d_{ij}$ is the distance between users $i$ and $j$, and $\beta $ is the path-loss exponent which is assumed to be $\mathrm{3}$. We use $P_{\mathrm{p}} = \mathrm{20~dBm}$, $\eta = \mathrm{0.5}$, $N_0 = -\mathrm{70}~\mathrm{dBm/Hz}$, and {$\Gamma = 8.8$ dB (for uncoded quadrature amplitude modulation)~\cite{goldsmith_book}}. The PT and PR are situated at $\mathrm{50}\mathrm{m}$ from each other. The locations of PT, PR, and HAP are collinear, with HAP at equal distances from PT and PR. Unless otherwise stated, SUs are randomly located around HAP within a circle of radius $\mathrm{10m}$. The results are averaged over $\mathrm{2000}$ channel realizations.

\begin{figure}
\centering
    \subfigure[]{\label{fig:rp_rs_basic}\includegraphics[scale=0.36]{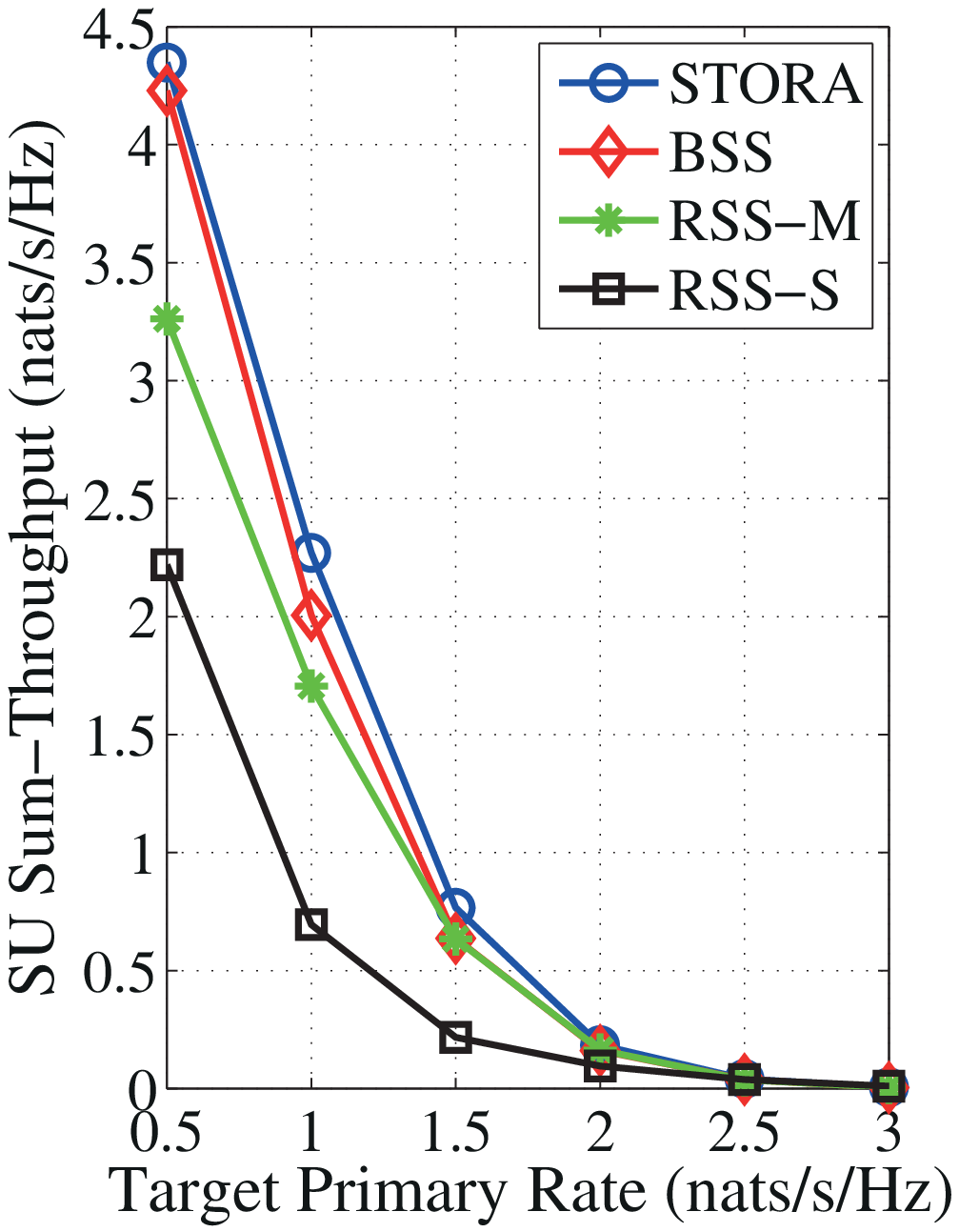}}
    \subfigure[]{\label{fig:rp_rs_coop}\includegraphics[scale=0.36]{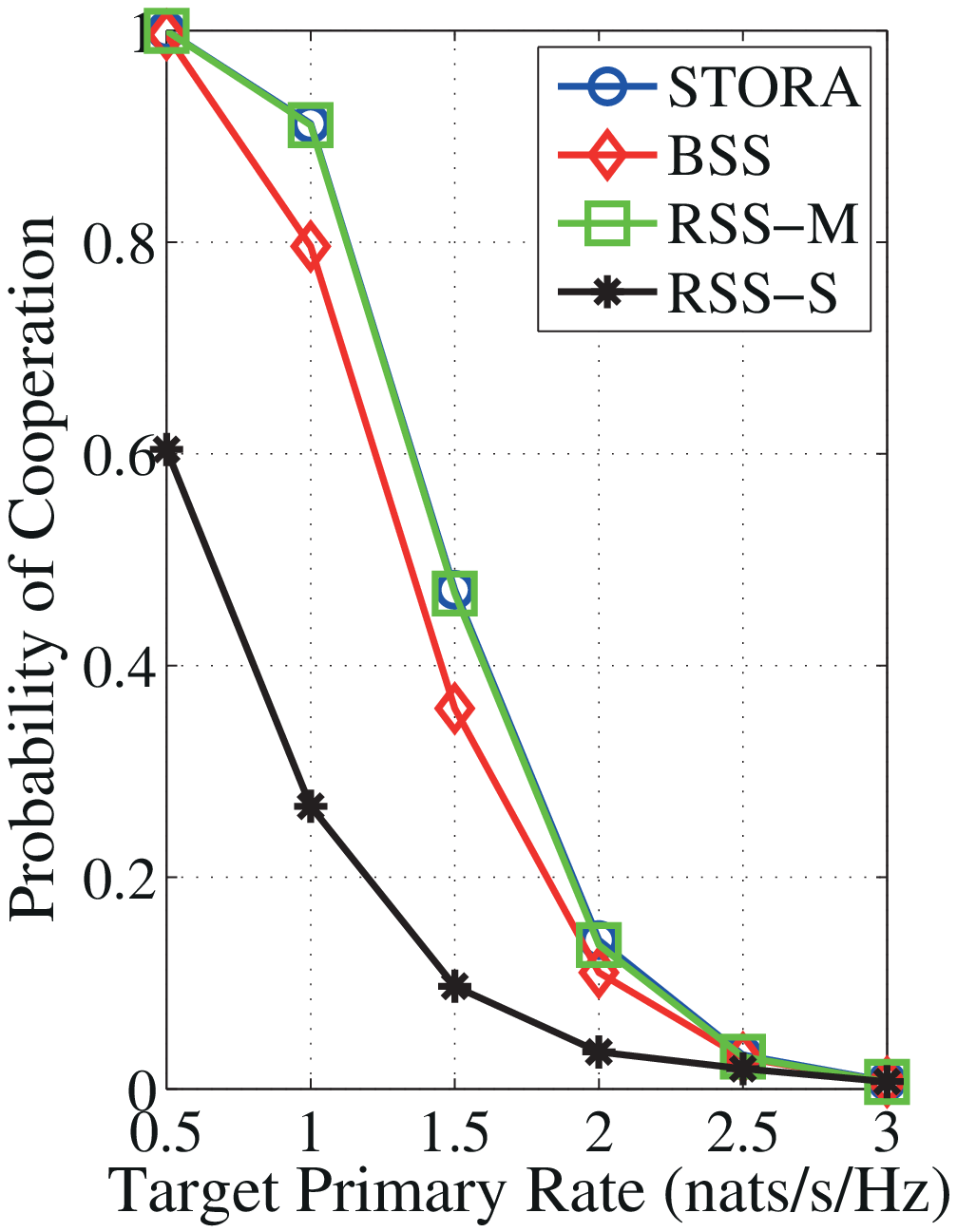}}
        \caption{Effect of target primary rate ($\bar{R}_{\mathrm{p}}$) on the (a) SU sum-throughput, (b) probability of cooperation. $N = \mathrm{4}$ and $P_{\mathrm{e}} = 20~\mathrm{dBm}$.}
      \label{fig:target}
    \end{figure}

\begin{figure*}
\centering
    \subfigure[SU sum-throughput $R_{\mathrm{s,sum}}$]{\label{fig:sum_thr}\includegraphics[scale=0.29]{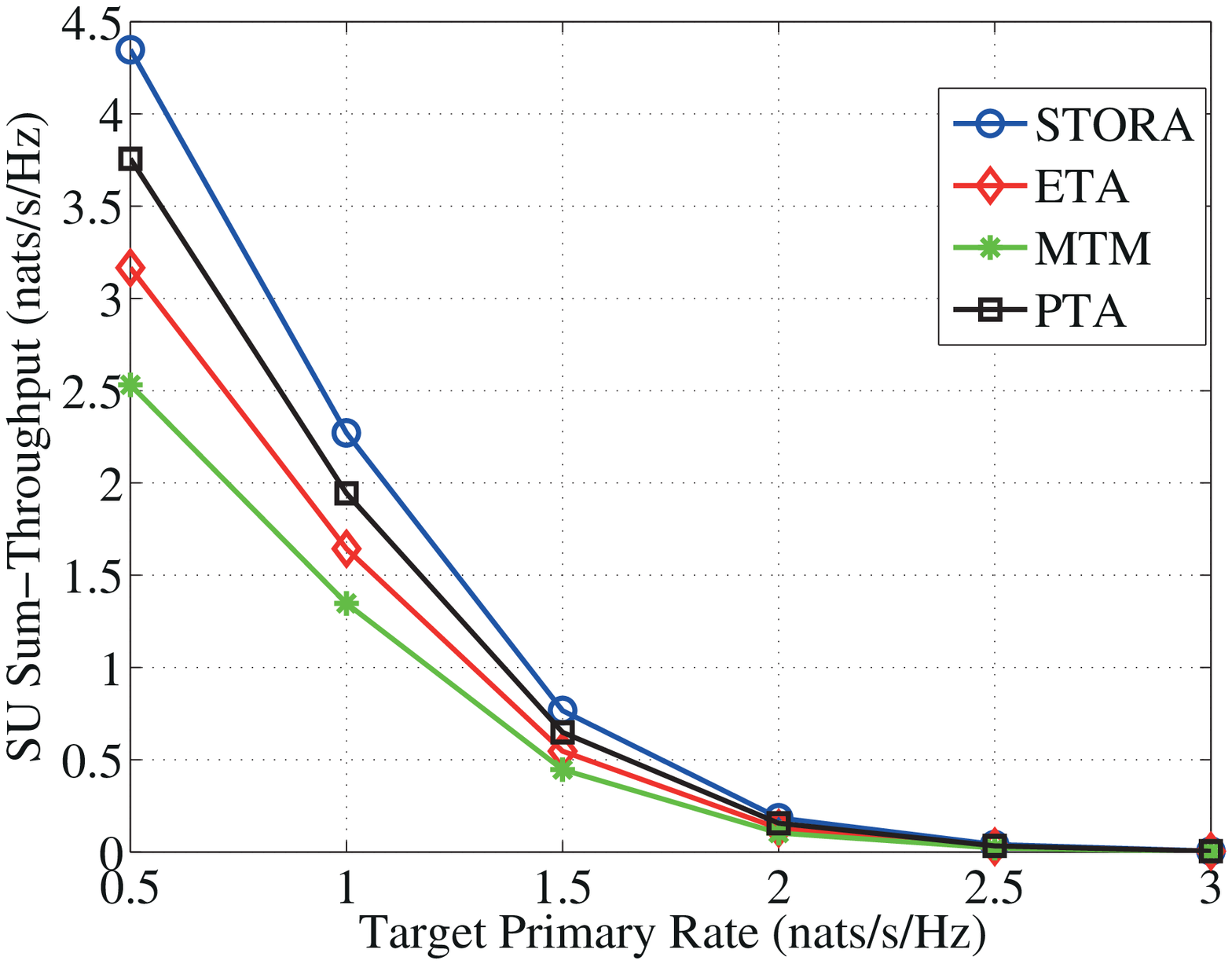}}
    \subfigure[Time spent by SUs on relaying $t_0$]{\label{fig:relaying_time}\includegraphics[scale=0.29]{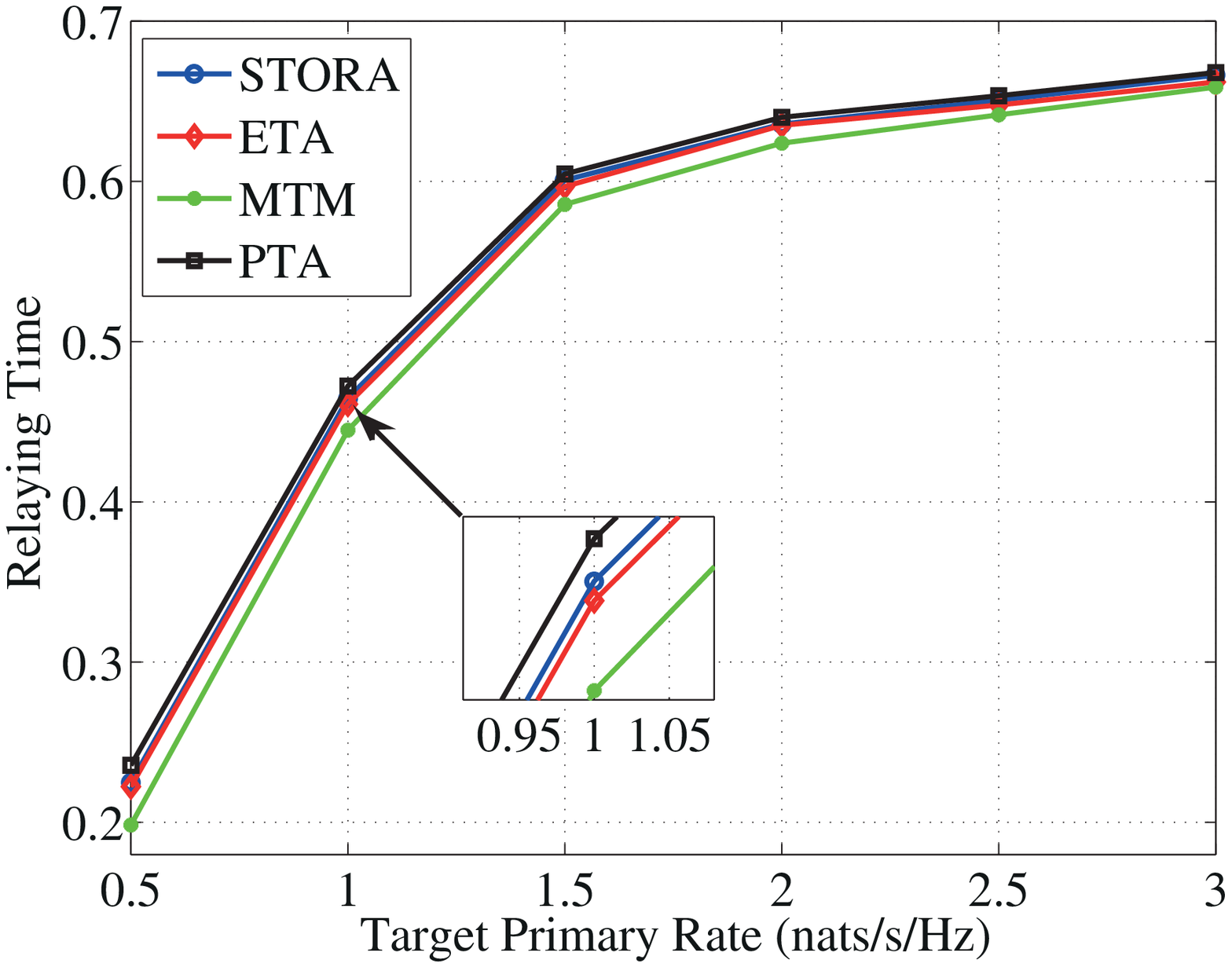}} 
        \subfigure[Rewarded time for SUs to access the spectrum $t_{\mathrm{a}}$]{\label{fig:rewarded_time}\includegraphics[scale=0.29]{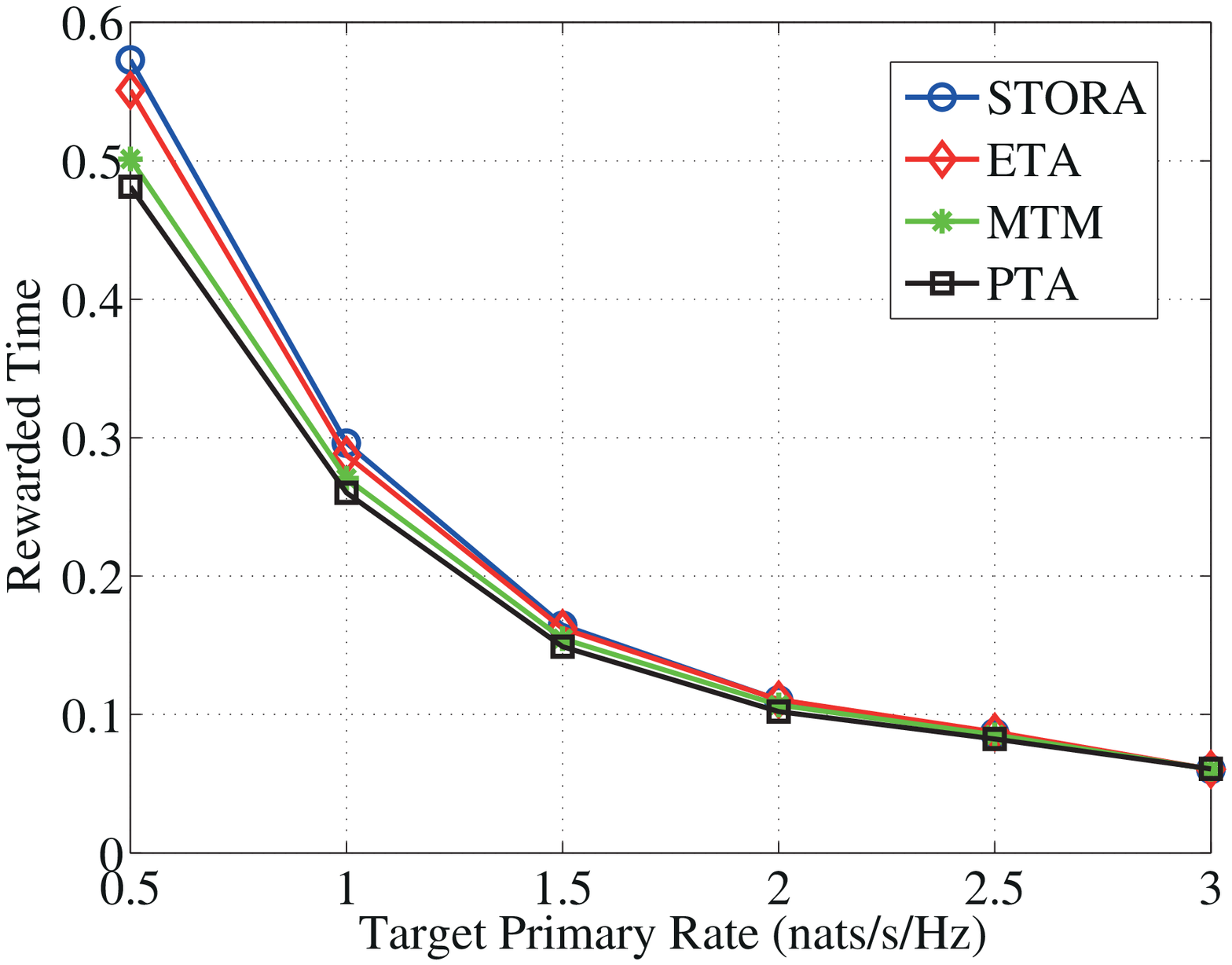}} 
        \caption{Effect of target primary rate ($\bar{R}_{\mathrm{p}}$) for different resource allocation schemes for $N = \mathrm{4}$ and $P_{\mathrm{e}} = \mathrm{20dBm}$.}
     \label{fig:rp_rs}
    \end{figure*}

\subsection{Performance Comparison of SU Selection Schemes (Fig.~\ref{fig:target})}
Fig.~\ref{fig:target} compares the performance of the STORA scheme with that of the best SU selection (BSS)~\cite{long}, the random single SU selection (RSS-S), and the random multiple SUs selection (RSS-M). The BSS allocates resources such that the SU sum-throughput is maximized with one SU relaying PU data; while in RSS-S, a random SU is chosen to relay PU data. In RSS-M, first a single SU is chosen randomly to relay PU data; if the cooperation from that chosen SU fails to meet the target primary rate, the second SU is chosen randomly for relaying, and so on.

Fig.~\ref{fig:rp_rs_basic} shows the effect of the target primary rate $\bar{R}_{\mathrm{p}}$ on SU sum-throughput.
In BSS scheme, a single SU may not be able to support $\bar{R}_{\mathrm{p}}$ due to insufficient harvested energy, failing to forge the cooperation (see Fig.~\ref{fig:rp_rs_coop}). This reduces the probability of cooperation and decreases the SU sum-throughput. The probability of cooperation is obtained as the ratio of number of channel realizations that result in successful cooperation between PU and SUs to the total number of channel realizations considered in simulations. In RSS-S scheme, the random selection of a single SU may not be able to meet $\bar{R}_{\mathrm{p}}$, which reduces the chances of cooperation. As Fig.~\ref{fig:rp_rs_coop} shows, RSS-M achieves the same probability of cooperation as that of STORA due to the selection of multiple SUs for relaying. But, since RSS-M chooses relaying SUs randomly, it fails to exploit the resources efficiently, resulting in the reduced SU sum-throughput. In Fig.~\ref{fig:rp_rs_coop}, the probability of cooperation less than one indicates that the primary rate constraint given by \eqref{eq:pqos} may not be satisfied through secondary cooperation in each fading state. 

\subsection{Effect of Primary Rate Constraint (Fig.~\ref{fig:rp_rs})}
As $\bar{R}_{\mathrm{p}}$ increases, the relaying SUs spend more time to relay PU data (see Fig.~\ref{fig:relaying_time}) which reduces SUs' spectrum access time (see Fig.~\ref{fig:rewarded_time}), deteriorating the SU sum-throughput. As shown in Fig.~\ref{fig:sum_thr}, the STORA scheme achieves the highest SU sum-throughput as it allocates resources efficiently. The ETA scheme allocates all SUs equal time for the spectrum access irrespective of their harvested energy, the residual energy left with them after relaying, and the channel gains.
As Fig.~\ref{fig:sum_thr} shows, MTM scheme's efforts to uplift SUs with unfavourable conditions to the level of SUs with favourable conditions expend the resources the least efficiently and achieve the lowest SU sum-throughput.
In PTA scheme,
the \textit{decode-relay-then-access} constraint, as Fig.~\ref{fig:relaying_time} shows, forces SUs to spend the highest time in decoding and relaying as SUs with bad links to PT consume additional time to decode PU data successfully, reducing SU sum-throughput compared to STORA scheme.

\begin{figure}
\centering
    \subfigure[]{\label{fig:users}\includegraphics[scale=0.36]{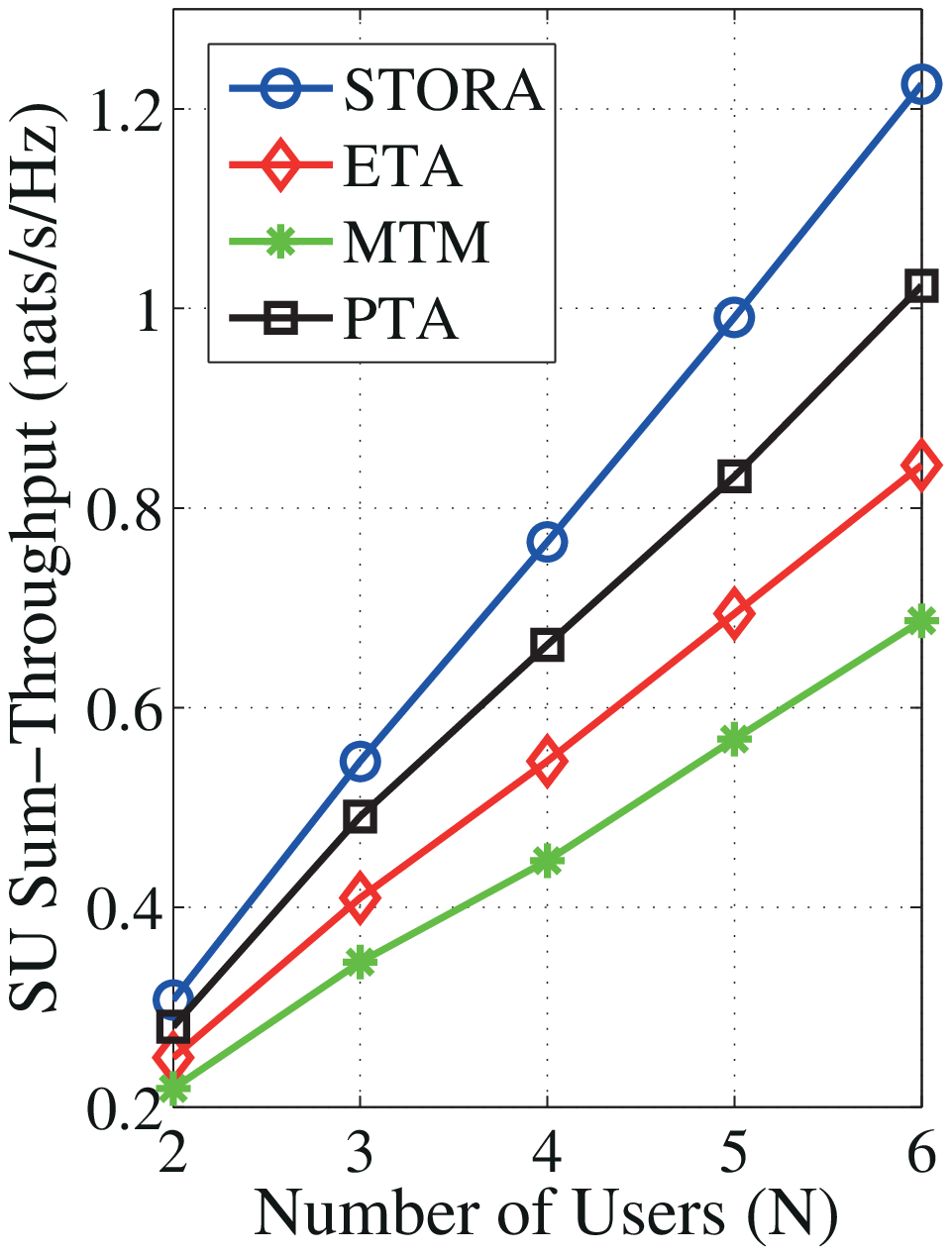}}
    \subfigure[]{\label{fig:fairness_users}\includegraphics[scale=0.36]{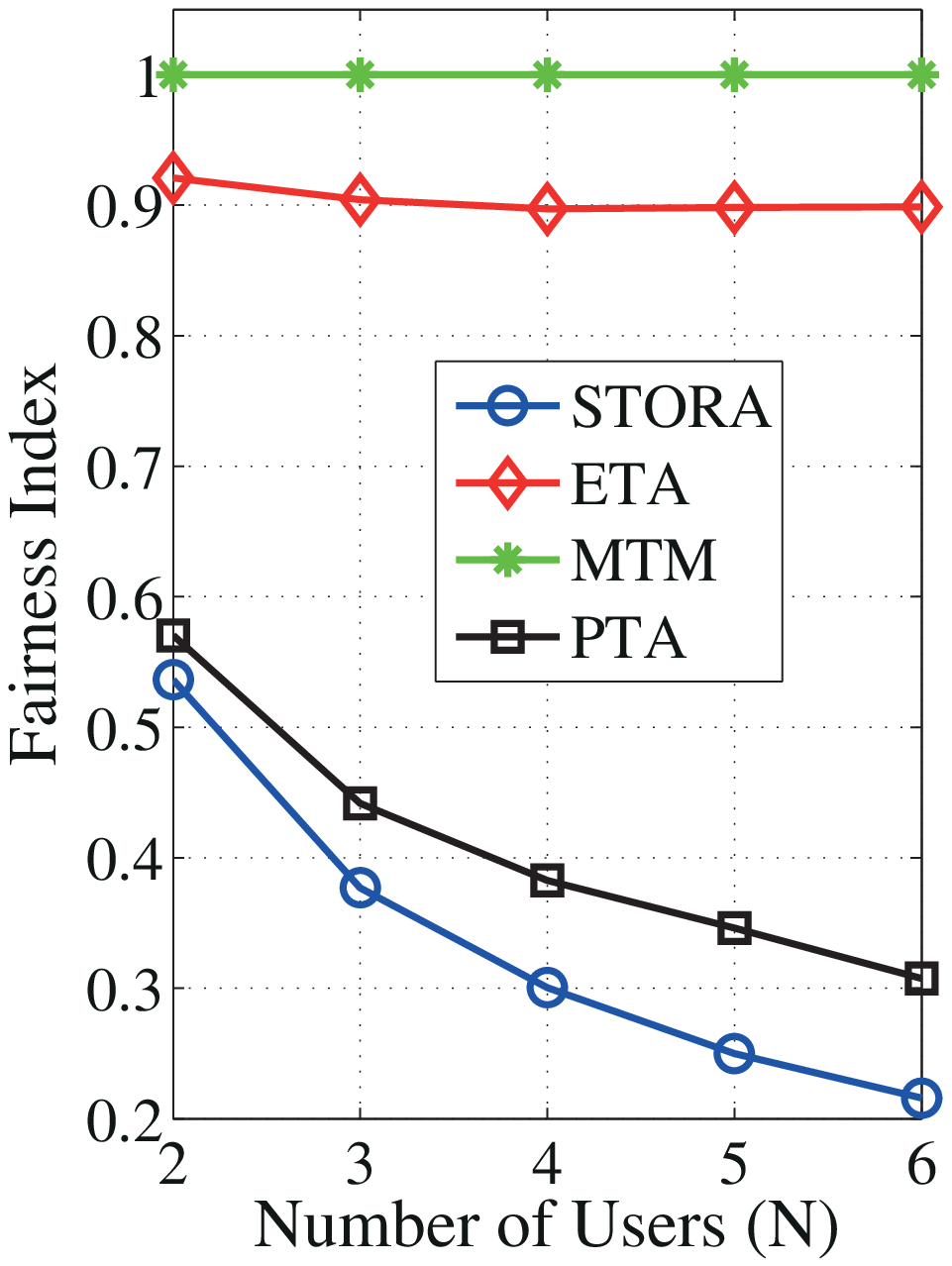}} 
        \caption{Effect of number of SUs on the (a) SU sum-throughput, (b) fairness. $\bar{R}_{\mathrm{p}} = \mathrm{1.5}$~$\mathrm{nats/s/Hz}$ and $P_{\mathrm{e}} = 20$ dBm.}
      \label{fig:users_thr_fai}
    \end{figure}
    
\subsection{Effect of Number of Secondary Users (Fig.~\ref{fig:users_thr_fai})}
The increase in number of SUs $N$ increases the user diversity as well as the possibility of having more users with higher harvested energy and good channel gains to HAP. This increases the number of SUs available to relay PU data, and time and energy available for the spectrum access. This improves SU sum-throughput.
Fig.~\ref{fig:fairness_users} shows the fairness achieved by each of the resource allocation schemes through Jain's fairness index $\mathcal{FI}$~\cite{jain} given by $\mathcal{FI} = \frac{\left(\sum_{i=1}^{N}x_i\right)^{2}}{N\sum_{i=1}^{N}x_i^2}$, where $x_i$ is the throughput of $i$th SU. Higher the fairness index, fairer is the scheme. Figs.~\ref{fig:users} and \ref{fig:fairness_users} together highlight the trade-off between SU sum-throughput and fairness. That is, STORA scheme achieves the highest sum-throughput, but is the least fairness inducing scheme; while MTM scheme is the fairest scheme as all SUs achieve the same throughput, but achieves the lowest sum-throughput as the resource allocation becomes inefficient from network's point of view. In ETA, though assigning equal time to each SU results in some fairness, it does not maximize the SU sum-throughput. In PTA, the contribution of an SU on the relaying link decides its time-share in the spectrum access. Though this provides better fairness compared to STORA, PTA scheme loses to STORA scheme in terms of the SU sum-throughput due to \textit{decode-relay-then-access} constraint.

\begin{figure}
\centering
    \subfigure[]{\label{fig:hap_thro}\includegraphics[scale=0.36]{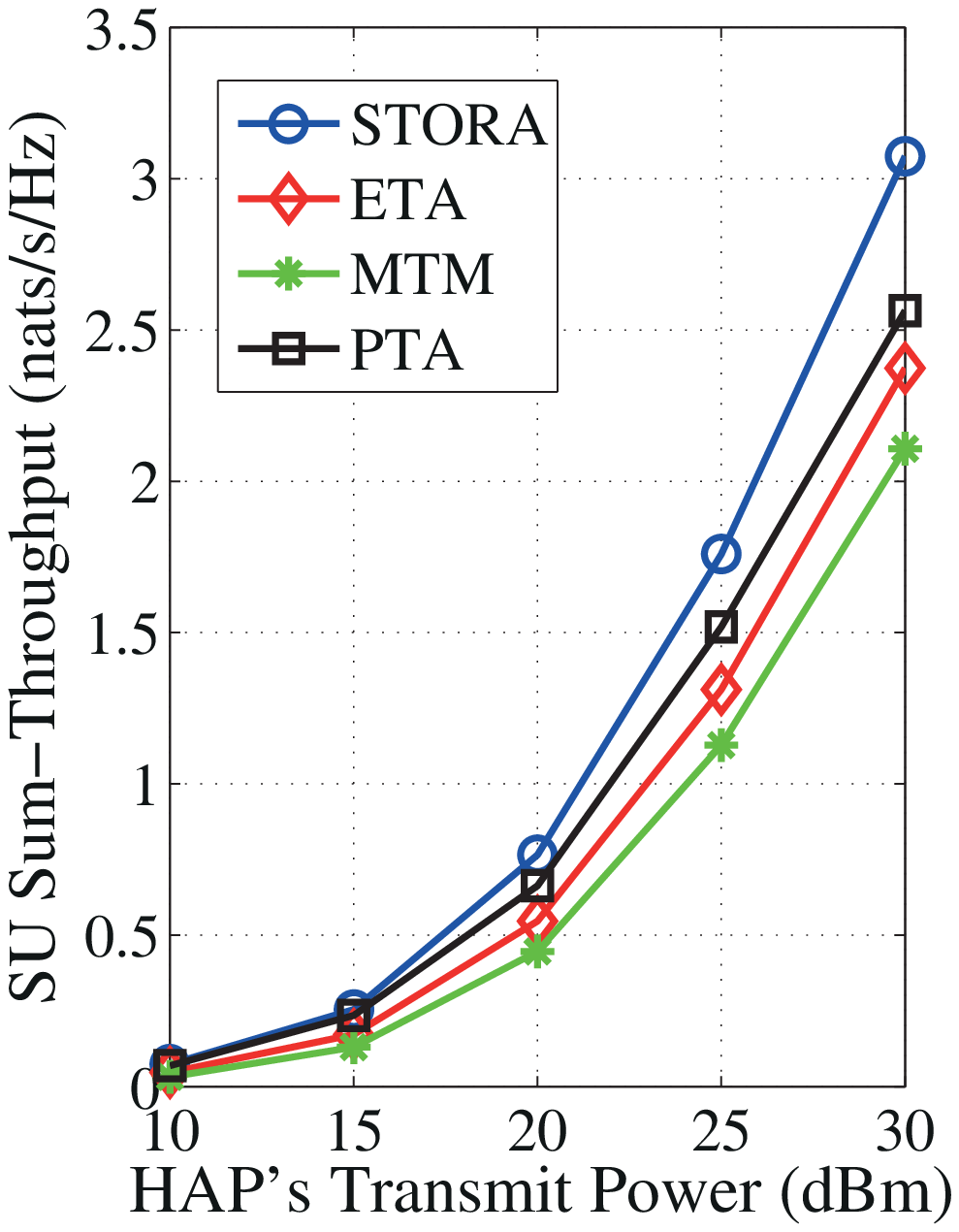}}
    \subfigure[]{\label{fig:hap_fairness}\includegraphics[scale=0.36]{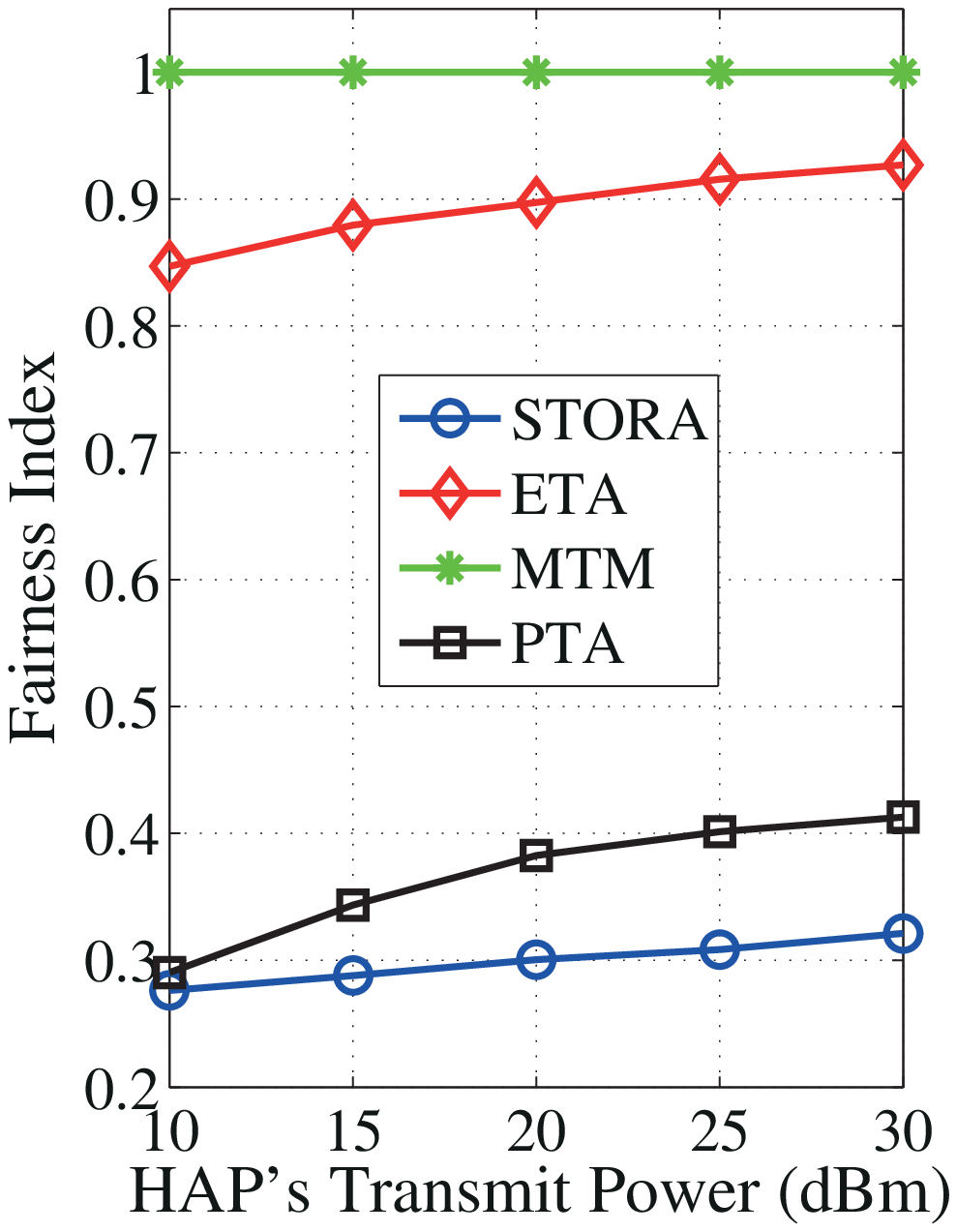}} 
        \caption{Effect of HAP's transmit power on (a) SU sum-throughput, (b) fairness. $\bar{R}_{\mathrm{p}} = \mathrm{1.5}$~$\mathrm{nats/s/Hz}$ and $N = 4$.}
      \label{fig:hap}
    \end{figure}

\subsection{Effect of HAP's Transmit Power (Fig.~\ref{fig:hap})}
Fig.~\ref{fig:hap_thro} depicts the effect of HAP's transmit power $P_{\mathrm{e}}$ on SU sum-throughput. As $P_{\mathrm{e}}$ increases, SUs harvest more energy from HAP's energy broadcast, which in turn increases the energy available for SUs' own transmissions for a given $\bar{R}_{\mathrm{p}}$. This improves SU sum-throughput.
Fig.~\ref{fig:hap_fairness} shows that the trend of fairness achieved by each of the four resource allocation schemes remains the same as that shown in Fig. \ref{fig:fairness_users}. 
\begin{figure}
\centering
\includegraphics[scale=0.37]{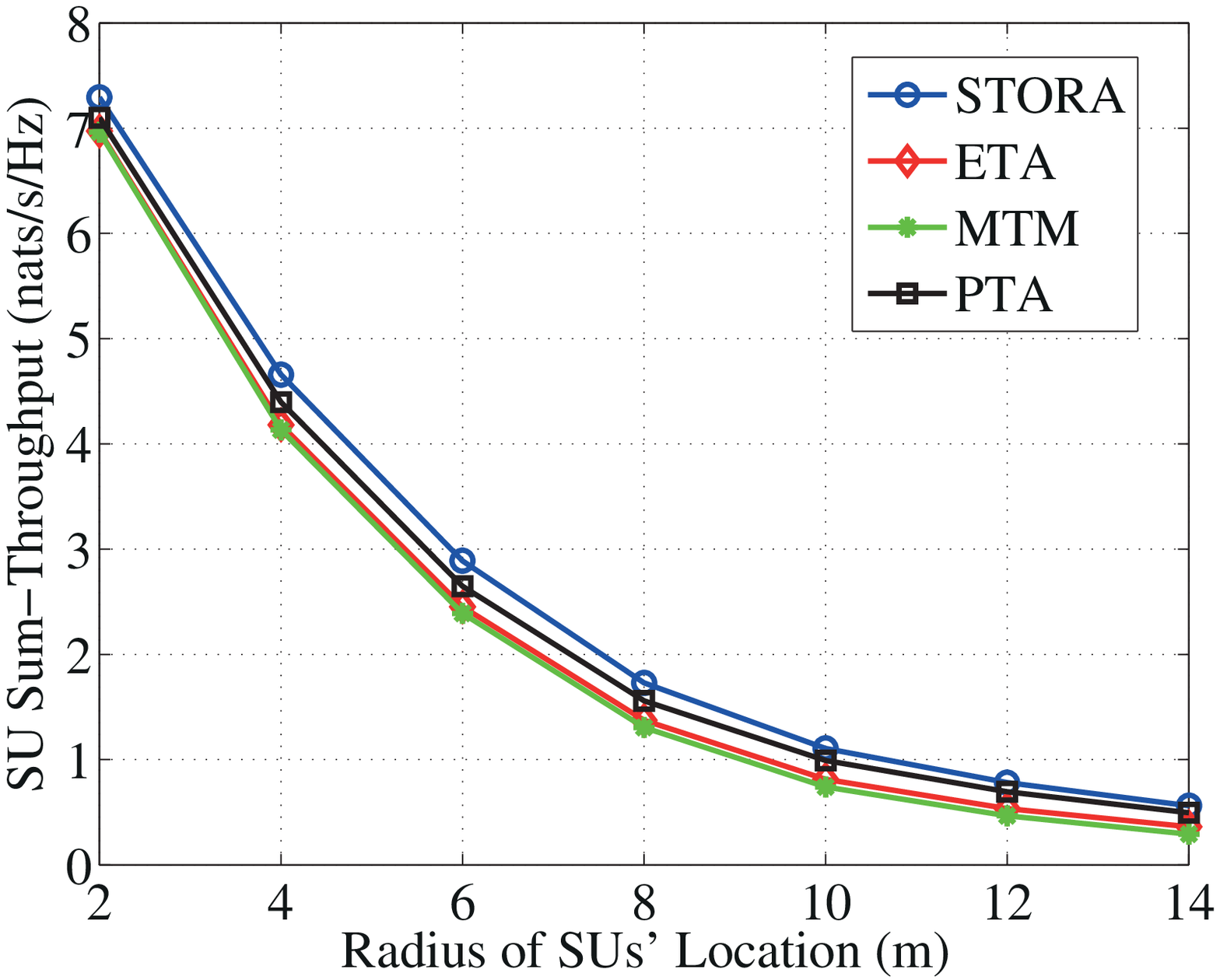}
\caption{Effect of SUs' distribution around HAP on SU sum-throughput for $N = \mathrm{4}$, $\bar{R}_{\mathrm{p}} = \mathrm{1.5}$~$\mathrm{nats/s/Hz}$, and $P_{\mathrm{e}} = 20$~$\mathrm{dBm}$.}
\label{fig:radius}
\end{figure}

\subsection{Effect of SUs' Location Radius Around HAP (Fig.~\ref{fig:radius})}
We consider that SUs are located randomly within a circle of radius $R$ meters around HAP, where the channels suffer from only path-loss attenuation. Fig.~\ref{fig:radius} shows that the increase in $R$ is unfavourable to the SU sum-throughput. The increase in $R$ increases the possibility that SUs are located farther from the HAP, which has doubly negative effect on SU sum-throughput. First, the harvested energy by SUs reduces due to path-loss, which ultimately reduces the energy available for relaying and SUs' own transmissions. As a result, the SU sum-throughput deteriorates. Second, in the rewarded time, SUs have to transmit data to HAP over longer distances, which causes the fall in the received SNR at HAP. This further declines SU sum-throughput. 

\section{Possible Variations to the Cooperation Protocol {and Future Directions}}
\label{sec:pos_var}
\subsection{Possible Variations to the Cooperation Protocol}
We now briefly discuss some possible variations to our proposed cooperation protocol given in Section~\ref{sec:coop_prot}.

1) In phase three, HAP may restart its energy broadcast and SUs that do not relay PU data harvest energy from it. But, the optimization problem, in this case, becomes quite involved. This is because, as we have shown in Section~\ref{sec:STORA}, the selection of relaying SUs, i.e., indirectly which SUs will harvest energy in phase three, depends on the decoding set, which in turn, depends on optimization variables $t_0$ and $t_{\mathrm{e}}$. Thus, the decoding set is unknown beforehand. On the other hand, the energy harvested in phase three will also affect the optimization variables $t_0$ and $t_{\mathrm{e}}$, which further impacts the decoding set and thus the relaying SU selection.

2) Instead of decode-and-forward relaying, SUs can use another popular relying protocol called \textit{amplify-and-forward} (AF) relaying. In AF relaying, SUs amplify the received data from PT and forward them to PR. Thus, SUs do not require to decode PU data in order to participate in relaying. Under this setup, SUs need not meet the decoding constraint given by \eqref{eq:dec_con} as it is no longer applicable to AF relaying.
{

\subsection{Future Directions}
We now discuss some interesting future directions that are worth investigating.

1) \textit{Imperfect CSI and synchronization}: The analysis in this paper is based on the assumption of perfect CSI and synchronization among all users. To characterize the effect of the uncertainty due to imperfect CSI and synchronization, we need robust problem formulation, which is an interesting future direction. Also, the effect of energy expenditure in CSI acquisition needs attention in energy harvesting systems.

2) \textit{Energy efficiency}: In our work, we have two energy sources for SUs: i) HAP's energy broadcast, ii) PU transmissions. HAP's broadcast improves the harvested energy significantly, in turn enhancing QoS. But, this consumes extra energy from the grid, reducing energy efficiency (EE) compared to the case of harvesting energy from PU transmissions only. This opens up an interesting future direction to study the trade-off between QoS and EE.}

\section{Concluding Remarks}
\label{sec:conclusion}
The integration of a wireless powered communication network with a cooperative cognitive radio network allows us to reap the advantages of the both together. Given the primary rate constraint, STORA scheme jointly performs the relay selection and energy and time allocation for SUs. But, STORA scheme creates unfairness among SUs as some SUs may get zero individual throughput. Other investigated resource allocation schemes like ETA, MTM, and PTA enhance fairness, but achieve lower secondary sum-throughput compared to STORA scheme due to the throughput-fairness trade-off. By allowing SUs to obtain equal throughput under all conditions, MTM scheme becomes the fairest scheme among the four schemes, but sacrifices the most on sum-throughput. Though the \textit{decode-relay-then-access} constraint puts PTA scheme behind STORA scheme in terms of the SU sum-throughput, PTA still seems to be a good compromise between the sum-throughput and fairness. The STORA scheme provides a mechanism to trade-off fairness with overall throughput. In fact, STORA scheme generalizes the existing algorithms, because the optimal algorithm proposed for STORA can be modified to accommodate different fairness constraints considered in the paper.

\appendices
\section{Proof of Proposition~\ref{lemma1}}
\label{app:opt_soln_STORA}
The optimal primal and dual variables must satisfy the KKT stationarity conditions.
Taking derivatives  of $\mathcal{L}_{1}$ with respect to $(E_{i\mathrm{h}}$, $E_{i\mathrm{p}}) $ and $\mathcal{L}_{2}$ with respect to $(t_{0}, t_{i})$, we obtain
\begin{subequations}
\begin{align}
& \frac{\gamma_{i\mathrm{h}}}{1+x^{*}_{i}} - \mu^{*}_{i} =  0, \hspace{8mm} \forall  i, \label{eq:kkt1} \\
&   \frac{\lambda \gamma_{i\mathrm{p}}}{1+\gamma_{\mathrm{p}} + y^{*}} - \mu^{*}_{i}  =  0, \hspace{1mm} \forall  i, \label{eq:kkt2}\\
 & \lambda^{*}\ln\left(1+ \gamma_{\mathrm{p}} + y^{*} \right) - \frac{\lambda^{*} y^{*}}{1 + \gamma_{\mathrm{p}} +y^{*}} + \kappa^* Q_2 - 2\nu^* =  0,  \label{eq:kkt4} \\
 & \ln\left( 1+x^{*}_{i} \right) - \frac{x^{*}_{i}}{1+x^{*}_{i}} - \nu^* =  0, \hspace{8mm} \forall i, \label{eq:kkt3}
\end{align}
\end{subequations}
where $ x^{*}_{i} = \frac{\gamma_{i\mathrm{h}}E^{*}_{i\mathrm{h}}}{t^{*}_{i}}$, $y^{*} = \frac{\sum_{\mathrm{SU}_i\in \mathcal{S_D} }\gamma_{i\mathrm{p}}E^{*}_{i\mathrm{p}} }{t^{*}_{0}}$. The corresponding KKT complementary slackness conditions are
\begin{subequations}
\begin{align}
& \lambda^{*}\left(\bar{R}_{\mathrm{p}} - t^{*}_{0}\ln\left(1+ \gamma_{\mathrm{p}} + \frac{\sum_{\mathrm{SU}_i \in \mathcal{S_D}} \gamma_{i\mathrm{p}}E^{*}_{i\mathrm{p}} }{t^{*}_{0}}\right)\right) = 0, \label{eq:kkt6}\\
& \kappa^* \left( \bar{R}_{\mathrm{p}} - Q_1 t^*_{\mathrm{e}} - Q_2 t^*_{0}  \right)  = 0,\label{eq:kkt7} \\
& \mu^{*}_{i}(E^{*}_{i\mathrm{p}} + E^{*}_{i\mathrm{h}} - (P_{\mathrm{e}} + \theta_i )t^{*}_{\mathrm{e}}) = 0, \hspace{1mm} \forall  i, \label{eq:kkt8}\\
& \nu^{*} ( t^{*}_{\mathrm{e}} +2t^{*}_{0}+\sum_{i=1}^{N}t^{*}_{i}-1)  = 0. \label{eq:kkt9}
\end{align}
\end{subequations}
We learn from Proposition \ref{prop1} that the primary rate constraint \eqref{eq:qos}, energy neutrality constraint \eqref{eq:conv1}, and total time constraint \eqref{eq:conv2} must be satisfied with equality. Therefore, we infer from \eqref{eq:kkt6}-\eqref{eq:kkt9} that $(\lambda, \boldsymbol{\mu}, \nu) > 0$ and $\kappa \geq 0$. On rearranging the  equations \eqref{eq:kkt1}-\eqref{eq:kkt3} and using \eqref{eq:kkt9}, we write the optimal solution for the STORA problem as given in Proposition \ref{lemma1}.

\section{Proof of Proposition~\ref{eta_teq}}
\label{app:prop_eta}
The Lagrangian $\mathcal{L}_{3}$ of the time allocation subproblem of ETA is given by
\begin{align}
 \mathcal{L}_{3} &= \sum_{i=1}^N R_{i}(E_{i\mathrm{h}},t_{\mathrm{eq}} ) - \lambda \left(\bar{R}_{\mathrm{p}} - R_{\mathrm{p,c}}(\boldsymbol{E_{\mathrm{sp}}} , t_{\mathrm{e}}, t_{0})\right)  \nonumber \\
 & - \kappa \left( \bar{R}_{\mathrm{p}} - Q_1 t_{\mathrm{e}} - Q_2 t_{0} \right)-  \nu (2t_{0}+t_{\mathrm{e}}+\sum_{i=1}^{N}t_{i}-1).
 \label{eq:eta_lag}
\end{align}
The optimal $t^{*}_{\mathrm{eq}}$ that maximizes the Lagrangian $\mathcal{L}_3$ can be obtained from
\begin{equation}
\frac{\partial \mathcal{L}_{3}}{\partial t_{\mathrm{eq}}} = \sum_{i=1}^{N} \ln\left(1+ \frac{\gamma_{i\mathrm{h}}E^{*}_{i\mathrm{h}} }{t^{*}_{\mathrm{eq}}} \right) - \frac{\frac{\gamma_{i\mathrm{h}}E^{*}_{i\mathrm{h}} }{t^{*}_{\mathrm{eq}}}  }{1+\frac{\gamma_{i\mathrm{h}}E^{*}_{i\mathrm{h}} }{t^{*}_{\mathrm{eq}}}} - N\nu^{*} = 0,
\end{equation}
which forms Proposition \ref{eta_teq}.
\section{Proof of Proposition~\ref{opt_soln_mmf}}
\label{app_prop_mmf}
The energy and time allocation on the access link that maximize \eqref{eq:dual1_mmf} and \eqref{eq:dual2_mmf} are found using KKT stationarity conditions. On differentiating \eqref{eq:dual1_mmf} and \eqref{eq:dual2_mmf} with respect to $E_{i\mathrm{h}}$ and $t_{i}$, we obtain 
\begin{align}
& \rho_{i}\frac{\gamma_{i\mathrm{h}}}{1+\frac{\gamma_{i\mathrm{h}}E^{*}_{i\mathrm{h}}}{t^{*}_{i}}} - \mu^{*}_{i} =  0, \hspace{8mm} \forall  i \label{eq:mmf1} \\
 & \ln\left( 1+\frac{\gamma_{i\mathrm{h}}E^{*}_{i\mathrm{h}}}{t^{*}_{i}} \right) - \frac{\frac{\gamma_{i\mathrm{h}}E^{*}_{i\mathrm{h}}}{t^{*}_{i}}}{1+\frac{\gamma_{i\mathrm{h}}E^{*}_{i\mathrm{h}}}{t^{*}_{i}}} - \frac{\nu^*}{\rho^*_i} =  0, \hspace{5mm} \forall  i. \label{eq:mmf2}
\end{align}
The KKT stationarity conditions with respect to $E_{i\mathrm{p}}$ and $t_{0}$ are same as that of STORA as given in \eqref{eq:kkt2} and \eqref{eq:kkt4}. The primal variable $t_{\mathrm{e}}$ is obtained using KKT complementary slackness condition \eqref{eq:kkt9} and $R_{\min}$ is found using the iterative algorithm discussed in section \ref{sec:mmf}. On rearranging~\eqref{eq:mmf1} and \eqref{eq:mmf2}, we obtain the results in Proposition \ref{opt_soln_mmf}.

\section{Proof of Proposition~\ref{opt_soln_prop}}
\label{app_prop_pf}
The Lagrangian $\mathcal{L}_{5}$ of the energy allocation subproblem for the PTA scheme is
\begin{align*}
\mathcal{L}_{5} & = \sum_{i=1}^N R_{i}(E_{i\mathrm{h}}, \zeta\gamma_{i\mathrm{p}}E_{i\mathrm{p}}) - \lambda \left(\bar{R}_{\mathrm{p}} - R_{\mathrm{p,c}}(\boldsymbol{E_{\mathrm{sp}}} , t_{\mathrm{e}}, t_{0})\right) \nonumber \\
& - \sum_{i = 1}^{N} \mu_{i}(E_{i\mathrm{p}} + E_{i\mathrm{h}} - (P_{\mathrm{e}} + \theta_i)t_{\mathrm{e}}).
\end{align*}
The first order stationarity condition of $\mathcal{L}_{5}$ with respect to $E_{i\mathrm{h}}$ and $E_{i\mathrm{p}}$ are  given by
\begin{align}
&\frac{\partial \mathcal{L}_{5}}{\partial E_{i\mathrm{h}}} =  \frac{\gamma_{i\mathrm{h}}}{1+\frac{\gamma_{i\mathrm{h}}E^{*}_{i\mathrm{h}}}{\zeta\gamma_{i\mathrm{p}}E^{*}_{i\mathrm{p}}}} - \mu^{*}_{i} =  0, \nonumber \\
&\frac{\partial \mathcal{L}_{5}}{\partial E_{i\mathrm{p}}} =  \zeta^{*} \gamma_{i\mathrm{p}} \ln\left(1+{z^{*}_{i}}\right) - \frac{z^{*}_{i}}{1+z^{*}_{i}}, \nonumber \\
&- \frac{\lambda^{*} \gamma_{i\mathrm{p}}}{1+\gamma_{\mathrm{p}}+\frac{\sum_{i=1}^{N}\gamma_{i\mathrm{p}}E^{*}_{i\mathrm{p}}}{t^{*}_{0}}} - {\mu^{*}_{i}}- {\zeta^{*} \nu^{*}} = 0,
\end{align}
with $z^{*}_{i} = \frac{\gamma_{i\mathrm{h}}E^{*}_{i\mathrm{h}}}{\zeta^{*}\gamma_{i\mathrm{p}}E^{*}_{i\mathrm{p}}}$. The solution of $t_{0}$ and $t_{\mathrm{e}}$ is same as that of STORA scheme. The solution of the above equations and the feasibility condition $t^{*}_{i} = \gamma_{i\mathrm{p}} \zeta^{*} E^{*}_{i\mathrm{p}}$ $\forall i$ form Proposition \ref{opt_soln_prop}.

\bibliographystyle{ieeetr}
\bibliography{paper}

\begin{thebibliography}{10}

\bibitem{paradiso}
J.~A. Paradiso and T.~Starner, ``Energy scavenging for mobile and wireless
  electronics,'' {\em IEEE Pervasive Comput.}, vol.~4, no.~1, pp.~18--27, Jan.
  2005.

\bibitem{ju}
H.~Ju and R.~Zhang, ``Throughput maximization in wireless powered communication
  networks,'' {\em IEEE Trans. Wireless Commun.}, vol.~13, no.~1, pp.~418--428,
  Jan. 2014.

\bibitem{suzhi}
S.~Bi, C.~K. Ho, and R.~Zhang, ``Wireless powered communication: opportunities
  and challenges,'' {\em IEEE Commun. Mag.}, vol.~53, no.~4, pp.~117--125, Apr.
  2015.

\bibitem{tabassum}
H.~Tabassum, E.~Hossain, A.~Ogundipe, and D.~I. Kim, ``Wireless-powered
  cellular networks: key challenges and solution techniques,'' {\em IEEE
  Commun. Mag.}, vol.~53, no.~6, pp.~63--71, June 2015.

\bibitem{zhou}
X.~Zhou, C.~K. Ho, and R.~Zhang, ``Wireless power meets energy harvesting: A
  joint energy allocation approach,'' in {\em Proc. 2014 IEEE GLOBALSIP},
  pp.~198--202.

\bibitem{kaibin}
K.~Huang and V.~K.~N. Lau, ``Enabling wireless power transfer in cellular
  networks: Architecture, modeling and deployment,'' {\em IEEE Trans. Wireless
  Commun.}, vol.~13, no.~2, pp.~902--912, Feb. 2014.

\bibitem{zhang_deployment}
S.~Bi and R.~Zhang, ``Placement optimization of energy and information access
  points in wireless powered communication networks,'' {\em IEEE Trans.
  Wireless Commun.}, vol.~15, no.~3, pp.~2351--2364, Mar. 2016.

\bibitem{powercast}
Product Datasheet, P2110-915{MHz} {RF} {Powerharvester Receiver}, Powercast
  Corporation. [Available online]
  http://www.powercastco.com/PDF/P2110-datasheet.pdf.

\bibitem{lowpower}
Product Datasheet, 2.4 GHz Ultra-Low Power Radio, Interuniversity
  Microelectronics Centre (IMEC). [Available online]
  \url{http://www2.imec.be/be_en/research/wireless-communication/ultralow-power-wireless-communic.html}.

\bibitem{zhao_massive}
L.~Zhao, X.~Wang, and K.~Zheng, ``Downlink hybrid information and energy
  transfer with massive {MIMO},'' {\em IEEE Trans. Wireless Commun.}, vol.~15,
  no.~2, pp.~1309--1322, Feb. 2016.

\bibitem{andrew_femto}
J.~G. Andrews, H.~Claussen, M.~Dohler, S.~Rangan, and M.~C. Reed, ``Femtocells:
  past, present, and future,'' {\em IEEE J. Sel. Areas Commun.}, vol.~30,
  no.~3, pp.~497--508, Apr. 2011.

\bibitem{naderi}
P.~Nintanavongsa, M.~Y. Naderi, and K.~R. Chowdhury, ``Medium access control
  protocol design for sensors powered by wireless energy transfer,'' in {\em
  Proc. 2013 IEEE INFOCOM}, pp.~150--154.

\bibitem{ju4}
H.~Ju and R.~Zhang, ``Optimal resource allocation in full-duplex
  wireless-powered communication network,'' {\em IEEE Trans. Commun.}, vol.~62,
  no.~10, pp.~3528--3540, Oct. 2014.

\bibitem{seung}
S.-W. Ko, S.~M. Yu, and S.-L. Kim, ``The capacity of energy-constrained mobile
  networks with wireless power transfer,'' {\em IEEE Commun. Lett.}, vol.~17,
  no.~3, pp.~529--532, Mar. 2013.

\bibitem{kang}
X.~Kang, C.~K. Ho, and S.~Sun, ``Optimal time allocation for
  dynamic-{TDMA}-based wireless powered communication networks,'' in {\em Proc.
  2014 IEEE GLOBECOM}, pp.~3157--3161.

\bibitem{usercoop}
H.~Ju and R.~Zhang, ``User cooperation in wireless powered communication
  networks,'' in {\em Proc. 2014 IEEE GLOBECOM}, pp.~1430--1435.

\bibitem{chen3}
H.~Chen, Y.~Li, J.~Luiz~Rebelatto, B.~F. Uch{\^o}a-Filho, and B.~Vucetic,
  ``Harvest-then-cooperate: Wireless-powered cooperative communications,'' {\em
  IEEE Trans. Signal Process.}, vol.~63, no.~7, pp.~1700--1711, Apr. 2015.

\bibitem{zhong}
C.~Zhong, G.~Zheng, Z.~Zhang, and G.~K. Karagiannidis, ``Optimum wirelessly
  powered relaying,'' {\em IEEE Signal Process. Lett.}, vol.~22, no.~10,
  pp.~1728--1732, Oct. 2015.

\bibitem{gridlock}
A.~Goldsmith, S.~A. Jafar, I.~Mari{\'c}, and S.~Srinivasa, ``Breaking spectrum
  gridlock with cognitive radios: An information theoretic perspective,'' {\em
  Proc. IEEE}, vol.~97, no.~5, pp.~894--914, May 2009.

\bibitem{leasing}
O.~Simeone, I.~Stanojev, S.~Savazzi, Y.~Bar-Ness, U.~Spagnolini, and
  R.~Pickholtz, ``Spectrum leasing to cooperating secondary {Ad} {Hoc}
  networks,'' {\em IEEE J. Sel. Areas Commun.}, vol.~26, no.~1, pp.~203--213,
  Jan. 2008.

\bibitem{krik2}
I.~Krikidis, J.~N. Laneman, J.~S. Thompson, and S.~McLaughlin, ``Protocol
  design and throughput analysis for multi-user cognitive cooperative
  systems,'' {\em IEEE Trans. Wireless Commun.}, vol.~8, no.~9, pp.~4740--4751,
  Sep. 2009.

\bibitem{pandhari:2010}
Y.~Han, S.~H. Ting, and A.~Pandharipande, ``Cooperative spectrum sharing
  protocol with secondary user selection,'' {\em IEEE Trans. Wireless Commun.},
  vol.~9, no.~9, pp.~2914--2923, Sep. 2010.

\bibitem{hossain}
H.~Wang, L.~Gao, X.~Gan, X.~Wang, and E.~Hossain, ``Cooperative spectrum
  sharing in cognitive radio networks: A game-theoretic approach,'' in {\em
  Proc. 2010 IEEE ICC}, pp.~1--5.

\bibitem{active}
W.~Su, J.~Matyjas, and S.~Batalama, ``Active cooperation between primary users
  and cognitive radio users in heterogeneous {Ad-Hoc} networks,'' {\em IEEE
  Trans. Sig. Process.}, vol.~60, no.~4, pp.~1796--1805, Apr. 2012.

\bibitem{pandhari:2009}
Y.~Han, A.~Pandharipande, and S.~H. Ting, ``Cooperative decode-and-forward
  relaying for secondary spectrum access,'' {\em IEEE Trans. Wireless Commun.},
  vol.~8, no.~10, pp.~4945--4950, Oct. 2009.

\bibitem{manna}
R.~Manna, R.~H. Louie, Y.~Li, and B.~Vucetic, ``Cooperative spectrum sharing in
  cognitive radio networks with multiple antennas,'' {\em IEEE Trans. Signal
  Process.}, vol.~59, no.~11, pp.~5509--5522, Nov. 2011.

\bibitem{cao1}
B.~Cao, L.~Cai, H.~Liang, J.~W. Mark, Q.~Zhang, H.~V. Poor, and W.~Zhuang,
  ``Cooperative cognitive radio networking using quadrature signaling,'' in
  {\em Proc. 2012 IEEE INFOCOM}, pp.~3096--3100.

\bibitem{hao1}
X.~Hao, M.~H. Cheung, V.~Wong, and V.~Leung, ``A stackelberg game for
  cooperative transmission and random access in cognitive radio networks,'' in
  {\em Proc. 2011 IEEE PIMRC}, pp.~411--416.

\bibitem{long}
Y.~Long, H.~Li, H.~Yue, M.~Pan, and Y.~Fang, ``{SUM}: Spectrum utilization
  maximization in energy-constrained cooperative cognitive radio networks,''
  {\em IEEE J. Sel. Areas Commun.}, vol.~32, no.~11, pp.~2105--2116, Nov. 2014.

\bibitem{infoenergy}
G.~Zheng, Z.~Ho, E.~Jorswieck, and B.~Ottersten, ``Information and energy
  cooperation in cognitive radio networks,'' {\em IEEE Trans. Signal Process.},
  vol.~62, no.~9, pp.~2290--2303, May 2014.

\bibitem{wang1}
Z.~Wang, Z.~Chen, L.~Luo, Z.~Hu, B.~Xia, and H.~Liu, ``Outage analysis of
  cognitive relay networks with energy harvesting and information transfer,''
  in {\em Proc. 2014 IEEE ICC}, pp.~4348--4353.

\bibitem{quang}
P.~M. Quang, T.~T. Duy, and V.~N.~Q. Bao, ``Energy harvesting-based spectrum
  access model in overlay cognitive radio,'' in {\em Proc. 2015 ATC}.

\bibitem{zhai}
C.~Zhai, J.~Liu, and L.~Zheng, ``Relay-based spectrum sharing with secondary
  users powered by wireless energy harvesting,'' {\em IEEE Trans. Commun.},
  vol.~64, no.~5, pp.~1875--1887, May 2016.

\bibitem{sixing}
S.~Yin, E.~Zhang, Z.~Qu, L.~Yin, and S.~Li, ``Optimal cooperation strategy in
  cognitive radio systems with energy harvesting,'' {\em IEEE Trans. Wireless
  Commun.}, vol.~13, no.~9, pp.~4693--4707, Sep. 2014.

\bibitem{jeya1}
J.~{Pradha J}, S.~S. Kalamkar, and A.~Banerjee, ``On information and energy
  cooperation in energy harvesting cognitive radio,'' in {\em Proc. 2015 IEEE
  PIMRC}, pp.~943--948.

\bibitem{zhang:CRPN}
S.~Lee and R.~Zhang, ``Cognitive wireless powered network: Spectrum sharing
  models and throughput maximization,'' {\em IEEE Trans. Cogn. Commun. Netw.},
  vol.~1, no.~3, pp.~335--346, Sept. 2015.

\bibitem{jovicic}
A.~Jovi{\v c}i{\'c} and P.~Viswanath, ``Cognitive radio: An
  information-theoretic perspective,'' {\em IEEE Trans. Inf. Theory}, vol.~55,
  no.~9, pp.~3945--3958, Sep. 2009.

\bibitem{zhang_deter}
Y.~Zeng, H.~Chen, and R.~Zhang, ``Bidirectional wireless information and power
  transfer with a helping relay,'' {\em IEEE Commun. Lett.}, vol.~20, no.~5,
  pp.~862--865, May 2016.

\bibitem{chen_deter}
F.~Azmat, Y.~Chen, and N.~Stocks, ``Predictive modelling of {RF} energy for
  wireless powered communications,'' {\em IEEE Commun. Lett.}, vol.~20, no.~1,
  pp.~173--176, Jan. 2016.

\bibitem{goldsmith_book}
A.~Goldsmith, {\em Wireless Communications}.
\newblock Cambridge University Press, Cambridge, UK, 2005.

\bibitem{laneman}
J.~Laneman, D.~Tse, and G.~W. Wornell, ``Cooperative diversity in wireless
  networks: Efficient protocols and outage behavior,'' {\em IEEE Trans. Inf.
  Theory}, vol.~50, no.~12, pp.~3062--3080, Dec. 2004.

\bibitem{bletsas}
A.~Bletsas, H.~Shin, and M.~Z. Win, ``Cooperative communications with
  outage-optimal opportunistic relaying,'' {\em IEEE Trans. Wireless Commun.},
  vol.~6, no.~9, pp.~3450--3460, Sep. 2007.

\bibitem{bertesekas}
D.~P. Bertsekas, {\em Nonlinear Programming}.
\newblock Athena Scientific, 2nd~ed., 1999.

\bibitem{yu}
W.~Yu, ``Sum-capacity computation for the {Gaussian} vector broadcast channel
  via dual decomposition,'' {\em IEEE Trans. Inf. Theory}, vol.~52, no.~2,
  pp.~754--759, Feb. 2006.

\bibitem{corless}
R.~M. Corless, G.~H. Gonnet, D.~E.~G. Hare, D.~J. Jeffrey, and D.~E. Knuth,
  ``On the {Lambert} {W} function,'' {\em Advances in Computational
  Mathematics}, vol.~5, no.~4, pp.~329--359, 1996.

\bibitem{jain}
R.~K. Jain, D.-M.~W. Chiu, and W.~R. Hawe, ``A quantitative measure of fairness
  and discrimination for resource allocation in shared systems,'' Tech. Rep.
  DEC Res. Rep. TR-301, Eastern Research Lab, Digital Equipment Corporation,
  1984.

\end{thebibliography}

\end{document}